\long\def\ca#1\cb{} 
\newcommand{\ketbra}[2]{| \hspace{1pt} #1 \rangle \langle #2 \hspace{1pt} |}
\newcommand{\norm}[2][]{#1| \! #1| #2 #1| \! #1|}
\newcommand{\ket}[1]{|#1\rangle}               
\newcommand{\bra}[1]{\langle #1|}              
\newcommand{\dya}[1]{\ket{#1}\!\bra{#1}}
\newcommand{\ipa}[2]{\langle #1,#2\rangle}      
\newcommand{\ip}[2]{\langle #1|#2\rangle}      
\newcommand{\rank}{\text{rank}}
\newcommand{\Tr}{{\rm Tr}}
\newcommand{\Var}{{\rm Var}}
\renewcommand{\geq}{\geqslant}
\renewcommand{\leq}{\leqslant}
\newcommand*{\id}{\openone}
\newtheorem{lemma}{Lemma}
\newtheorem{corollary}{Corollary}
\newtheorem{proposition}{Proposition}
\definecolor{cool_green}{rgb}{0.0, 0.5, 0.0}
\newcommand{\tr}{\text{Tr}}
\newcommand{\mc}{\mathcal}
\newcommand{\op}[2]{|#1\rangle\langle #2|}
\theoremstyle{definition}
\newcommand{\CE}[1]{\mathcal{C}(\ket{#1})}
\newcommand{\Cmax}[1]{\mathcal{C}^*(#1)}
\newcommand{\mbb}{\mathbb}
\newtheorem*{proposition*}{Proposition}
\newtheorem*{corollary*}{Corollary}
\begin{document}
    \title{A Hierarchy of Multipartite Correlations Based on Concentratable Entanglement}
    
    \author{Louis Schatzki}
    \email{louisms2@illinois.edu}
    \affiliation{Department of Electrical and Computer Engineering, Coordinated Science Laboratory, University of Illinois at Urbana-Champaign, Urbana, IL 61801, USA}
    \affiliation{Illinois Quantum Information Science and Technology (IQUIST) Center, University of Illinois Urbana-Champaign, Urbana, IL 61801, USA}
     \affiliation{Information Sciences, Los Alamos National Laboratory, Los Alamos, NM 87545, USA}
    
    \author{Guangkuo Liu}
    \affiliation{JILA, University of Colorado/NIST, Boulder, CO, 80309, USA}
    \affiliation{Department of Physics, University of Colorado, Boulder CO 80309, USA}
    
    \author{M. Cerezo}
    \affiliation{Information Sciences, Los Alamos National Laboratory, Los Alamos, NM 87545, USA}
    \affiliation{Quantum Science Center, Oak Ridge, TN 37931, USA}
    
    \author{Eric Chitambar}
    \email{echitamb@illinois.edu}
    \affiliation{Department of Electrical and Computer Engineering, Coordinated Science Laboratory, University of Illinois at Urbana-Champaign, Urbana, IL 61801, USA}
    \affiliation{Illinois Quantum Information Science and Technology (IQUIST) Center, University of Illinois Urbana-Champaign, Urbana, IL 61801, USA}
    \begin{abstract}
        Multipartite entanglement is one of the hallmarks of quantum mechanics and is central to quantum information processing. In this work we show that Concentratable Entanglement (CE), an operationally motivated entanglement measure, induces a hierarchy upon pure states from which different entanglement structures can be experimentally certified. In particular, we find that nearly all genuine multipartite entangled states can be verified through the CE. Interestingly, GHZ states prove to be far from maximally entangled according to this measure. Instead we find the exact maximal value and corresponding states for up to 18 qubits and show that these correspond to extremal quantum error correcting codes. The latter allows us to unravel a deep connection between CE and coding theory. Finally, our results also offer an alternative proof, on up to 31 qubits, that absolutely maximally entangled states do not exist.
    \end{abstract}
    \maketitle
    

    Entanglement is one of the defining properties of quantum mechanics \cite{Einstein1935}, and it has been shown to serve as a fundamental resource for quantum information processing, from cryptography to computation and quantum sensing~\cite{Ekert1998, Wootters1998, Chaves2010, Jozsa2003, Baek1998, Horodecki2009,Datta2007, Raussendorf2001,degen2017quantum,huerta2022inference}. As such, characterizing the entanglement in a state is a fundamental task to determine its utility for information processing.
    
    Bipartite pure states are said to be entangled if they cannot be written as a tensor product $\ket{\psi_A}\otimes\ket{\psi_B}$~\cite{nielsen_chuang_2010}. On the other hand, entanglement in multipartite systems is more complex, as entanglement can arise between certain subsystems but not necessarily across the entire system.  For example, the four-party biseparable state $\ket{\psi_{AB}}\otimes\ket{\psi_{CD}}$ lacks any entanglement across the partition $AB:CD$. States for which no biseparable partition can be drawn are said to have genuine multipartite entanglement (GME) \cite{Huber2014, Toth2005}, and a significant amount of work has been put forward towards quantifying and characterizing GME \cite{Dur2000, Szalay2015, Walter2016, Horodecki2009, Barnum2001, Miyake2003,Wong2001, Coffman2000, Meyer2002, Eisert2001, Brylinski2002}. Yet, even among the collection of non-GME states there is a great deal of complexity, and a non-GME state $\ket{\psi}$ can be classified according to the number of separable cuts it possesses, as well as the number of entangled systems within each cut. 
    
    With the rapid growth of quantum technologies  \cite{Preskill2018,Gyongyosi2019, Corcoles2019} and experimental hardware capable of generating multipartite entangled states, there is a natural demand for methods to decide whether a given $n$-partite state is GME, and if it is not, for how to determine its product structure.  Ideally, one would like to answer these questions using some entanglement measure that is experimentally accessible through simple protocols.  In this work we show that the concentratable entanglement is one such entanglement measure, as it can be used for identifying entanglement structures as well as other interesting features of multipartite entanglement.  

The concentratable entanglement (CE) is an efficiently computable entanglement measure recently introduced in Ref. \cite{Beckey2021}.  For an $n$-qubit pure state $\ket{\psi}$, the CE takes the form  
    \begin{equation}\label{eq:CE}
        \CE{\psi} = 1-\frac{1}{2^n}\sum_{\alpha\in Q}\Tr[\rho_\alpha^2],
    \end{equation}
     where $\alpha$ is a set of qubit labels,  $\rho_\alpha$ is the reduced state of $\ket{\psi}$ on the set of qubits in $\alpha$, and $Q$ is the power set of $\{1,2\ldots,n\}$.  The CE captures entanglement in the system by averaging the reduced state purities across all partitions. Operationally, CE is the probability that at least one SWAP test should fail when $n$ of them are applied in parallel across two copies of $\ket{\psi}$ (see Fig.~\ref{fig:swap}). Since a failed SWAP test leads to the generation of a Bell pair, the CE also quantifies how well entanglement can be concentrated  using two copies of the state and SWAP tests~\cite{Beckey2021}. The controlled SWAP (cSWAP) used to measure CE is a basic building block in quantum communication protocols \cite{Buhrman2001}, and it has been recognized as an experimentally accessible tool for measuring and witnessing entanglement \cite{Gutoski2015, Foulds2021,Beckey2021, Foulds2021}.

    In this work we show  that the CE can only attain a certain maximal value when restricted to $n$-qubit states having a fixed product state structure.  These maximal values naturally induce a hierarchy on the set of all pure states, separating classes of states with different numbers of separable cuts or different numbers of qubits within each cut (for example, see Table \ref{table:5hier}).  Hence, using the CE measured for a given state, one can certify that entanglement must exist between at least a certain number of parties.

    The rest of this paper is devoted to developing this hierarchy and exploring applications.  Our first significant technical challenge is to compute the maximal value of CE among all $n$-qubit states.  This in itself turns out to be a very intriguing problem.  While we do not have a general solution, we provide a linear programming upper bound that is tight for at least up to 18 qubits (except $n\in \{7,13,15,16\}$ where we know of states coming within a few decimal places of the bound). Interestingly, this linear program closely matches those arising in the study of quantum error correcting cods, and we exploit this connection between entanglement and coding theory to unravel a deep connection between CE and coding theory.  By deriving the Haar statistics of the CE, we then show that most states have near maximal CE and our hierarchy is tight enough to certify GME in nearly all such states. Finally, we provide rigorous connections between CE and cSWAP to other entanglement measures.  All results and discussion are presented below, with detailed proofs delayed to the supplemental material (SM)~\footnote{See Supplemental Material which contains additional details and proofs as well as Refs. \cite{Werner1989,PhysRevA.63.042111,spiegel_lipschutz_liu_2018,Hein2006,Varbanov2007,1266813,BACHOC200155} }.
     \begin{figure}[t]
         \centering
         \includegraphics[scale=.75]{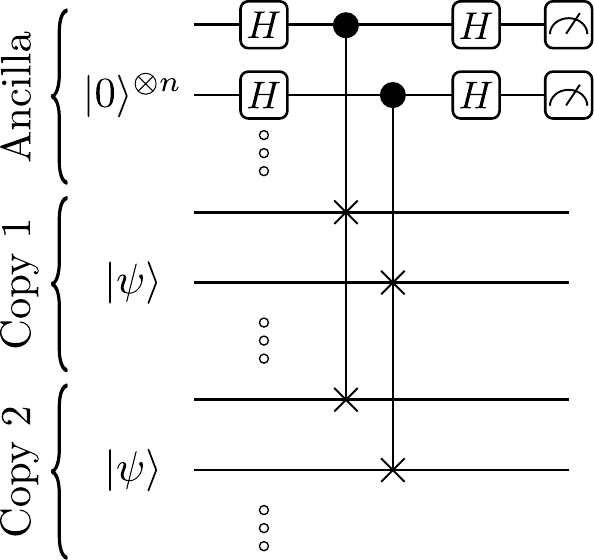}
         \caption{\textbf{Parallelized controlled SWAP test for measuring CE.} After preparing two copies of the state of interest, one performs a cSWAP test on each triplet of qubits from the ancilla and the two copies of $\ket{\psi}$. The resulting probabilities of bitstrings on the ancilla registers yield the CE.
         }
         \label{fig:swap}
     \end{figure}
\textit{A Hierarchy of Multi-qubit Product Structures --}
  We begin by denoting the maximum value of CE on $n$ qubits as
\begin{equation}
\label{Eq:CEmax1}
    \Cmax{n} = \max_{\ket{\psi}\in\mathbb{C}_2^{\otimes n}}\CE{\psi}.
\end{equation}
 Note that there is always a state achieving the maximal CE, which follows from Weierstrass' Extreme Value Theorem. To use this quantity for detecting different product state structures, we need to characterize how CE behaves for biseparable states.  This is given by the following.
\begin{proposition}\label{prop:bisep}
        For any biseparable state $\ket{\psi}=\ket{\psi_A}\ket{\psi_B}$, 
        \begin{equation}\label{eq:ce_bisep}
            \CE{\psi} = \CE{\psi_A} + \CE{\psi_B} - \CE{\psi_A}\CE{\psi_B}.
        \end{equation}
    \end{proposition}
\noindent From Proposition \ref{prop:bisep} it follows that if $|A|=k$ and $|B|=n-k$, then
\begin{equation}\label{eq:5}
        \mathcal{C}(\ket{\psi_A}\ket{\psi_B}) \leq \Cmax{k}+\Cmax{n-k}-\Cmax{k}\Cmax{n-k}.
    \end{equation}
Consequently, if a state has large enough CE, we know that it cannot be written as a product of pure states on $k$ and $n-k$ qubits.  By proceeding iteratively, one can obtain inequalities similar to that in  Eq.~\eqref{eq:5} for more separable cuts.  For any product state structure, one thus obtains a bound $\zeta^*$ on CE, and if $\CE{\psi} > \zeta^*$, then $\ket{\psi}$ defies any of the product structures bounded by $\zeta^*$ (e.g., see Fig. \ref{table:5hier}). Furthermore, by finding the largest CE possible across all bipartations, we obtain a threshold above which GME is certified,
    \begin{equation}
        \zeta(n)  = \max_{1\leq k<n}\Cmax{k}+\Cmax{n-k}-\Cmax{k}\Cmax{n-k}.
    \end{equation}

To find $\zeta^*$ and $\zeta(n)$ we must know $\Cmax{n}$ for arbitrary $n$.  Before discussing the optimization problem of Eq. \eqref{Eq:CEmax1}, we  observe that a simple upper bound on $\mc{C}^*(n)$ follows from assuming that all reduced density matrices are maximally mixed: 
\begin{equation}
\label{Eq:AME-bound}
\Cmax{n} \leq 1 - \frac{1}{2^n}\sum_{0\leq k \leq n}\binom{n}{k}2^{-\min(k,n-k)}.
\end{equation}
We note that a bound like this was considered in a similar task of verify GME in qudit states \cite{Qi2016}. Numerics from a recent work suggest that graph states come close to saturating this upper bound \cite{cullen2022calculating}. However, the bound in Eq. \eqref{Eq:AME-bound} is generally not tight. Indeed, pure states for which all bipartitions have maximally mixed marginals are called absolutely maximally entangled (AME), and AME states exist only for two, three, five, and six qubits \cite{Huber2017, Scott2004,Arnaud2013}.  Hence, this bound will necessarily be loose for any other number of qubits.

To obtain better bounds on $\Cmax{n}$, we first note that Eq. \eqref{Eq:CEmax1} can be equivalently expressed as
\begin{equation}
\label{Eq:CEmax2}
    \mc{C}^*(n)=1-\min_{\ket{\psi}\in\mbb{C}_2^{\otimes n}}\tr[M\op{\psi}{\psi}^{\otimes 2}],
\end{equation}
where $M=\otimes_{i=1}^n\Pi_+^{(i)}$, and with $\Pi_+^{(i)}$ the projector onto the symmetric subspace for the i\textsuperscript{th} qubit and its copy. As shown in the SM, we relax the optimization problem in Eq.~\eqref{Eq:CEmax2} by replacing the product state $\op{\psi}{\psi}^{\otimes 2}$ by an operator $X$ having a positive partial transpose. By further exploiting the symmetry in the objective function of~\eqref{Eq:CEmax2}, we can reduce the positive partial transpose relaxation to the following linear program (LP):
\begin{align}\label{eq:LP}
    1- \min & \ 3^ny_0\notag\\
    \text{subject to}\;\;& (i)\ \textbf{y} \geq 0\,,\notag\\
    &(ii)\ K\mathbf{y} \geq 0\,,\notag\\
    &(iii)\ \sum_{i=0}^{\lfloor \frac{n}{2} \rfloor} y_w3^{n-w}=1\,,
\end{align}
where $\mathbf{y}=(y_0,y_1,\cdots,y_{\lfloor \frac{n}{2} \rfloor})^T$ and $K$ is an $(n+1) \times (\lfloor \frac{n}{2} \rfloor + 1)$ matrix whose elements are quaternary Krawtchouk polynomials \cite{roman1992coding}. The Krawtchouk polynomials play an important role in classical coding theory, and we return to this curious connection latter in the paper.

\begin{table}[t]
    \centering
    \begin{tabular}{c|c|c}
    Number of Qubits & $\Cmax{n}$ & $\zeta(n)$\\
    \hline
     2 & 0.25 & 0\\
     3 & 0.375 & 0.25\\
     4 & 0.5 & 0.4375\\
     5 & 0.625 & 0.53125\\
     6 & 0.71875 & 0.625\\
     7 & $0.779296875^*$ & 0.71875\\
     8 & 0.828125 & 0.7890625\\
     9 & 0.8671875 & 0.83447265625\\
     10 & 0.8984375 & 0.87109375\\
     11 & 0.923828125 & 0.900390625\\
     12 & 0.94287109375 & 0.923828125\\
    \end{tabular}
    \caption{\textbf{Maximal values of CE and thresholds for detecting GME as a function of system size.} All $\Cmax{n}$ are achievable except for $n=7$. In this case, values of up to 0.7739 have been found numerically. $\zeta(n)$ is the threshold value above which the state must be GME.
    }
    \label{table:ubs_ths}
\end{table}

We have numerically solved this LP for up to $n=31$ (see SM).  While this just provides an upper bound on $\mc{C}^*(n)$, we have found states, corresponding to extremal quantum error correcting codes \cite{Rains98, Scott2004}, up to $n=18$ qubits achieving this upper bound, except for $n\in\{7,13,15,16\}$. We suspect that the results of our LP yield the exact maximum of $\mc{C}^*(n)$ for other system sizes as well.

Table~\ref{table:ubs_ths} presents our results for up to $12$ qubits.  For $n\notin \{2,3,5,6,7\}$, the value $\mc{C}^*(n)$ lies below the CE of an AME state, thereby providing an alternative proof for the nonexistence of AME states on systems with these numbers of qubits.  Table~\ref{table:5hier} provides an example  of our CE-based hierarchy for $n=5$.  Here, there are seven classes of product state structures, and we compute exact values for the maximal CE obtainable within each class.  Similar tables for up to $n=12$ can be found in the SM.

\textit{Random States --} Knowing the maximal possible values of CE, it is natural to ask how CE is distributed for random states and whether most GME states achieve $\CE{\psi}>\zeta(n)$. If the latter inequality holds with high probability, then the CE provides a good entanglement measure for certifying GME. In fact, we find that the average CE quickly goes to one and that most states lie above the GME threshold. Specifically, under the Haar measure, CE is distributed as follows.
\begin{proposition}\label{prop:haar}
    The Haar average of concentratable entanglement is
    \begin{equation}\label{eq:haar_av}
        \langle \mathcal{C} \rangle_{\text{Haar}} = 1- 2\frac{3^n}{4^n+2^n},
    \end{equation}
    with variance $\Var(\mathcal{C})_{\text{Haar}} =O\left((\frac{3}{16})^n\right)$. 
    \end{proposition}
We give an exact expression for the Haar variance in the SM. Note that Eq.~\eqref{eq:haar_av} goes to $1$ like $O(1-(\frac{3}{4})^n)$. In fact, in the appendix we establish the bound $\zeta(n)\leq \mathcal{C}^*(n)\leq 1-(\frac{3}{4})^n$ which matches the scaling of $\langle \mathcal{C} \rangle_{\text{Haar}}$.  Interestingly, for as few as 5 qubits the majority of states have CE greater than that of the GHZ state as $\CE{\text{GHZ}} = \frac{1}{2}-\frac{1}{2^n}$.   As non-GME states are measure zero under the Haar measure \cite{Lockhart2002}, an upper bound on the probability that a GME state cannot be verified via its CE can be easily found by applying Chebyshev's inequality. In Fig.~\ref{fig:scaling}(b) we demonstrate that this bound seemingly goes to 0 exponentially quickly.

We propose that a sort of hybrid method may be the best approach to verifying GME in practice. In the course of determining $\CE{\psi}$ one measures a set of bitstrings $\{\textbf{z}_i\}$. It turns out that each of these yields nontrivial information about the state.

\begin{table}
    \centering
    \begin{tabular}{c|c}
        Structure & Max CE $\zeta^*$\\
        \hline
        $5$ & 0.625\\
        $3\otimes 2$ & 0.53125\\
        $4\otimes 1$ & 0.5\\
        $2\otimes 2\otimes 1$ & 0.4375\\
        $3\otimes 1\otimes 1$ & 0.375\\
        $2\otimes 1\otimes 1\otimes 1$ & 0.25\\
        $1\otimes 1\otimes 1\otimes 1\otimes 1$ & 0\\
    \end{tabular}
    \caption{\textbf{Hierarchy for 5 qubit states based on CE.} Notation such as $2\otimes 2\otimes 1$ denotes the class of states having the form $\ket{\psi_{AB}}\ket{\psi_{CD}}\ket{\psi_E}$ for any labeling of parties.  A state cannot have a product state structure according to any partitioning lying lower in the hierarchy than its CE.}
        \label{table:5hier}
\end{table}

\begin{proposition}
    Given a measurement of a bitstring $\textbf{z}$ in the cSWAP test, the state cannot be a product with respect to any partition such that the Hamming weight of the substring of $\textbf{z}$ restricted to any set in the partition is odd.
\end{proposition}
\begin{corollary}
    A measurement of bitstring $\textbf{z} \neq \textbf{0}$ implies that $\ket{\psi}$ cannot be a product state with respect to half of all possible bipartitions. Further, measuring $k$ linearly independent bitstrings implies that a state could only be biseparable with respect to $2^{n-1-k}$ bipartitions.
\end{corollary}
For example, say one measures the bitstring $1100$ then it is not possible for the state to be biseparable with respect to the partitions: $\{1\} : \{2,3,4\}$, $\{2\} : \{1,3,4\}$, $\{1,3\} : \{2,4\}$, or $\{1,4\} : \{2,3\}$. While verifying GME may require exponential shots, just a handful of bitstrings removes most possibilities. In practice one could do the following: run cSWAP to determine $\CE{\psi}$ up to some resolution, in the process measuring a set of bitstrings, and then proceed to verify entanglement between the remaining possible $2^{n-1-k}$ bipartitions.

\begin{figure}[t]
    \centering
    \includegraphics[width=0.9\columnwidth]{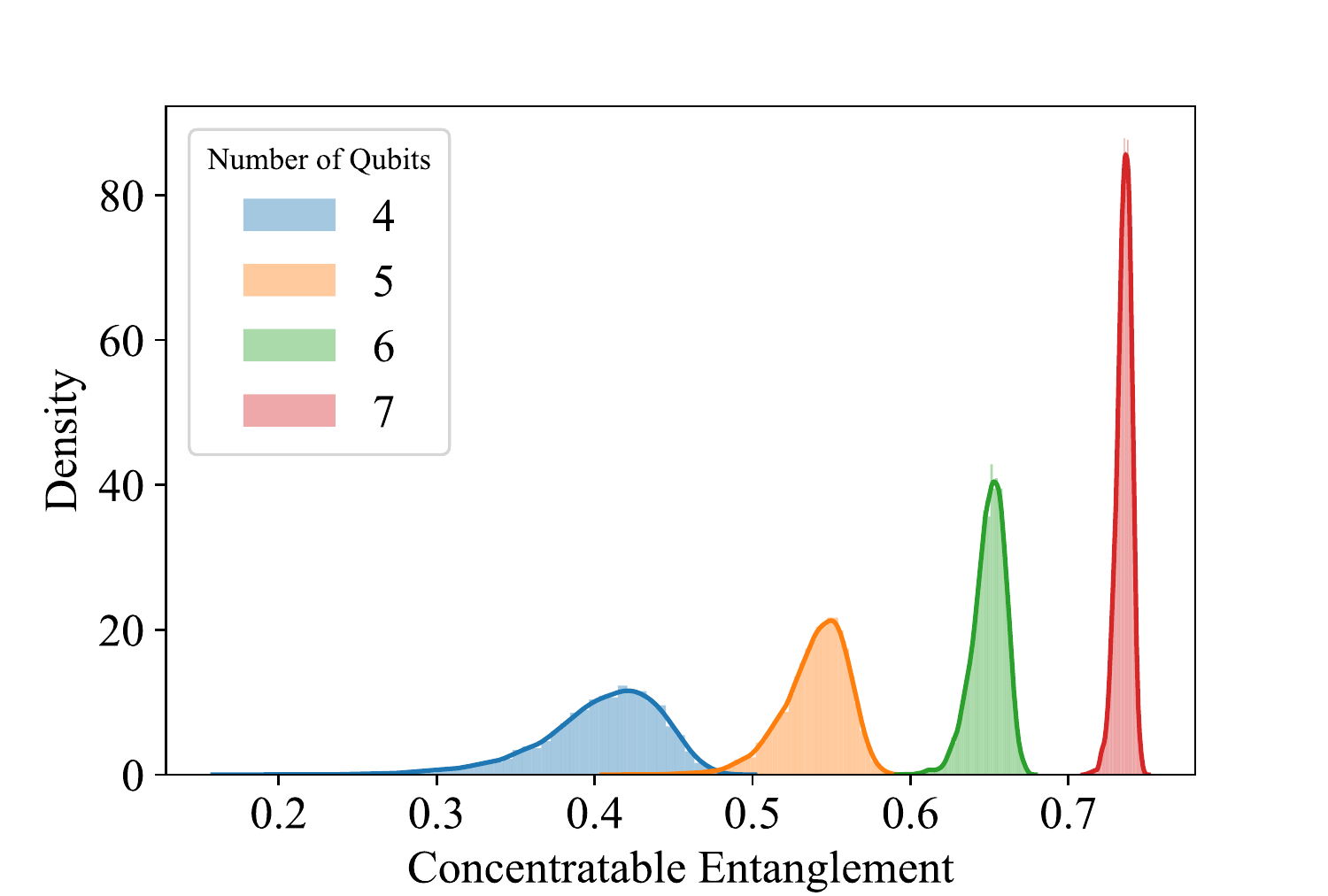}
    \caption{\textbf{Haar distribution of concentratable entanglement for 4 to 7 qubits.} As expected from Eq.~\ref{eq:haar_av}, the mean grows with the system size while the distribution grows tighter. Data comes from 6000 random samples for each system size.}
    \label{fig:haar_dist}
\end{figure}

\textit{Applications.  Connections to Other Measures and Entanglement Concentration --}
We next consider what other properties of entanglement we can learn by studying the CE. In \cite{Beckey2021} it is shown that the entanglement measures of Refs. \cite{Wong2001, Carvalho2004} correspond to special cases of concentratable entanglement. Here we provide connections to two other entanglement measures: the tensor rank/Schmidt measure \cite{Eisert2001} and the average linear entropy \cite{Meyer2002,Brennen2003} (additionally, a link to geometric measure of entanglement \cite{Wei2003} is given in the SM).

Every multipartite state can be written as a (non-unique) sum of product states $\ket{\psi} = \sum_{i=1}^{r}c_i(\bigotimes_{j=1}^n\ket{\phi_i^{(j)}})$. Viewing $\ket{\psi}$ as a tensor, this is a canonical polyadic (CP) decomposition of $\ket{\psi}$ \cite{hitchcock1927expression, harshman1970foundations}. The minimum $r$ for which such a decomposition exists is called the CP rank, or tensor rank, of $\ket{\psi}$ and is denoted as $\text{rk}(\ket{\psi})$.  The CP rank can be used to quantify entanglement \cite{Brylinski2002, Eisert2001}, and it has an operational meaning in terms of stochastic LOCC convertibility \cite{Chitambar2008, Chen2010}.  Classifying entanglement in terms of CP rank is fundamentally different than categorizing entanglement as being either GME or  non-GME.  For example, rank distinguishes the two inequivalent three-qubit GME classes (GHZ and W types) \cite{Dur2000}. Further, no reduced density matrix can have matrix rank greater than the CP rank. However, since $\Tr[\rho^2] \geq \frac{1}{d}$, where $d$ is the matrix rank of $\rho$, CP rank provides a lower bound on purities. For a subsystem $\alpha$ of a state of rank $\rank(\ket{\psi}) = R$, we have that $\Tr[\rho_\alpha^2] \geq \max(\frac{1}{R}, 2^{-\min(k,n-k)})$. As an example, a state on 5 qubits of CP rank 3 must have CE less than $1-\frac{1}{32}[1+\binom{5}{2}\frac{1}{2}+\binom{5}{3}\frac{1}{3}+\binom{5}{4}\frac{1}{2}+1]=\frac{55}{96}\approx 0.573 < 0.625 = \Cmax{5}$. We can readily generalize this:

\begin{proposition}\label{prop:cp}
    For a state such that $\text{rk}(\ket{\psi})=R,$
    \begin{equation}
    \label{Eq:CP-bound}
        \CE{\psi} \leq 1-2^{-n}[\sum_{k}\binom{n}{k}\max(\frac{1}{R},2^{-\min(k,n-k)})]
    \end{equation}
\end{proposition}
\noindent Consequently, one can guarantee that $\text{rk}(\ket{\psi})>R$ if $\ket{\psi}$ has CE greater than the bound in Eq. \eqref{Eq:CP-bound}. This allows us to rule out types of entanglement via cSWAP tests. For example, measuring $\CE{\psi}>\frac{1}{2}-\frac{1}{2^n}$ would imply that $\ket{\psi}$ is not a GHZ-type state (i.e., rank two). Furthermore, demonstrating high CP rank has connections to simulatability \cite{Ma2022}. We note here the asymptotic value of Proposition~\ref{prop:cp}:
\begin{corollary}
    As $n\rightarrow\infty$, a state of CP rank R on n qubits has concentratable entanglement less than $\frac{R-1}{R}$.
\end{corollary}

\begin{figure}[]
    \centering
    \includegraphics[width=1\columnwidth]{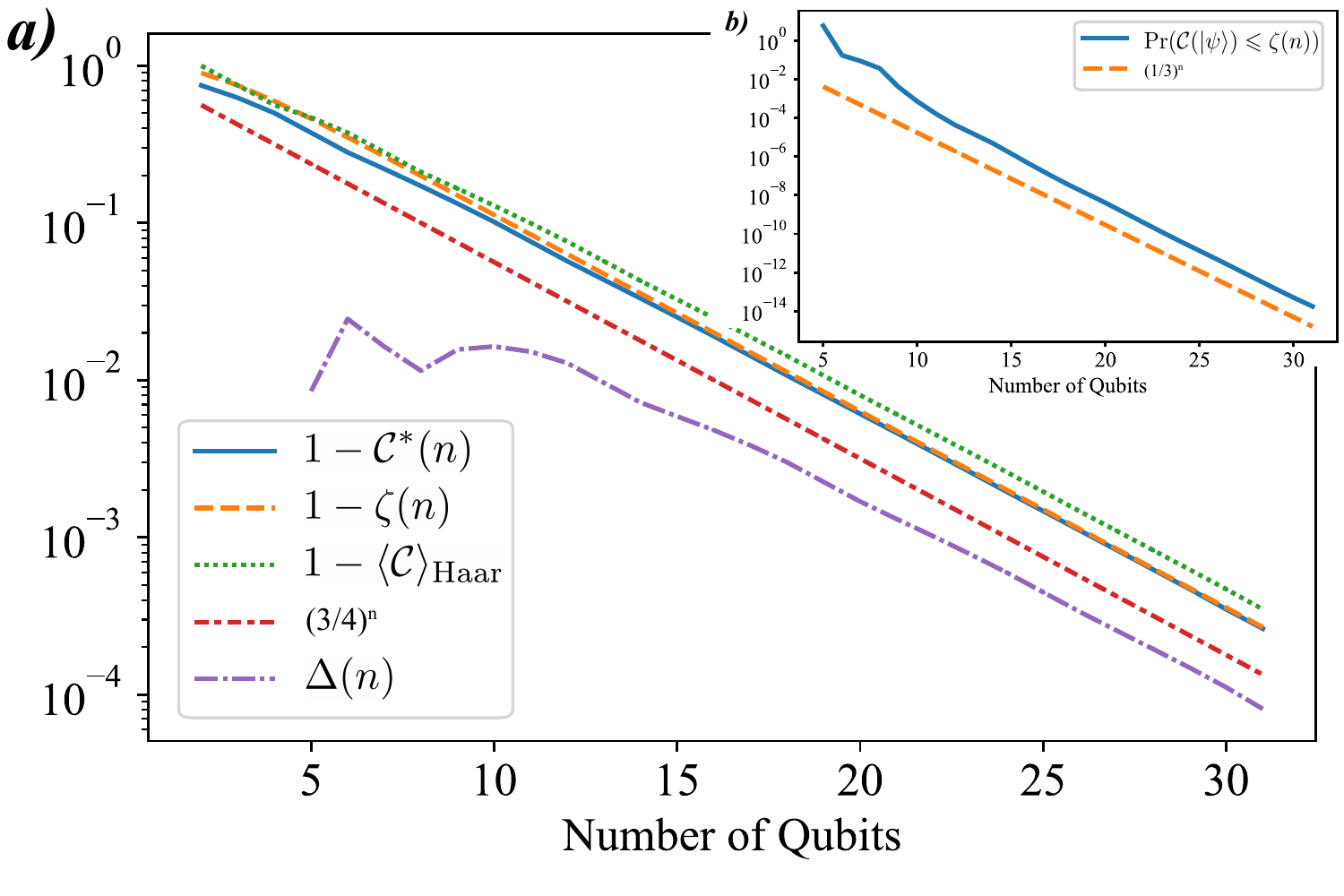}
    \caption{\textbf{Scaling of properties of concentratable entanglement.} \textbf{(a)} The maximal/average values and the GME threshold go to 1 like $O((\frac{3}{4})^2)$, yet $\zeta(n)$ does not grow larger than $\Cmax{n}$ or $\langle C \rangle_{\text{Haar}}$. \textbf{(b)} Using Chebyshev's inequality, we upper bound the probability that a random state has CE below the GME threshold. This seemingly goes to 0 like $O(3^{-n})$.}
    \label{fig:scaling}
\end{figure}

Another entanglement measure we can relate to CE is the average linear entropy, $1-\frac{1}{n}\sum_{i}\tr[\rho_i^2]$.  It turns out that this connection is intimately related to the operational fact that cSWAP provides a form of universal distortion-free entanglement concentration \cite{matsumoto2007universal}. Whenever a failure (i.e., outcome $1$) is measured on the ancilla system in a cSWAP test, one knows that a Bell pair must have been created on the two system qubits. This allows for the distribution of Bell pairs in adversarial scenarios. Say Alice and Bob are operating in separate labs with some trusted node through which they can enact SWAP operations. Based on ancillas in Alice's lab, they can perform cSWAP. Even if Bob acts maliciously, Alice knows with certainty that they now share a Bell pair given measurement outcome $1$.

Thus, it is interesting to ask how well a given state $\ket{\psi}$ performs at concentrating Bell pairs using the cSWAP method. While CE corresponds to the probability of creating any Bell pairs whatsoever, it may be informative to consider other performance metrics such as the expected number of Bell pairs generated. We show that this value is precisely the average linear entropy.
\begin{proposition}
    Given an n-qubit state $\ket{\psi}$, the expected number of Bell pairs from running the parallelized SWAP test on $\ket{\psi}^{\otimes 2}$ is 
    \begin{equation}
        \mathcal{B}(\ket{\psi}) = \frac{1}{2}(n-\sum_i \Tr[\rho_i^2]),
    \end{equation}
    with variance
    \begin{equation}
        \text{Var}(\mathcal{B}(\ket{\psi})) = \frac{n}{4}-\frac{\sum_i\Tr[\rho_i^2]}{4}(\sum_j\Tr[\rho_j^2]+2)+\sum_{i,j}\Tr[\rho_{i,j}^2].\notag
    \end{equation}
\end{proposition}

Recall that a state is called $k$-uniform if all reduced density matrices consisting of $k$ or fewer parties are maximally mixed.  We thus have the following.
\begin{corollary}
\label{Cor:avg-bell-pair}
    The maximum number of expected Bell pairs using the parallelized SWAP test is $\frac{n}{4}$, and it is achieved by any $1$-uniform state. Further, any $2$-uniform state achieves the maximal expected number while minimizing the variance.
\end{corollary}
\noindent While writing this manuscript it was brought to our attention that the expected value formula was independently in \cite{Brennen2003} but without an explicit proof. For completion we give a detailed proof in the SM. 

\textit{Connections to Coding Theory --}  While not obvious at first, our work hints at deep connections between CE and quantum coding theory. Previous works have related error correction to similar entanglement measures \cite{Scott2004} and AME states \cite{Huber_2018,Raissi_2018}, but the connection to CE is novel. In this work we identify new connections involving the states achieving $\Cmax{n}$ and the LP of Eq. \eqref{eq:LP}.

To briefly review (see \cite{nielsen_chuang_2010,gottesmanthesis,Steane963} for more details), a quantum error correcting code is a subspace of an $n$-qubit Hilbert space with corresponding projector $P_Q$. To encode information we apply a map from a $k$-qubit space into this subspace. The code is constructed such that certain errors can be detected/corrected. We say that the weight of an error operator $E$ is the number of parties it acts nontrivially on. For example, $X\id Z$ has weight 2. A code has distance $d$ if all errors of weight less than $d$ are detectable. Such a code is denoted by $(n,k,d)$.

Among all codes, stabilizer codes \cite{nielsen_chuang_2010,gottesmanthesis} are ubiquitous. Here the codespace is the joint eigenspace of an abelian subgroup of the Pauli group. If this subgroup has $n-k$ generators the codespace is of dimension $2^k$, and we denote the code by $[[n,k,d]]$. These codes have a natural correspondence with codes over GF(4) \cite{Calderbank98} and they can be represented as a set of strings on GF(4). A code is said to be self-dual if it is equal to its orthogonal complement (over GF(4)). Note that $[[n,0,d]]$ codes (pure states) are self-dual. Self-dual stabilizer codes are type II if all codewords have even weight, and type I otherwise. Type-dependent bounds on distance can be found in \cite{Scott2004,Rains98}. A code achieving these bounds is said to be extremal. We list these bounds below for clarity.
\begin{equation}
    d \leq \begin{cases}
        2\lfloor \frac{n}{6} \rfloor +1 & \text{type I, }n\% 6 = 0\\
         2\lfloor \frac{n}{6} \rfloor +3 & \text{type I, }n\% 6 = 5\\
          2\lfloor \frac{n}{6} \rfloor +2 & \text{type I, }n\% 6 \notin \{0,5\}\\
           2\lfloor \frac{n}{6} \rfloor +2 & \text{type II}
    \end{cases}\,.
\end{equation}

The first connection between CE and error correction is empirical.  Based on numerical searching and constructing solutions to the LP in Eq. \eqref{eq:LP}, we found that all states maximizing CE correspond to extremal codes (see SM).  We thus conjecture that if a $[[n,0,d]]$ stabilizer code achieves the maximal concentratable entanglement on $n$ qubits it must be an extremal code. Note that the converse direction does not hold. For example, the 10 qubit extremal code in Table 1 of \cite{Scott2004} does not achieve $\Cmax{n}$.

We now consider another property of stabilizer codes known as enumerators. Classically, these arise in linear programming bounds on codes \cite{Roman1992}. In analogy to the weights of classical codes, quantum codes can be described through their Shor-Laflamme enumerators \cite{Shor1997}:
\begin{equation}
    A_i  = \frac{1}{k^2}\sum_{\sigma: w(\sigma) = i}\Tr[\sigma P_Q]\Tr[\sigma^\dagger P_Q],
\end{equation}
\begin{equation}
    B_i  = \frac{1}{k}\sum_{\sigma: w(\sigma)=i}\Tr[\sigma P_Q \sigma^\dagger P_Q],
\end{equation}
where $\sigma$ is in the Pauli group and $w(\sigma)$ is the corresponding weight. The quantum MacWilliams identity \cite{Shor1997} connects these: $B_i = 2^{k-n}\sum_{j}K_i(j)A_j$, where $K_i(j)$ are again the four-ary Krawtchouk polynomials. This is identical to a constraint in the LP for maximizing CE (see Eq. \eqref{eq:LP}). As any enumerator $\{A_i\}$ is in the feasible set (up to multiplicative factors), this LP also yields a bound on quantum codes. Denoting the solution to the LP in Eq. \eqref{eq:LP} by $L(n)$ we find the following.
\begin{proposition}
   A stabilizer code such that $A_{2i+1}=0$ must satisfy the inequality
    \begin{equation}
        B_n \leq \frac{1}{L(n)}\frac{3^n}{2^{n-k}}.
    \end{equation}
\end{proposition}
\noindent Note that when the LP yields an achievable value of CE, this can be rewritten as $B_n \leq \frac{1}{1-\Cmax{n}}\frac{3^n}{2^{n-k}}$.

Now consider the problem of finding the maximal value of the expected number of Bell pairs. While in Corollary \ref{Cor:avg-bell-pair} we found this value to be $\frac{n}{4}$, we could, in a similar manner to CE, solve this problem via linear programming. This yields another bound on codes.
\begin{proposition}
    A stabilizer code such that $A_{2i+1}=0$ must satisfy the inequality
    \begin{equation}
        \sum_{i=0}^n i\cdot 3^{-i}\cdot A_i \leq \frac{n}{4}\frac{2^{n-k}}{3^n}B_n.
    \end{equation}
\end{proposition}
If the code is self-dual then $A_i = B_i$. Thus, we recover the fact that type II codes do not exist for odd $n$.  Combining these bounds yields $\sum_{i=0}^n i\cdot 3^{-i}\cdot A_i \leq \frac{n}{4L(n)}$. Both of these bounds are special cases of more general inequalities holding for any stabilizer code, which can be found in the SM.

\textit{Discussion --} In this work we showed that the CE and the parallelized SWAP test can be used to determine various multipartite entanglement properties of pure states. Particularly, CE induces a hierarchy upon pure states, separating genuine multipartite entangled states from biseparable. For even modest system sizes, most states fall neatly into these hierarchies and thus CE can verify GME for most states.

While our analysis has focused on pure states, the hierarchy has some robustness for certifying different entangled structures for mixed state.  The key relation needed is a generalization of Proposition \ref{prop:bisep}.  As shown in the SM, every $n$-qubit product density matrix $\rho=\rho^A\otimes \rho^B$ with $|A|=k$ and $|B|=n-k$ satisfies the bound
\begin{equation}
    \mc{C}(\rho)\leq \mc{C}^*(k)+\mc{C}^*(n-k)- \mc{C}^*(k)\mc{C}^*(n-k)+2\sqrt{S_L(\rho)},\notag
\end{equation}
where $S_L(\rho)=1-\tr[\rho^2]$ is the linear entropy of $\rho$.  Here, $\mc{C}(\rho)$ is being defined just as in Eq. \eqref{eq:CE}.  Consequently, entanglement structures can still be verified using the value $\zeta^*$ provided $\rho$ has sufficiently high purity.

\begin{acknowledgments}
We are grateful to Olgcia Milenkovic for helpful discussions on classical and quantum error correction.  We would also like to thank Ian George and Brian Doolittle for fruitful discussions in deriving and implementing the LP for upper bounding $\Cmax{n}$. L.S. and E.C. acknowledge support from the NSF Quantum Leap Challenge Institute for Hybrid Quantum Architectures and Networks (NSF Award 2016136). L.S. and M.C. were intially supported by
ASC Beyond Moore’s Law project at Los Alamos National Laboratory (LANL). MC acknowledge support by NSEC Quantum Sensing at LANL. This work was also supported by the Quantum Science Center (QSC), a National Quantum Information Science Research Center of the U.S. Department of Energy (DOE).
\end{acknowledgments}

\bibliography{ref.bib,supp_ref.bib}

\newpage
\onecolumngrid

\appendix


\vspace{0.25in}
\begin{centering}
\large\textbf{Supplementary Material for: ``\textit{A Hierarchy of Multipartite Correlations Based on Concentratable Entanglement}''}
\end{centering}

\vspace{-0.25in}
    
    
    
    


\tableofcontents
\vspace{0.25in}

In this Supplemental Material we provide further details for the manuscript ``\textit{A Hierarchy of Multipartite Correlations Based on Concentratable Entanglement}'', including detailed proofs and examples of the propositions and concepts therein. 

The Supplemental Material is organized as follows. In Section~\ref{CE_rev} we review the concept of concentratable entanglement (CE), which was introduced in Ref.~\cite{Beckey2021}. In Sections~\ref{sec;LP}--\ref{sect:othermeasures} we sequentially offer proofs of the propositions in the main text, with several additional corollaries. The first few proofs are related to finding $\Cmax{n} = \max_{\ket{\psi}}\CE{\psi}$ and verifying genuine multipartite entanglement (GME). Next, in Section~\ref{sect:othermeasures} we lay out connections to other entanglement measures, namely CP rank and geometric measure of entanglement. Then, in Section~\ref{sec:bellpairs}, we discuss the parallelized controlled SWAP test as a form of entanglement generation and find that the SWAP test yields another entanglement measure. In Section~\ref{sec:coding} we connection CE to coding theory.  Numerical results from our linear programming bound are given in Section~\ref{data}.  Finally, in Section~\ref{Sect:robustness} we give a continuity bound on CE, which allows applicability of our hierarchy on mixed states.

\section{Review of Concentratable Entanglement}\label{CE_rev}
Here we provide a brief overview of the CE, which was introduced in \cite{Beckey2021}. We refer the reader to the corresponding manuscript for a more detailed discussion of the concept.

Denote by $\mathcal{S}=\{1,2,\ldots,n\}$ the set of labels for the qubits in a system. For a set of labels $s\in Q(\mathcal{S})\backslash\{\emptyset\}$ (where $Q$ is the powerset of $\mathcal{S}$), the CE is defined as
\begin{align}
    \mathcal{C}_{\ket{\psi}}(s) = 1-\frac{1}{2^{\lvert s \rvert}}\sum_{\alpha\in Q}\Tr[\rho_\alpha^2].
\end{align}
Note that in this work we are primarily concerned with the case where $s=\mathcal{S}$, and for this special case we denote the CE as $\CE{\psi}$. One of the most attractive properties of the CE is that it can be measured via the controlled-SWAP test: given two copies of a state and an ancilla initialized to $\ket{\textbf{0}}$, apply Hadamard gates to each ancilla, then controlled SWAPs between the $k$-th ancilla and the $k$-th qubit in each copy of the states, and finally Hadamard gates on each ancilla again. This is illustrated in Fig~\ref{fig:swap_test}. By measuring the ancilla registers in the computational basis, one finds the CE through
\begin{align}
    \mathcal{C}_{\ket{\psi}}(s) = 1-\sum_{z\in\mathcal{Z}_0(s)} p(z),
\end{align}
where $\mathcal{Z}_0(s)$ is the set of bitstrings with 0's on the indices in s. In the case that $s=\mathcal{S}$, this is simply $1-p(\textbf{0})$. Further, in \cite{Beckey2021}, it is noted that, for systems of even numbers of qubits, $p(\textbf{1}) = \tau_{(n)}/2^n$, the $n$-tangle introduced in \cite{Wong2001}. Also, note that a measurement of $\ket{1}$ on an ancilla register leaves the $k$-th qubits in the copies of $\ket{\psi}$ in the state $\ket{\Phi^-}=\frac{1}{\sqrt{2}}(\ket{01}-\ket{10})$. Thus, $1-p(\textbf{0})$ corresponds to the probability of creating at least one Bell pair using $\ket{\psi}^{\otimes 2}$ and the parallelized controlled-SWAP test.

For the purposes of this work it will be convenient to consider a linearized reformulation of CE. If we measure 0 on the $i$-th ancilla then the Kraus operator acting on $A_i$ and $B_i$ is $\mathcal{K} = \frac{1}{2}(\id+\mathbb{F}^{(A_iB_i)}) = \Pi_+^{(k)}$, where $\mathbb{F}$ is the swap operator. That is, if we measure 0 we know that the symmetric projector has been applied accross the corresponding qubits in the two copies of $\ket{\psi}$. Similarly, one can show that a measurement of 1 corresponds to $\mathcal{K} = \frac{1}{2}(\id-\mathbb{F}^{(A_iB_i}) = \Pi_-^{(k)}$, the antisymmetric projector. Thus, the probability of measuring a bitstring $\textbf{z}$ is 
\begin{align}
    p(\textbf{z}) = \Tr[\otimes_{i}\frac{1}{2}(\id+(-1)^{z_i}\mathbb{F})^{(i)}\dya{\psi}^{\otimes 2}]. 
\end{align}
   We can then write CE as
\begin{align}
    \CE{\psi} = 1-\Tr[M\dya{\psi}^{\otimes 2}],
\end{align}
where we are defining $M := \otimes_i \Pi_+^{(i)}$. This will be essential in our proofs.

\begin{figure}[t]
    \centering
    \includegraphics{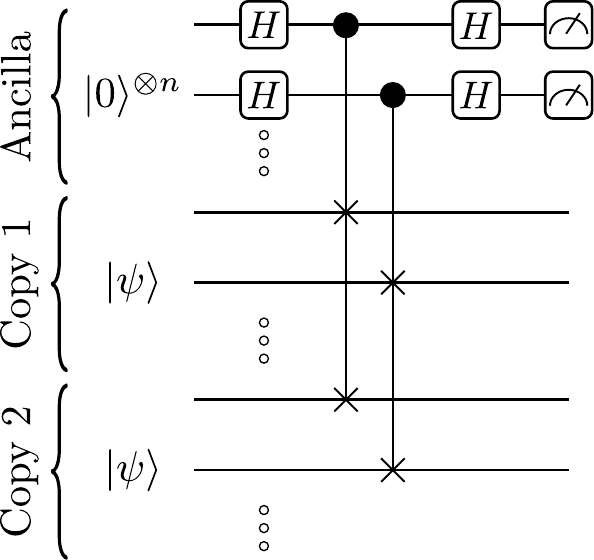}
    \caption{\textbf{Parallelized controlled SWAP test for measuring CE.} After preparing two copies of a state, simply perform a controlled SWAP test on each triplet of qubits from the ancilla and the two copies of $\ket{\psi}$. The resulting probabilities of bit strings on the ancilla registers yield the CE.}
    \label{fig:swap_test}
\end{figure}

\section{Proof of Propositions 1 and Corollary}\label{hier_proof}
First, we prove a simple formula for the CE of biseparable states.
\begin{proposition}\label{prop:SM_bisep}
        For any biseparable state $\ket{\psi}=\ket{\psi_A}\ket{\psi_B}$, 
        \begin{align}\label{eq:SM_ce_bisep}
            \CE{\psi} &= \CE{\psi_A} + \CE{\psi_B} - \CE{\psi_A}\CE{\psi_B}.
        \end{align}
\end{proposition}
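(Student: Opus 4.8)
The plan is to work entirely with the linearized reformulation $\CE{\psi} = 1 - \Tr[M\dya{\psi}^{\otimes 2}]$ with $M = \bigotimes_{i\in\mathcal{S}}\Pi_+^{(i)}$ given in the review above, since on two copies this quantity factorizes cleanly across any partition of the qubits. Write $\mathcal{S} = A \sqcup B$ so that $\ket{\psi} = \ket{\psi_A}\otimes\ket{\psi_B}$ with $\ket{\psi_A}$ supported on the qubits in $A$ and $\ket{\psi_B}$ on those in $B$.

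First I would observe that, after the reordering of tensor factors that pairs the $A$-qubits of copy one with the $A$-qubits of copy two (and likewise for $B$), one has $\dya{\psi}^{\otimes 2} = \dya{\psi_A}^{\otimes 2}\otimes\dya{\psi_B}^{\otimes 2}$, and correspondingly $M = M_A \otimes M_B$ with $M_A := \bigotimes_{i\in A}\Pi_+^{(i)}$ and $M_B := \bigotimes_{i\in B}\Pi_+^{(i)}$, because each local projector $\Pi_+^{(i)}$ acts only on the $i$-th qubit pair. Since the trace of a tensor product is the product of traces, $\Tr[M\dya{\psi}^{\otimes 2}] = \Tr[M_A\dya{\psi_A}^{\otimes 2}]\cdot\Tr[M_B\dya{\psi_B}^{\otimes 2}]$.

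Next I would identify the two factors with $1 - \CE{\psi_A}$ and $1 - \CE{\psi_B}$ respectively, by applying the same linearized identity to the subsystems $A$ and $B$, where the CE of each piece is taken relative to the label set of that piece. Substituting gives $\Tr[M\dya{\psi}^{\otimes 2}] = (1-\CE{\psi_A})(1-\CE{\psi_B})$, and therefore $\CE{\psi} = 1 - (1-\CE{\psi_A})(1-\CE{\psi_B}) = \CE{\psi_A} + \CE{\psi_B} - \CE{\psi_A}\CE{\psi_B}$, which is exactly \eqref{eq:SM_ce_bisep}.

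There is no deep obstacle here; the only point requiring care is the bookkeeping of the tensor-factor reordering and checking that the CE applied to $\ket{\psi_A}$ and $\ket{\psi_B}$ is defined using the powersets of their own label sets, so that $M_A$ and $M_B$ are precisely the operators appearing in those subsystem CEs. An alternative route would start from the purity-sum definition and split the powerset sum $\sum_{\alpha\in Q(\mathcal{S})}$ into a double sum over $\alpha_A\in Q(A)$ and $\alpha_B\in Q(B)$, using $\Tr[\rho_{\alpha_A\cup\alpha_B}^2] = \Tr[\rho_{\alpha_A}^2]\Tr[\rho_{\alpha_B}^2]$; this also works, but the $2^{\lvert s\rvert}$ normalizations make it more cumbersome than the linearized argument, so I would keep it only as a consistency check.
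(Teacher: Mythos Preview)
Your argument is correct. It differs from the paper's proof in the choice of starting point: you use the linearized operator identity $\CE{\psi}=1-\Tr[M\dya{\psi}^{\otimes 2}]$ and exploit the tensor factorization $M=M_A\otimes M_B$, whereas the paper works directly from the purity-sum definition $\mathcal{C}_{\ket{\psi}}(s)=1-2^{-|s|}\sum_{\alpha\in\mathcal{P}(s)}\Tr[\rho_\alpha^2]$, splits $\mathcal{P}(s)\cong\mathcal{P}(s\cap A)\times\mathcal{P}(s\cap B)$, and tracks the $2^{|s|}$ normalization explicitly---precisely the route you relegate to a consistency check. Your version is cleaner for the stated proposition (the case $s=\mathcal{S}$), since the multiplicativity of the trace does all the work and no combinatorics of subsets is needed. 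The paper's route, on the other hand, yields the more general identity $\mathcal{C}_{\ket{\psi}}(s)=\mathcal{C}_{\ket{\psi_A}}(s\cap A)+\mathcal{C}_{\ket{\psi_B}}(s\cap B)-\mathcal{C}_{\ket{\psi_A}}(s\cap A)\,\mathcal{C}_{\ket{\psi_B}}(s\cap B)$ for arbitrary $s$, from which the proposition follows as the special case $s=\mathcal{S}$; your operator approach could be adapted to general $s$ as well (by taking $M_s=\bigotimes_{i\in s}\Pi_+^{(i)}\otimes\bigotimes_{i\notin s}\mathbb{1}^{(i)}$), but as written it establishes only the full-set case.
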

Instead of proving this directly, we consider the more general case of $\mathcal{C}_{\ket{\psi}}(s)$, of which the desired expression is a special case.
\begin{proposition*}
    For a biseperable state $\ket{\psi}=\ket{\psi_A}\ket{\psi_B}$, 
    \begin{align}\label{eq:ces_bisep}
            \mathcal{C}_{\ket{\phi}}(s) &= \mathcal{C}_{\ket{\phi_A}}(s) + \mathcal{C}_{\ket{\phi_B}}(s) - \mathcal{C}_{\ket{\phi_A}}(s)\mathcal{C}_{\ket{\phi_B}}(s).
        \end{align}
\end{proposition*}
\begin{proof}
    \begin{align}
    \mathcal{C}_{\ket{\phi}}(s) & = 1-\frac{1}{2^{|s|}}\sum_{\alpha\in\mathcal{P}(s)}\Tr[\rho_\alpha^2]\\
    & = 1-\frac{1}{2^{|s|}}\sum_{\alpha\in\mathcal{P}(s)}\Tr[\rho_{\alpha\cap A}^2\otimes\rho_{\alpha\cap B}^2]\\
    & = 1-\frac{1}{2^{|s|}}\sum_{\alpha\in\mathcal{P}(s)}\Tr[\rho_{\alpha\cap A}^2]\Tr[\rho_{\alpha\cap B}^2]\\
    & = 1-\frac{1}{2^{|s|}}\sum_{\alpha_a\in\mathcal{P}(s\cap A)}\Tr[\rho_{\alpha_a}^2]\sum_{\alpha_b\in\mathcal{P}(s\cap B)}\Tr[\rho_{\alpha_b}^2]\\
    & = 1-\frac{1}{2^{|s|}}2^{|s\cap A|}(1-\mathcal{C}_{\ket{\psi_A}}(s\cap A))2^{|s\cap B|}(1-\mathcal{C}_{\ket{\psi_B}}(s\cap B))\\
    & = 1-(1-\mathcal{C}_{\ket{\psi_A}}(s\cap A))(1-\mathcal{C}_{\ket{\psi_B}}(s\cap B))\\
    & = \mathcal{C}_{\ket{\psi_A}}(s\cap A)+\mathcal{C}_{\ket{\psi_B}}(s\cap B)-\mathcal{C}_{\ket{\psi_A}}(s\cap A)\mathcal{C}_{\ket{\psi_B}}(s\cap B)
    \end{align}
    Where we have used the fact that $A\cup B = S$ and $A\cap B = \emptyset$. This implies that $\mathcal{P}(s)\cong\mathcal{P}(s\cap A)\times\mathcal{P}(s\cap B)$ and $|s|=|s\cap A|+|s\cap B|$.
\end{proof}
By taking $s=S$, we prove Prop.~\ref{prop:SM_bisep}. We note here that Prop.~\ref{prop:SM_bisep} allows us to find the CE of many copies of a state:
\begin{corollary*}
        The CE of k copies of a state $\ket{\psi}$ is
        \begin{align}
            \mathcal{C}(\ket{\psi}^{\otimes k}) & = 1-(1-\CE{\psi})^k.
        \end{align}
    \end{corollary*}
\begin{proof}
     We prove the claim via induction, with 2 copies as the base case. $\mathcal{C}(\ket{\psi}^{\otimes 2}) = 2\CE{\psi}-\CE{\psi}^2=1-(1-\CE{\psi})^2$Assume this holds true for $k$, and the claim holds for $k+1$:
    \begin{align}
        \mathcal{C}(\ket{\psi}^{\otimes k+1}) & = \CE{\psi}+\mathcal{C}(\ket{\psi}^{\otimes k})-\CE{\psi}\mathcal{C}(\ket{\psi}^{\otimes k})\\
        & = \CE{\psi}+(1-\CE{\psi})^k-\CE{\psi}(1-(1-\CE{\psi})^k)\\
        & = 1-(1-\CE{\psi})^{k+1}.
    \end{align}
\end{proof}

\section{Upper Bounds Through Mathematical Programming}\label{sec;LP}
To maximize the CE we want to minimize $\Tr[M\op{\psi}^2]$. While one would ideally solve this optimization problem over the set of symmetric bipartite pure states, this set is not convex and a direct optimization will be difficult. Instead, one can relax to the set of SWAP symmetric positive partial transpose states, which is convex. This yields the following semidefinite program (SDP):
\begin{align}\label{eq:SDP}
    \min_{X\in\mathbb{C}^{2^{2n}\times 2^{2n}}} &\Tr[MX]\\
    \text{subject to}\;\;& (i)\;\mathbb{F}^{AB}X=X\notag\\
    & (ii)\;X\mathbb{F}^{AB}=X\notag\\
    & (iii)\;X\geq 0\notag\\
    & (iv)\;\text{Tr}[X]=1\notag\\
    & (v)\; X^\dagger = X\notag\\
    & (vi)\; X^{\Gamma_B} \geq 0\notag.
\end{align}
While the optimization problem above can be solved, its complexity grows too quickly (as $O(4^n)$) to easily do so in practice. We will show that there exists an equivalent linear program  on $O(n)$ variables.

To convert Eq.~\eqref{eq:SDP} to a linear program (LP) first consider the Haar unitary twirl operation on a bipartite operator \cite{Werner1989}:
\begin{align}
    \mathcal{T}(X) & = \int (U\otimes U)X(U\otimes U)^\dagger dU\\
    & = \binom{d+1}{2}^{-1}\Tr[\Pi_+X]\Pi_++\binom{d}{2}^{-1}\Tr[\Pi_-X]\Pi_-,
\end{align}
where the integration is taken with respect to the Haar measure on $\mathcal{U}(d)$. For qubit states $d=2$ and the expression reduces to $\frac{\Tr[\Pi_+X]}{3}\Pi_++\Tr[\Pi_-X]\Pi_-$. As $\Tr[\Pi_+]=3$, it is clear that $\mathcal{T}(\Pi_+) = \Pi_+$. Similarly, $\mathcal{T}(\Pi_-) = \Pi_-$. Now consider the multipartite twirling $\mathcal{T}^{\otimes n}(X)$, which maps a bipartite state on two $n$-qubits systems to $\text{span}(\{\Pi_+,\Pi_-\}^{\otimes n})$. Via the invariance of $\Pi_+$, $\mathcal{T}^{\otimes n}(M)=M$. Further, $\mathcal{T}$ is self-adjoint, as can be readily proven:
\begin{align}
    \ipa{Y}{\mathcal{T}(X)} & = \Tr[Y^\dagger\int (U\otimes U)X(U\otimes U)^\dagger dU]\\
    & = \int \Tr[Y^\dagger(U\otimes U)X(U\otimes U)^\dagger dU]\\
    & = \int \Tr[(U\otimes U)^\dagger Y^\dagger (U\otimes U) X dU]\\
    & = \Tr[\mathcal{T}(Y)^\dagger X]\\
    & = \ipa{\mathcal{T}(Y)}{X}.
\end{align}
Using the invariance of M and the fact that $\mathcal{T}$ is self-adjoint, we see that
\begin{align}
    \Tr[MX] & = \Tr[\mathcal{T}^{\otimes n}(M)X]\\
    & = \Tr[M\mathcal{T}^{\otimes n}(X)].
\end{align}
Thus, without loss of generality we can restrict $X$ to be in $\text{span}(\{\Pi_+,\Pi_-\}^{\otimes n})$ and can write it in terms of $2^n$ coefficients:
\begin{align}
    X & = \sum_{\textbf{s}\in\mathbb{F}_2^n} x_{\textbf{s}}\pi_{\textbf{s}},
\end{align}
where $\pi_\textbf{s} = \bigotimes_k \frac{1}{2}(\mathbb{1}+(-1)^{s_k}\mathbb{F}^{(k)})$. We now note that $M$ is symmetric under permutations of $A_1B_1, A_2B_2,\ldots A_nB_n$. Thus, it is invariant under twirling with respect to the $n$-party symmetry group $S_n$:
\begin{align}
    \mathcal{T}_{S_n}(X) = \frac{1}{n!}\sum_{\sigma\in S_n}V_\sigma X V_\sigma^\dag.
\end{align}
Here we note that we are working with the natural permutation representation of the symmetry group. That is, $\sigma\in S_n$ is associated to a linear operator $V_\sigma$ such that $V_\sigma (\bigotimes_i \ket{a_i}) = \bigotimes_i \ket{a_{\sigma^{-1}(i)}}$. As $M = \bigotimes_{k=1}^n \Pi_+^{(k)}$, it is clear that $M$ is invariant under this twirling operation as well. As in the case of Haar twirling, this operation can also be easily proven to be self-adjoint. And thus, $\ipa{M}{X}=\ipa{\mathcal{T}_{S_n}(M)}{X} = \ipa{M}{\mathcal{T}_{S_n}(X)}$. From Haar twirling, we have seen that $X$ can be taken to be in $\text{span}\{\Pi_+,\Pi_-\}^{\otimes n}$. The symmetry twirl will take each $\pi_\textbf{s}$ to a symmetrized version:
\begin{align}
    \mathcal{T}_{S_n}(\pi_{\textbf{s}}) & = \frac{1}{n!}\sum_{\substack{\textbf{s'}\in\mathbb{F}_2^n\\ w(\textbf{s})=w(\textbf{s'})}}\pi_{\textbf{s'}}\\
    & =: \frac{1}{n!}\tau_{w(\textbf{s})},
\end{align}
where $w(\textbf{s})$ denotes the Hamming weight of bitstring $\textbf{s}$. Thus, we can now assume, without loss of generality, that $X=\sum_{w=0}^{n}x_w\tau_w$. As both twirls above map positive operators to positive operators, we require that $\mathcal{T}_{S_n}\circ \mathcal{T}^{\otimes n}(X)\geq 0$. Since $\{\tau_w\}$ is an orthogonal set of positive operators, we require $x_w \geq 0$ for all $w$. Next, we consider left and right swap invariance, i.e. $\mathbb{F}^{AB}X=X$ and $X\mathbb{F}^{AB}=X$. This constraint commutes with the action of $\mathcal{T}$ and $\mathcal{T}_{S_n}$ and thus we still require it. Note that $\mathbb{F}\Pi_+=\Pi_+$ and $\mathbb{F}\Pi_-=-\Pi_-$. Thus, $\mathbb{F}^{AB}\pi_{\textbf{s}}=(-1)^{w(\textbf{s})}\pi_{\textbf{s}}$ and $\mathbb{F}^{AB}\tau_w = (-1)^w\tau_w$. From this, it is clear that $x_w=0$ for all odd weights and we need only consider the $\lceil \frac{n+1}{2} \rceil$ terms corresponding to even weights. As $\mathbb{F}$ commutes with $\Pi_+$ and $\Pi_-$, left and right swap invariance yield the same feasible region. Both twirls are trace preserving, yielding the constraint $\Tr[\mathcal{T}_{S_n}\circ \mathcal{T}^{\otimes n}(X)]=1$. $\Tr[\Pi_+] = 3$ and $\Tr[\Pi_-]=1$ thus $\Tr[\tau_w]=\binom{n}{w}3^{n-w}$ and we require that $1=\sum_{w=0}^nx_w\binom{n}{w}3^{n-w}$. We also note that $X$ is clearly Hermitian when $x_w\in\mathbb{R}$. Going forward, we denote the weights by the vector $\textbf{x}$.

Lastly, we consider the positive partial transpose (PPT) constraint. It is straightforward to verify that both twirls map PPT states to PPT states. We find the corresponding constraint by considering the partial transpose of $\Pi_\pm$.
\begin{align}
    \Gamma^B(\Pi_+^{(k)}) & = \frac{1}{2}(\mathbb{1}+\Gamma^B(\mathbb{F}))^{(k)}\\
    & = \frac{1}{2}(\mathbb{1}+\Phi^+)^{(k)}\\
    & = \frac{1}{4}(\Phi^\perp+3\Phi^+)^{(k)},
\end{align}
where $\Phi^+ = \sum_{i,j=0}^1\ketbra{ii}{jj}$ and $\Phi^\perp = 2\mathbb{1}-\Phi^+$. Via similar steps, $\Gamma^B(\Pi_-^{(k)}) = \frac{1}{4}(\Phi^\perp - \Phi^+)^{(k)}$. Thus, $\Gamma^B(X)$ can be written as a summation of terms in $\text{span}\{\Phi^\perp, \Phi^+\}^{\otimes n}$. We will show that this can be represented as a linear transformation $K\textbf{x}$, where $K \in \mathbb{R}^{(n+1)\times \lceil \frac{n+1}{2} \rceil}$. 

Recall that $\tau_w = \text{Perm}\{\Pi_-^{\otimes w}\otimes \Pi_+^{\otimes n-w}\}$, where by "Perm" we indicate a linear combination of all permuted forms. To find a constraint encoding $X^\Gamma \geq 0$ we must find $\tau_w^\Gamma$. Note that
\begin{align}
    \tau_w^{\Gamma}=\frac{1}{4^n}\text{Perm}[(\Phi^\perp-\Phi^+)^{\otimes w}\otimes (\Phi^\perp+3\Phi^+)^{\otimes(n-w)}].\label{Eq:symmetrize-partial-transpose}
\end{align}
It is clear that $(\Phi^\perp-\Phi^+)^{\otimes w}\otimes (\Phi^\perp+3\Phi^+)^{\otimes(n-w)}$ contains every $n$-fold tensor product of $\Phi^\perp$ and $\Phi^+$. Each term in $\text{Perm}[(\Phi^\perp-\Phi^+)^{\otimes w}\otimes (\Phi^\perp+3\Phi^+)^{\otimes(n-w)}]$ will contribute a different number of $\Phi^+$ from $(\Phi^\perp-\Phi^+)^{\otimes w}$ and a remaining number from $(\Phi^\perp+3\Phi^+)^{\otimes(n-w)}$.  Specifically, for each $k=0,\cdots, w$, there will be $\binom{l}{k}\binom{n-l}{w-k}$ terms in $\text{Perm}[(\Phi^\perp-\Phi^+)^{\otimes i}\otimes (\Phi^\perp+3\Phi^+)^{\otimes(n-w)}]$ with exactly $k$ out of $l$ products of $\Phi^+$ coming from $(\Phi^\perp-\Phi^+)^{\otimes w}$ and $l-k$ products of $\Phi^+$ coming from $(\Phi^\perp+3\Phi^+)^{\otimes(n-w)}]$.  Hence, we can write
\begin{align}
    X^\Gamma&=\frac{1}{4^n}\sum_{w \% 2 = 0} x_w\sum_{l=0}^n \sum_{k=0}^w\binom{l}{k}\binom{n-l}{w-k}(-1)^k 3^{l-k}\text{Perm}[(\Phi^+)^{\otimes l}\otimes(\Phi^\perp)^{\otimes n-l}]\notag\\
    &=\frac{1}{4^n}\sum_{l=0}^n\sum_{w \% 2 =0} x_w\left(\sum_{k=0}^w\binom{l}{k}\binom{n-l}{w-k}(-1)^k 3^{l-k}\right)\text{Perm}[(\Phi^+)^{\otimes l}\otimes(\Phi^\perp)^{\otimes n-l}]\notag\\
    &=\frac{1}{4^n}\sum_{l=0}^n\sum_{w \% 2 = 0}x_w3^{l-w} K_{w}(l)\text{Perm}[(\Phi^+)^{\otimes l}\otimes(\Phi^\perp)^{\otimes n-l}],
\end{align}
where
\[K_{w}(l)=\sum_{k=0}^w\binom{l}{k}\binom{n-l}{w-k}(-1)^k 3^{w-k}\]
are the quaternary Krawtchouk polynomials \cite{Roman1992}. Thus, we have $X^\Gamma\geq 0$ iff 
\begin{equation}
    \sum_{w \% 2 = 0} x_w\frac{1}{3^w}K_w(l)\geq 0\,, \qquad\forall l=0,\cdots,n.
\end{equation}
A well-known relation of the Krawtchouk polynomials is
\begin{equation}
    3^l\binom{n}{l}K_w(l)=3^w\binom{n}{w}K_l(w).
\end{equation}
Multiplying both sides of each PPT inequality by $3^l\binom{n}{l}$ allows us to equivalently conclude that $X^\Gamma\geq 0$ iff
\begin{equation}
    \sum_{w \% 2 = 0} x_w\binom{n}{w} K_l(w)\geq 0\,,\qquad\forall l=0,\cdots,n.
\end{equation}
 By defining $y_w := x_w\binom{n}{w}$, the SDP of Eq.~\ref{eq:SDP} can be recast as a LP:
\begin{align}\label{eq:SM_LP}
    \min & \ 3^ny_0\notag\\
    \text{subject to}\;\;& (i)\ \textbf{y} \geq 0,\notag\\
    &(ii)\ Ky \geq 0\,,\notag\\
    &(iii)\ \sum_{w \% 2 = 0} y_w3^{n-w}=1.
\end{align}
Here $K$ is a $(n+1)\times \lceil \frac{n}{2} \rceil$ matrix with elements $K_{i,j} = K_i(2j)$. Note that the equality constraint is simply the last row of $K$. This will become important later. From the constraints, the LP is lower bounded by 0. Further, $\textbf{y} = (\frac{1}{3^n},0,0,\ldots,0)$ is in the feasible region. It is easy to see that this satisfies the positivity and equality constraints. To see that the PPT constraint is satisfied, notice that the first column of $K$ corresponds to $K_i(0)$, which always takes a positive value. Thus, Eq.~\eqref{eq:SM_LP} is bounded and feasible and we expect a solver to find the solution (up to numerical accuracy). Using the optimal value $\textbf{y}^*$, we can upper bound the maximal CE as $\Cmax{n} \leq 1-3^{n}y_0^*$. 

\section{Proof of Proposition 2}
\begin{proposition}\label{prop:SM_haar}
    The Haar average of CE for $s=\mathcal{S}$ is
    \begin{align}\label{eq:SM_haar_av}
        \langle \mathcal{C} \rangle_{\text{Haar}} = 1- 2\frac{3^n}{4^n+2^n},
    \end{align}
    and the Haar variance is
    \begin{align}\label{eq:SM_haar_var}
         \text{var}(\mathcal{C})_{\text{Haar}} = \frac{3^n\cdot4^{1 - n} (2^n - 2\cdot3^n + 4^n)}{(1 + 2^n)^2(6 + 5\cdot2^n + 4^n)}
    \end{align}
\end{proposition}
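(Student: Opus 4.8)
The plan is to use the linearized form established above, $\CE{\psi}=1-f(\psi)$ with $f(\psi):=\Tr[M\dya{\psi}^{\otimes 2}]$ and $M=\bigotimes_{i=1}^{n}\Pi_+^{(i)}$, together with the standard Haar moment formula for pure states, $\int\dya{\psi}^{\otimes k}\,dU=\Pi_{\mathrm{sym}}^{(k)}\big/\binom{d+k-1}{k}$ on $(\mathbb{C}^{d})^{\otimes k}$ with $d=2^{n}$, where $\Pi_{\mathrm{sym}}^{(k)}=\tfrac{1}{k!}\sum_{\pi\in S_{k}}V_{\pi}$ is the projector onto the fully symmetric subspace. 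Since $\mathrm{var}(\CE{\psi})=\mathrm{var}(f)=\langle f^{2}\rangle-\langle f\rangle^{2}$, it suffices to compute the first two moments of $f$, and $\langle f\rangle$ also yields Eq.~\eqref{eq:SM_haar_av}.

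For $\langle f\rangle$ ($k=2$): writing $\Pi_{\mathrm{sym}}^{(2)}=\tfrac12(\id+\mathbb{F})$ with $\mathbb{F}=\prod_{i}\mathbb{F}^{(i)}$ the swap of the two $n$-qubit copies, and using $\Pi_+^{(i)}\mathbb{F}^{(i)}=\Pi_+^{(i)}$ (so $M\mathbb{F}=M$) together with $\Tr[M]=\prod_i\Tr[\Pi_+^{(i)}]=3^{n}$, one gets $\Tr[M\,\Pi_{\mathrm{sym}}^{(2)}]=3^{n}$ and hence $\langle f\rangle=2\cdot3^{n}/(d(d+1))=2\cdot3^{n}/(4^{n}+2^{n})$, that is, $\langle\mathcal{C}\rangle_{\mathrm{Haar}}=1-2\cdot3^{n}/(4^{n}+2^{n})$.

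The substantive step is $\langle f^{2}\rangle$ ($k=4$): since $f^{2}=\Tr[(M\otimes M)\dya{\psi}^{\otimes 4}]$ with the two factors $M$ acting on copies $\{1,2\}$ and $\{3,4\}$ respectively, $\langle f^{2}\rangle=\binom{d+3}{4}^{-1}\tfrac{1}{4!}\sum_{\pi\in S_{4}}\Tr[(M\otimes M)V_{\pi}]$. The key point is that $M\otimes M$ and every $V_{\pi}$ factorize as tensor products over the $n$ qubit lines; on a single line the relevant four-copy operator is the fixed $g:=\tfrac12(\id+\mathbb{F}_{12})\,\tfrac12(\id+\mathbb{F}_{34})$ (product of the symmetric projectors onto copies $\{1,2\}$ and $\{3,4\}$), so $\Tr[(M\otimes M)V_{\pi}]=h(\pi)^{n}$ with $h(\pi):=\Tr_{(\mathbb{C}^{2})^{\otimes4}}[g\,V_{\pi}]$. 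Expanding $g$ and using $\Tr[V_{\sigma}]=2^{c(\sigma)}$, where $c(\sigma)$ is the number of cycles of $\sigma\in S_{4}$, yields $h(\pi)=\tfrac14\sum_{\sigma\in K}2^{c(\sigma\pi)}$ with $K=\{e,(12),(34),(12)(34)\}$ the Klein four-subgroup; hence $h$ is constant on each of the six right cosets $K\pi$, and a short case analysis gives $h=9$ on $K$ itself, $h=3$ on one coset, and $h=9/2$ on the remaining four, so $\sum_{\pi\in S_{4}}\Tr[(M\otimes M)V_{\pi}]=4\cdot9^{n}+16\,(9/2)^{n}+4\cdot3^{n}$.

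It remains to assemble $\mathrm{var}(\mathcal{C})_{\mathrm{Haar}}=\langle f^{2}\rangle-\langle f\rangle^{2}$. Placing both terms over the common denominator $d^{2}(d+1)^{2}(d+2)(d+3)$ and simplifying the numerator with $(d+4)(d+1)-(d+2)(d+3)=-2$ collapses it to $4\cdot3^{n}(d^{2}+d-2\cdot3^{n})$; substituting $d=2^{n}$, writing $4\cdot3^{n}/d^{2}=3^{n}4^{1-n}$, and using $(2+2^{n})(3+2^{n})=6+5\cdot2^{n}+4^{n}$ gives exactly Eq.~\eqref{eq:SM_haar_var}. The main obstacle is the evaluation of $\langle f^{2}\rangle$, namely organizing the $S_{4}$ permutation sum through the qubit-wise tensor factorization and the coset structure of $K\subset S_{4}$; the subsequent algebra is routine, with the case $n=1$ (where $d^{2}+d-2\cdot3^{n}=0$ forces both the mean and the variance to vanish, as they must for a single qubit) serving as a consistency check.
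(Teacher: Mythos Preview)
Your proof is correct and follows essentially the same route as the paper: linearize $\CE{\psi}=1-\Tr[M\dya{\psi}^{\otimes 2}]$, invoke the Haar moment formula $\int\dya{\psi}^{\otimes k}\propto\Pi_{\mathrm{sym}}^{(k)}$, factorize $\Tr[(M\otimes M)V_\pi]$ over qubit lines, and exploit the subgroup $K=\{e,(12),(34),(12)(34)\}$ to reduce the $S_4$ sum to three values. The paper arrives at the identical final expression $4\cdot 9^n+16\cdot(9/2)^n+4\cdot 3^n$ (written there as $4^{-n}(4\cdot 36^n+16\cdot 18^n+4\cdot 12^n)$).

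The one organizational difference worth noting is that you argue constancy of $h(\pi)=\tfrac14\sum_{\sigma\in K}2^{c(\sigma\pi)}$ on \emph{cosets} $K\pi$, which is the sharp statement and immediately yields exactly six blocks of size four. The paper instead invokes constancy on $N$-\emph{conjugacy classes} (with $N=K$); that is a weaker invariance and, taken literally, produces more than three classes, so the paper's labeling of the ``class of $(1234)$'' as the one with per-line trace $12$ is imprecise (in fact $(1234)$ lies in a coset with $h=9/2$; it is $K(13)(24)$ that gives $h=3$). Your coset argument is therefore the cleaner bookkeeping, though both lead to the same answer.
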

\subsubsection{Haar Average}
Consider two copies of an arbitrary pure state $X=\op{\psi}^{\otimes 2}$. Using that $\CE{\psi} = 1-\Tr[M\op{\psi}^{\otimes 2}]$, the Haar average can be computed as
\begin{align}
    \langle \mathcal{C} \rangle_{\text{Haar}} = & \int \Tr[1-M(U\otimes U)X(U^\dagger\otimes U^\dagger)] dU \\
    = & \ 1 - \Tr[M\mathcal{T}(X)]\\
    = & \ 1 - \Tr[M\frac{1}{4^n+2^n}(\mathbb{1}+\mathbb{F})]\\
    = & \ 1- 2\frac{3^n}{4^n+2^n},
\end{align}
where the third equality comes from $X$ being a swap symmetric bipartite state.
\subsubsection{Haar Variance}
To compute the variance, we first find a closed form for the second moment.
\begin{align}
    \langle \mathcal{C}^2 \rangle_{\text{Haar}} = & \int (1-\Tr[M(U\otimes U)X(U^\dag\otimes U^\dag)])^2 dU \notag\\
    = & \ \langle \mathcal{C} \rangle_{\text{Haar}} -\frac{3^n}{4^n+2^n} + \int \Tr[(M\otimes M)(U\otimes U\otimes U\otimes U)X\otimes X (U^\dag\otimes U^\dag\otimes U^\dag\otimes U^\dag)]dU\notag\\
    = & \ \langle \mathcal{C} \rangle_{\text{Haar}} -\frac{3^n}{4^n+2^n} + \Tr[M'\mathcal{T}_4(X')],
\end{align}
where $M':=M\otimes M = \bigotimes_k \Pi_+^{(A_kB_k)}\otimes \Pi_+^{(C_kD_k)}$ and $X':=X\otimes X = \op{\psi}^{\otimes 4}$, where $\ket{\psi}$ is again an arbitrary pure state on $n$ qubits. Here by $\mathcal{T}_4$ we denote a 4-partite twirling operation as seen in the second line. From Schur-Weyl duality $\int U^{\otimes k}X (U^\dagger)^{\otimes k}dU$ is a projection onto the space spanned by representations of the symmetry group $S_k$, i.e. $\int U^{\otimes k}X (U^\dagger)^{\otimes k}dU = \sum_{\sigma \in S_k} \mu_\sigma V_\sigma$ \cite{PhysRevA.63.042111}. Here $V_{S_k}$ is again the natural representation of the symmetry group. Thus, $\mathcal{T}_4(X) = \sum_{\sigma \in S_4}\mu_\sigma V_\sigma$. We now show that these coefficients are all equal to a constant $\alpha$. 
\begin{align}
    V_\phi \mathcal{T}_4(X) & = \mathcal{T}_4(V_\phi X)\\
    & = \mathcal{T}_4(X).
\end{align}
The first line follows from $[V_\phi, U^{\otimes 4}]=0$ and the second from $X$ being invariant under all left/right permutations. Thus, $\sum_{\sigma \in S_4}\mu_\sigma V_\sigma = \sum_{\sigma \in S_4}\mu_\sigma V_{\phi\cdot\sigma}$. As this holds for any $\phi\in S_4$, all coefficients are equal. To maintain normalization, $\alpha = (\sum_{\sigma\in S_4}\Tr[V_\sigma])^{-1}$. For $n$-qubit states, this summation works out to be $2^{4n}+6\cdot2^{3n}+11\cdot2^{2n}+6\cdot2^{n}$, as can be seen by computing the number of different types of permutations and their characters (see Tbl.~\ref{tab:perm}). Thus, we find that the second moment takes the form 
\begin{table}[t]
    \centering
    \begin{tabular}{c|c|c}\label{tab:perm}
        Permutation Type & Number & Character\\
        \hline
         (i)(j)(k)(l) &  1 & $d^4$ \\
         (i)(j)(kl) & 6 & $d^3$\\
         (i)(jkl) & 8 & $d^2$\\
         (ij)(kl) & 3 & $d^2$\\
         (ijkl) & 6 & $d$
    \end{tabular}
    \caption{\textbf{Characters of the natural representation of permutation operators in $S_4$.} Here $d$ denotes the local dimension of the subsystems.}
\end{table}
\begin{align}
    \langle \mathcal{C}^2 \rangle_{\text{Haar}} & =  \langle \mathcal{C} \rangle_{\text{Haar}} -\frac{3^n}{4^n+2^n} + \alpha\sum_{\sigma\in \mathcal{S}_4}\Tr[M'V_\sigma].
\end{align}
We now expand the remaining summation:
\begin{align}
    \Tr[M'V_\sigma] & =  \bigotimes_{i=1}^n \Tr[(\Pi_+^{(A_iB_i)}\otimes \Pi_+^{(C_iD_i)})V_\sigma^{(i)}]\\
    & =  \frac{1}{4^n}\Tr[(\mathbb{1}^{(A_iB_iC_iD_i)}+\mathbb{F}^{(A_iB_i)}\otimes\mathbb{1}^{(C_iD_i)}+\mathbb{1}^{(A_iB_i)}\otimes\mathbb{F}^{(C_iD_i)}+\mathbb{F}^{(A_iB_i)}\otimes\mathbb{F}^{(C_iD_i)})V_\sigma^{(i)}]^n,
\end{align}
where $V_\sigma^{(i)}$ is the permutation operator acting upon parties $A_i\otimes B_i \otimes C_i \otimes D_i$. The terms in the parentheses correspond to the subgroup $N\leq S_4$ generated by
\begin{align}
    N = \langle (), (3)(4)(12), (1)(2)(34) \rangle.
\end{align}
We note here that the conjugacy classes of $S_4$ with respect to $N$ will lead to the same value of the trace above. In a slight abuse of notation we denote $\Tr[\sum_{n\in N}V_nV_\sigma]$ as $\Tr[NV_\sigma]$. For $\sigma, \nu \in S_4$ such that $\sigma = n\nu n^{-1}$, where $n\in N$:
\begin{align}
    \Tr[NV_\sigma] & = \Tr[NV_nV_\nu V_n^\dagger]\\
    & = \Tr[V_n^\dagger NV_nV_\nu]\\
    & = \Tr[NV_\nu].
\end{align}
Thus, the conjugacy classes map to the same value and we need only find them and the number of elements in each class. It is clear that $N$ is one such class, with four elements, and $\Tr[N]=36$ (for qubits). Another conjugacy class is $N(ij)$, where $(ij)\notin N$. For these, it can be verified that $\Tr[N(ij)] = 18$. There are four permutations $(ij)\notin N$ and thus 16 elements in this conjugacy class. Lastly, there is the conjugacy class of $(1234)$, which has four elements and $\Tr[N(1234)] = 12$. Thus, $\sum_{\sigma\in S_4}\Tr[M'V_\sigma] = \frac{1}{4^n}(4\cdot36^n+16\cdot18^n+4\cdot12^n)$, which can be readily verified by computing $\Tr[M'V_\sigma]$ for all 24 permutations in $S_4$. Combining the expressions derived thus far, we arrive at
\begin{align}
    \text{var}(\mathcal{C}) = \frac{3^n\cdot4^{1 - n} (2^n - 2\cdot3^n + 4^n)}{(1 + 2^n)^2(6 + 5\cdot2^n + 4^n)}.
\end{align}

\section{Proof of Proposition 3}
\begin{proposition}
    $\Cmax{n}$ goes to 1 like $\Theta ((\frac{3}{4})^n)$.
\end{proposition}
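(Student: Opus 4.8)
The plan is to show that $1-\Cmax{n}=\Theta\big((3/4)^n\big)$, which is what is meant by "$\Cmax{n}$ goes to $1$ like $\Theta((3/4)^n)$," by sandwiching $1-\Cmax{n}$ between $(3/4)^n$ and $2\,(3/4)^n$. The one input I need is the reformulation recalled in Section~\ref{CE_rev}: for every pure state $\ket{\psi}$ on $n$ qubits,
\begin{align}
1-\CE{\psi}\;=\;\Tr[M\op{\psi}^{\otimes 2}]\;=\;\frac{1}{2^{n}}\sum_{\alpha\subseteq\mathcal{S}}\Tr[\rho_\alpha^{2}],
\end{align}
where $\rho_\alpha$ is the reduced state on the qubits in $\alpha$, with the convention $\Tr[\rho_\emptyset^{2}]=1$.

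\emph{Lower bound on $1-\Cmax{n}$ (i.e.\ the upper bound $\Cmax{n}\le 1-(3/4)^n$).} I would use only the elementary fact that a density operator on a $D$-dimensional space has purity at least $1/D$, hence $\Tr[\rho_\alpha^{2}]\ge 2^{-|\alpha|}$ for each $\alpha$. Summing over all subsets and applying the binomial theorem,
\begin{align}
1-\CE{\psi}\;\ge\;\frac{1}{2^{n}}\sum_{k=0}^{n}\binom{n}{k}2^{-k}\;=\;\frac{(3/2)^{n}}{2^{n}}\;=\;\Big(\tfrac{3}{4}\Big)^{n}.
\end{align}
Minimizing the left side over $\ket{\psi}$ gives $1-\Cmax{n}\ge(3/4)^n$.

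\emph{Upper bound on $1-\Cmax{n}$.} Rather than constructing an explicit family of highly entangled states, I would invoke Proposition~\ref{prop:SM_haar}: since a maximum is at least the corresponding average,
\begin{align}
1-\Cmax{n}\;\le\;1-\langle \mathcal{C} \rangle_{\text{Haar}}\;=\;\frac{2\cdot 3^{n}}{4^{n}+2^{n}}\;\le\;2\Big(\tfrac{3}{4}\Big)^{n}.
\end{align}
Combining the two displays yields $(3/4)^n\le 1-\Cmax{n}\le 2\,(3/4)^n$, i.e.\ $1-\Cmax{n}=\Theta\big((3/4)^n\big)$.

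The only step that requires a moment's thought is the upper bound. The naive approach is to exhibit states whose reductions are all (close to) maximally mixed --- absolutely maximally entangled states --- but these fail to exist for some $n$ (e.g.\ $n=4$), so that route is not available uniformly. The observation that circumvents this is that the Haar average of $1-\mathcal{C}$, already computed in Proposition~\ref{prop:SM_haar}, is itself $O\big((3/4)^n\big)$, so the maximizer is automatically at least this good. Alternatively, one could extract the same $O\big((3/4)^n\big)$ upper bound by analyzing the optimum of the linear program in Section~\ref{sec;LP}, but the Haar route is shorter. I would then just verify the two constant-factor estimates ($(3/2)^n/2^n=(3/4)^n$ and $2\cdot 3^n/(4^n+2^n)\le 2(3/4)^n$), which are routine.
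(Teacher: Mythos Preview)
Your proof is correct. The upper bound on $1-\Cmax{n}$ via the Haar average is exactly the paper's argument. For the inequality $1-\CE{\psi}\ge(3/4)^n$, however, you take a different and more elementary route: you bound each purity directly by $\Tr[\rho_\alpha^2]\ge 2^{-|\alpha|}$ and sum with the binomial theorem. The paper instead decomposes the symmetric projector as $\Pi_+^{(k)}=\sum_{t_k=0}^{2}\op{\varphi_{t_k}}^{\otimes 2}$ to rewrite $1-\CE{\psi}=\sum_{\mathbf t\in\mathbb Z_3^n}|\langle\psi|\varphi_{\mathbf t}\rangle|^4$, and then applies the power-mean (Cauchy--Schwarz) inequality $\sum_i x_i^2\ge N^{-1}(\sum_i x_i)^2$ together with $\sum_{t}\op{\varphi_t}=\tfrac{3}{2}\mathbb 1$. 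Your argument is shorter and uses nothing beyond the purity lower bound for a $D$-dimensional state; the paper's version has the side benefit of expressing CE through overlaps with a product-state frame, which it reuses later for the geometric-measure bound. Either way the conclusion $(3/4)^n\le 1-\Cmax{n}\le 2(3/4)^n$ follows.
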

\begin{proof}
    From Eq.~\ref{eq:SM_haar_av} we know that there must exist a state such that $1-\CE{\psi} \in  O((\frac{3}{4})^n)$. We prove below that $\Cmax{n} \leq 1 -(\frac{3}{4})^n$, which yields that $1-\Cmax{n} \in \Omega((\frac{3}{4})^n)$. Thus, the claim is proven.
\end{proof}

\begin{proposition*}
    We note that the following upper bound was given in \cite{cullen2022calculating} as this manuscript was being compiled. We give an alternate proof below. For any system size $n$, the maximum CE can be upper bounded with
    \begin{align}
        \Cmax{n} \leq 1-(\frac{3}{4})^n.
    \end{align}
    \label{prop:asym_bound}
\end{proposition*}
\begin{proof}
    Recall that $\CE{\psi} = 1-\Tr[M\dya{\psi}^{\otimes 2}]$. Now note that $\Pi_+^{(k)}$ is a separable operator, and therefore admits a decomposition into product projectors. For two qubits one finds that
\begin{align}
    \Pi_+^{(k)}=\sum_{t_k=0}^2\op{\varphi_{t_k}}^{\otimes 2},
\end{align}
where $\ket{\varphi_{t_k}}=\frac{1}{\sqrt{2}}(\ket{0}+e^{i2\pi t_k/3}\ket{1})$.  Hence, the CE of $n$-qubit pure state $\ket{\psi}$ is given by
\begin{align}
\label{Eq:CE-quartic}
    \CE{\psi}=1-\sum_{\mathbf{t}\in\mathbb{Z}_3^n}|\bra{\psi}\phi_{\mathbf{t}}\rangle|^4,
\end{align}
where $\ket{\varphi_{\mathbf{t}}}=\ket{\varphi_{t_1}}\ket{\varphi_{t_2}}\cdots\ket{\varphi_{t_n}}$
for $\mathbf{t}=(t_1,t_2,\cdots,t_n)$ with $t_k\in\mathbb{Z}_3=\{0,1,2\}$.


Using the general inequality $\sum_{i=1}^nx_{i}^2\geq \frac{1}{n}\sqrt{\sum_{i=1}^n|x_i|}$, we have
\begin{align}
    \sum_{\mathbf{t}\in\mathbb{Z}_3^{n}}|\bra{\alpha,\beta}\phi_{\mathbf{t}}\rangle|^4&\geq\frac{1}{3^n}\left(\sum_{\mathbf{t}\in\mathbb{Z}_3^{n}}|\bra{\alpha,\beta}\phi_{\mathbf{t}}\rangle|^2\right)^2\notag\\
    &=\frac{1}{3^n}\left(\frac{3}{2}\right)^{2n}=\left(\frac{3}{4}\right)^n,
\end{align}
where we have used the fact that $\sum_{t_k=0}^2\op{\varphi_{t_k}}=\frac{3}{2}\mathbb{1}$. From this the proposition immediately follows.
\end{proof}



\section{Proof of Proposition 4 and Corollary 1}
To prove the proposition we will need the following Lemma:
\begin{lemma}\label{lem:prodprob}
    For a state that can be written as a product state with respect to a partition $P$ of subsystems, i.e. $\ket{\psi} = \otimes_{\alpha \in P}\ket{\psi_\alpha}$ the probability of measuring bitstring $p(\textbf{z})$ in the cSWAP test decomposes as 
    \begin{align}
        p(\textbf{z}) = \prod_{\alpha \in P}p(\textbf{z}_\alpha),
    \end{align}
    where $\textbf{z}_{\alpha}$ is the substring restricted only to parties in $\alpha$ and $p(\textbf{z}_\alpha)$ is the probability of measuring this bitstring when performing the cSWAP test with $\ket{\psi_\alpha}$.
\end{lemma}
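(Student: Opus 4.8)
The plan is to work entirely with the linearized reformulation of the cSWAP test recorded in Section~\ref{CE_rev}, namely
\begin{align}
    p(\mathbf{z}) = \Tr\!\left[\bigotimes_{i} \tfrac{1}{2}\big(\id + (-1)^{z_i}\mathbb{F}\big)^{(i)}\;\dya{\psi}^{\otimes 2}\right],
\end{align}
where the $i$-th factor acts on the pair of registers $(A_i, B_i)$. The whole argument is then a bookkeeping exercise about how both the state and this measurement operator factor across the partition $P$.

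First I would fix, once and for all, an ordering of the $2n$ registers that groups the two copies of each block together: for $\alpha \in P$, put the registers $\{A_i, B_i : i \in \alpha\}$ adjacent. Under this ordering, $\dya{\psi}^{\otimes 2} = \big(\bigotimes_{\alpha\in P}\dya{\psi_\alpha}\big)^{\otimes 2}$ rearranges (by a fixed permutation of tensor factors, which does not affect the trace) to $\bigotimes_{\alpha \in P} \dya{\psi_\alpha}^{\otimes 2}$, with the $\alpha$-th factor living on the registers $\{A_i,B_i : i\in\alpha\}$. Second, the measurement operator factors the same way: since $\tfrac12(\id+(-1)^{z_i}\mathbb{F})^{(i)}$ acts only on $(A_i,B_i)$, we may regroup
\begin{align}
    \bigotimes_{i}\tfrac{1}{2}\big(\id + (-1)^{z_i}\mathbb{F}\big)^{(i)}
    = \bigotimes_{\alpha\in P}\left(\bigotimes_{i\in\alpha}\tfrac{1}{2}\big(\id + (-1)^{z_i}\mathbb{F}\big)^{(i)}\right),
\end{align}
again with the $\alpha$-th factor supported on $\{A_i,B_i : i\in\alpha\}$, exactly the registers carrying $\dya{\psi_\alpha}^{\otimes 2}$.

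Third, because the supports match block-by-block, the operator inside the trace is a genuine tensor product across $P$, and $\Tr$ is multiplicative over tensor products, so
\begin{align}
    p(\mathbf{z}) = \prod_{\alpha\in P}\Tr\!\left[\bigotimes_{i\in\alpha}\tfrac{1}{2}\big(\id + (-1)^{z_i}\mathbb{F}\big)^{(i)}\;\dya{\psi_\alpha}^{\otimes 2}\right] = \prod_{\alpha\in P} p(\mathbf{z}_\alpha),
\end{align}
where the last equality is just the same linearized formula applied to $\ket{\psi_\alpha}$ and the substring $\mathbf{z}_\alpha$. The only thing requiring care is the regrouping/permutation of tensor factors in the first two steps, and the observation that it is a single permutation applied consistently to state and measurement operator so that cyclicity/invariance of the trace under it is unambiguous; I expect this notational alignment, rather than anything mathematical, to be the main (minor) obstacle.
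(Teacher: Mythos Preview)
Your proposal is correct and follows essentially the same approach as the paper: write $p(\mathbf{z})$ via the linearized formula with the projectors $\tfrac{1}{2}(\id+(-1)^{z_i}\mathbb{F})^{(i)}$, observe that both the state $\dya{\psi}^{\otimes 2}$ and the measurement operator factor across the partition with matching supports, and then use multiplicativity of the trace. The only cosmetic difference is that the paper proves the bipartition case explicitly and then appeals to recursion for a general partition $P$, whereas you handle the full partition in one step; your extra care about the register permutation is exactly the bookkeeping the paper leaves implicit.
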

\begin{proof}
    We prove the case where $P$ is a bipartition $A | B$. The general claim then follows recursively. By assumption, $\ket{\psi} = \ket{\psi_A}\ket{\psi_{B}}$. For a bistring $z$ we define $\Pi_A$ as $\otimes_{i\in A}\frac{1}{2}(\id+(-1)^{z_i}\mathbb{F})^{(i)}$.
    \begin{align}
        p(\textbf{z}) & = \Tr[\otimes_i\frac{1}{2}(\id+(-1)^{z_i}\mathbb{F})^{(i)}(\ket{\psi_A}\bra{\psi_A}\otimes\ket{\psi_{B}}\bra{\psi_{B}})^{\otimes 2}]\\
        & = \Tr([\Pi_A\otimes\Pi_{B})(\ket{\psi_A}\bra{\psi_A}\otimes\ket{\psi_B}\bra{\psi_B})^{\otimes 2}]\\
        & = \Tr([\Pi_A\ket{\psi_A}\bra{\psi_A}^{\otimes 2}]\Tr([\Pi_B\ket{\psi_B}\bra{\psi_B}^{\otimes 2}]\\
        & = p(\textbf{z}_A)p(\textbf{z}_B).
    \end{align}
\end{proof}
Note that this proof did not use the assumption of pure states anywhere and thus holds for separable mixed states as well.
\begin{proposition}
    Given a measurement of a bitstring $\textbf{z}$ in the cSWAP test, the state cannot be a product with respect to any partition such that the Hamming weight of the substring of $\textbf{z}$ restricted to any set in the partition is odd.
\end{proposition}
\begin{proof}
    In \cite{Beckey2021} it is shown that $p(\textbf{z})=0$ if $w(\textbf{z})$ is odd. Thus, running the SWAP test on a product state will result in $p(\textbf{z}_\alpha) = 0$ if $\textbf{z}_\alpha$ has odd weight. By Lemma~\ref{lem:prodprob} this implies $p(\textbf{z}) = 0$.
\end{proof}
\begin{corollary}
    A measurement of bitstring $\textbf{z}\neq \textbf{0}$ implies that $\ket{\psi}$ cannot be a product state with respect to half of all possible bipartitions. Further, measuring $k$ linearly independent bitstrings implies that a state could only be biseparable with respect to $2^{n-1-k}$ bipartitions.
\end{corollary}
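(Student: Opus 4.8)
The plan is to turn both claims into a short computation with $\mathbb{F}_2$-linear functionals. Write $\chi_A\in\mathbb{F}_2^n$ for the indicator vector of a subset $A\subseteq\{1,\dots,n\}$, and for a bitstring $\mathbf{z}$ let $\phi_{\mathbf{z}}(v):=\langle\mathbf{z},v\rangle\bmod 2$, a linear functional on $\mathbb{F}_2^n$. The first thing I would record is the following specialisation of the preceding Proposition to \emph{bipartitions}: since any measured outcome $\mathbf{z}$ satisfies $w(\mathbf{z})\equiv 0\pmod 2$ (recalled above from \cite{Beckey2021}), for a bipartition $\{A,B\}$ we have $w(\mathbf{z}_A)+w(\mathbf{z}_B)=w(\mathbf{z})\equiv 0$, so $w(\mathbf{z}_A)$ and $w(\mathbf{z}_B)$ have the same parity. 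Hence the obstruction of the preceding Proposition --- that \emph{some} block carry an odd-weight substring --- is, for bipartitions, equivalent to the single condition $w(\mathbf{z}_A)\equiv 1$, i.e. $\phi_{\mathbf{z}}(\chi_A)=1$. In other words, $\ket{\psi}$ can be a product across $A|B$ only if $\chi_A\in\ker\phi_{\mathbf{z}}$.

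Next I would handle the bookkeeping between subsets and bipartitions. Identify a bipartition with the unordered pair $\{\chi_A,\chi_{A^c}\}=\{\chi_A,\mathbf{1}+\chi_A\}$; because $\phi_{\mathbf{z}}(\mathbf{1})=w(\mathbf{z})\equiv 0$, the functional $\phi_{\mathbf{z}}$ takes the same value on $\chi_A$ and $\chi_{A^c}$, so the condition ``$\chi_A\in\ker\phi_{\mathbf{z}}$'' is well defined on bipartitions (we include here the trivial pair $\{\emptyset,\{1,\dots,n\}\}$, so that there are $2^{n-1}$ such pairs in total). For $\mathbf{z}\neq\mathbf{0}$ the functional $\phi_{\mathbf{z}}$ is nonzero, so $\ker\phi_{\mathbf{z}}$ and its complement each contain $2^{n-1}$ vectors; each of these sets is a union of complementation-pairs, giving $2^{n-2}$ pairs on which product structure is possible and $2^{n-2}$ on which it is excluded. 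This is the first assertion.

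For the second assertion, let $\mathbf{z}^{(1)},\dots,\mathbf{z}^{(k)}$ be measured outcomes that are linearly independent over $\mathbb{F}_2$; note each is then nonzero and of even weight. Applying the reduction above to each of them, $\ket{\psi}$ can be a product across $A|B$ only if $\phi_{\mathbf{z}^{(j)}}(\chi_A)=0$ for every $j$, i.e. only if $\chi_A\in W:=\bigcap_{j=1}^{k}\ker\phi_{\mathbf{z}^{(j)}}$. The matrix with rows $\mathbf{z}^{(1)},\dots,\mathbf{z}^{(k)}$ has rank $k$, so $\dim W=n-k$ and $|W|=2^{n-k}$; and since $\phi_{\mathbf{z}^{(j)}}(\mathbf{1})=0$ for all $j$, $W$ is closed under complementation, so its elements form $2^{n-1-k}$ bipartitions (counting the trivial one). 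Hence $\ket{\psi}$ can be biseparable with respect to at most $2^{n-1-k}$ bipartitions, as claimed. As a sanity check, $k$ linearly independent even-weight vectors force $k\le n-1$, and at $k=n-1$ one gets $2^{n-1-k}=1$, so only the trivial pair survives and $\ket{\psi}$ must be genuinely multipartite entangled.

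The argument is routine once this framing is in place; the only point needing care is the bipartition bookkeeping --- checking that $\phi_{\mathbf{z}}$ descends to the quotient of $\mathbb{F}_2^n$ by complementation (which is exactly where the even weight of a measured outcome enters) and keeping the powers of two consistent between subsets, complementation-classes, and genuine bipartitions. I do not anticipate any obstacle beyond this.
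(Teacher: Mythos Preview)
Your proof is correct and follows essentially the same approach as the paper: encode bipartitions as indicator vectors in $\mathbb{F}_2^n$ modulo complementation, observe that the surviving bipartitions lie in the joint kernel of the linear functionals $\langle\mathbf{z}^{(j)},\cdot\rangle$, and count via rank--nullity. If anything, your write-up is more careful than the paper's: you make explicit that the even Hamming weight of any measured $\mathbf{z}$ is precisely what guarantees $\mathbf{1}\in\ker\phi_{\mathbf{z}}$, so that the functional descends to the complementation quotient and the dimension drops from $n-k$ to $n-1-k$.
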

\begin{proof}
    We prove the second claim as it immediately implies the first. A bipartition $A| B$ can be represented as a binary string $a\in Z_2^n$ where $a_i=1$ implies that party $i$ is in the set $A$. The set of bipartitions still possible after measuring bistring $\textbf{z}$ is simply $\{\textbf{a}\in Z_2^n: \langle \textbf{a}, \textbf{z} \rangle =0 \}$. Note that the underlying vector space here is of dimension $n-1$ not $n$ as $\overline{\textbf{a}}$ corresponds to the same bipartition as $\textbf{a}$. The $k$ bitstrings can be arranged into a matrix $\begin{pmatrix} \leftarrow \textbf{z}_1 \rightarrow\\
    \vdots\\
    \leftarrow \textbf{z}_k \rightarrow
    \end{pmatrix}$. Then, the set of possible bipartitions is simply the null space, which via the rank-nullity theorem is of dimension $n-1-k$. The linear span of $n-1-k$ binary strings contains $2^{n-1-k}$ strings, each corresponding to a bipartition.

    
    

\end{proof}

\section{Connections to Other Measures}\label{sect:othermeasures}

\subsection{Proof of Proposition 5 and Corollary 2}
\begin{proposition}\label{prop:SM_cp}
    For a state such that $\text{rk}(\ket{\psi})=R,$
    \begin{align}\label{eq:cp}
        \CE{\psi} \leq 1-\frac{1}{2^n}[\sum_{k=0^n}\binom{n}{k}\max(\frac{1}{R},\frac{1}{2^{\min(k,n-k)}})]
    \end{align}
\end{proposition}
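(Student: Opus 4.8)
The plan is to bound $\CE{\psi}$ from above by bounding each marginal purity in the identity $\CE{\psi} = 1 - \frac{1}{2^n}\sum_{\alpha \in Q}\Tr[\rho_\alpha^2]$ from below, and then to collect the terms according to $|\alpha|$. The single elementary ingredient is that a density operator whose rank is $r$ has purity at least $1/r$: by Cauchy--Schwarz, $1 = (\Tr\rho_\alpha)^2 \le r\,\Tr[\rho_\alpha^2]$. So everything reduces to upper-bounding $\text{rank}(\rho_\alpha)$, and the $\max$ appearing in the statement is exactly the better of two such rank bounds applied to each $\alpha$.

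The first bound is the Schmidt rank across the cut $\alpha \,|\, \bar\alpha$: since $\rho_\alpha$ and $\rho_{\bar\alpha}$ share the same nonzero spectrum, $\text{rank}(\rho_\alpha) \le \min(2^{|\alpha|}, 2^{n-|\alpha|}) = 2^{\min(|\alpha|,\,n-|\alpha|)}$, giving $\Tr[\rho_\alpha^2] \ge 2^{-\min(|\alpha|,\,n-|\alpha|)}$. The second uses the hypothesis $\text{rk}(\ket\psi) = R$: fixing a decomposition $\ket\psi = \sum_{j=1}^R \bigotimes_{i=1}^n \ket{a_j^{(i)}}$ into $R$ product terms and tracing out $\bar\alpha$, one sees that $\rho_\alpha$ is supported on $\text{span}\{\bigotimes_{i\in\alpha}\ket{a_j^{(i)}} : 1 \le j \le R\}$, a space of dimension at most $R$; hence $\text{rank}(\rho_\alpha) \le R$ and $\Tr[\rho_\alpha^2] \ge 1/R$. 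Taking the larger of the two lower bounds yields $\Tr[\rho_\alpha^2] \ge \max\!\big(1/R,\ 2^{-\min(|\alpha|,\,n-|\alpha|)}\big)$ for every $\alpha \in Q$.

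To finish I would group the $2^n$ subsets by cardinality: there are $\binom{n}{k}$ subsets $\alpha$ with $|\alpha| = k$, each obeying the same bound, so $\sum_{\alpha\in Q}\Tr[\rho_\alpha^2] \ge \sum_{k=0}^{n}\binom{n}{k}\max\!\big(1/R,\ 2^{-\min(k,\,n-k)}\big)$, and substituting into the definition of $\CE{\psi}$ gives the proposition. It is worth checking the extreme terms for consistency: for $k = 0$ and $k = n$ the marginal is scalar, respectively pure, with purity $1$, which matches $2^{-\min(k,n-k)} = 2^0 = 1$ together with $\max(1/R,1) = 1$ whenever $R \ge 1$. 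I do not anticipate a real obstacle; the only step needing a word of care is the product-decomposition bound, namely that the local vectors $\ket{a_j^{(i)}}$ in a tensor-rank decomposition need be neither orthogonal nor normalized --- which is harmless, since only the dimension of the span of the restricted product vectors controls $\text{rank}(\rho_\alpha)$.
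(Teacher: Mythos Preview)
Your proposal is correct and follows essentially the same argument as the paper: it isolates as a lemma that CP rank $R$ forces every reduced density matrix to have matrix rank at most $R$, combines this with the trivial Schmidt-rank bound $2^{\min(k,n-k)}$, converts rank bounds to purity bounds via $\Tr[\rho_\alpha^2]\ge 1/\mathrm{rank}(\rho_\alpha)$, and then groups by $|\alpha|=k$. Your write-up is in fact a bit more explicit than the paper's (e.g.\ invoking Cauchy--Schwarz and noting the local vectors need not be orthonormal), but the route is the same.
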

First, we prove the following lemma:
\begin{lemma}
    For a state with CP rank R, no reduced density matrix can have matrix rank greater than R.
\end{lemma}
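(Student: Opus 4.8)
The plan is to unpack the definition of CP (canonical polyadic) rank and then observe that it directly controls the Schmidt rank across every bipartition. Recall that $\text{rk}(\ket{\psi})=R$ means there exist single-qubit vectors $\ket{v_i^{(j)}}$ (not necessarily orthogonal or normalized), $i=1,\dots,R$, $j=1,\dots,n$, with
\begin{align}
    \ket{\psi}=\sum_{i=1}^R\ket{v_i^{(1)}}\otimes\ket{v_i^{(2)}}\otimes\cdots\otimes\ket{v_i^{(n)}},
\end{align}
and $R$ the least integer admitting such a decomposition. For the lemma only the existence of \emph{some} length-$R$ product decomposition matters, not its minimality.

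The key step is to fix a subsystem $\alpha\subseteq\{1,\dots,n\}$, set $\bar\alpha=\{1,\dots,n\}\setminus\alpha$, and regroup each rank-one term according to whether the qubit label lies in $\alpha$ or $\bar\alpha$: writing $\ket{w_i^\alpha}:=\bigotimes_{j\in\alpha}\ket{v_i^{(j)}}$ and $\ket{w_i^{\bar\alpha}}:=\bigotimes_{j\in\bar\alpha}\ket{v_i^{(j)}}$, the decomposition becomes
\begin{align}
    \ket{\psi}=\sum_{i=1}^R\ket{w_i^\alpha}\otimes\ket{w_i^{\bar\alpha}}.
\end{align}
Hence $\ket{\psi}\in V\otimes\mathcal{H}_{\bar\alpha}$ with $V:=\text{span}\{\ket{w_i^\alpha}:i=1,\dots,R\}\subseteq\mathcal{H}_\alpha$ and $\dim V\leq R$, so $\op{\psi}$ is supported in $V\otimes\mathcal{H}_{\bar\alpha}$ and tracing out $\bar\alpha$ leaves $\rho_\alpha$ supported on $V$. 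Therefore $\text{rank}(\rho_\alpha)\leq\dim V\leq R$; since $\rho_{\bar\alpha}$ has the same rank as $\rho_\alpha$ and $\alpha$ was arbitrary, every reduced density matrix has rank at most $R$. Equivalently, one may simply invoke the standard fact that the number of terms in any product-state expansion across a cut upper bounds the Schmidt rank, which coincides with $\text{rank}(\rho_\alpha)$.

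I expect no serious obstacle here: the content is the elementary remark that a CP decomposition of $\ket{\psi}$ restricts, across each bipartition, to a (generally non-Schmidt) product expansion with the same number of terms. The only points needing care are pinning down the CP-rank convention and noting that the bound holds uniformly over all $\binom{n}{|\alpha|}$ choices of $\alpha$ — precisely what is required to substitute $\text{rank}(\rho_\alpha)\leq R$ into $\Tr[\rho_\alpha^2]\geq 1/\text{rank}(\rho_\alpha)\geq 1/R$ and sum over subsystems to obtain Proposition~\ref{prop:SM_cp}.
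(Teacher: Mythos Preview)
Your proof is correct and follows essentially the same approach as the paper: both write out the CP decomposition, regroup the tensor factors along an arbitrary bipartition $\alpha\mid\bar\alpha$, and conclude that $\rho_\alpha$ is supported on the span of the $R$ vectors $\bigotimes_{j\in\alpha}\ket{v_i^{(j)}}$. Your version is somewhat more explicit about the supporting subspace $V$ and the connection to Schmidt rank, but the argument is the same.
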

\begin{proof}
    \begin{align}
        \ket{\psi} & = \sum_{i=1}^R\mu_i\otimes_{j=1}^n\ket{\phi_i^{(j)}}
    \end{align}
    Via linearity, the support of any reduced density matrix $\rho_A$ of $\ket{\psi}$ will be in the span of the corresponding $\otimes_{i\in A}\ket{\phi_i^{(j)}}$'s. As there are R of these terms, the support and thus matrix rank of $\rho_A$ is at most R.
\end{proof}
With this lemma we can readily prove Prop.~\ref{prop:SM_cp}. For a partition A|$\overline{\text{A}}$, $\rho_A$ can have matrix rank up to $2^{\min(|A|,n-|A|)}$. Thus, for partitions of size $2^{\min{|A|,n-|A|}}> R$, $\Tr[\rho_A^2]\geq \frac{1}{R}$. The other terms in the proposition come from assuming all other reduced density matrices are maximally mixed.
\begin{corollary}
    As $n\rightarrow\infty$, a state of CP rank R on n qubits has CE less than $\frac{R-1}{R}$.
\end{corollary}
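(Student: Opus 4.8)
The plan is to take the limit $n\to\infty$ directly in the bound of Proposition~\ref{prop:SM_cp}, namely
\begin{align}
    \CE{\psi} \le 1 - \frac{1}{2^n}\sum_{k=0}^n \binom{n}{k}\max\!\left(\frac{1}{R},\frac{1}{2^{\min(k,n-k)}}\right),
\end{align}
and show that the subtracted sum tends to $1/R$, so the right-hand side tends to $1 - 1/R = (R-1)/R$. First I would split the sum over $k$ into the ``middle'' indices, where $\min(k,n-k) \ge \lceil \log_2 R\rceil$ (so that $2^{\min(k,n-k)} \ge R$ and the max equals $1/R$), and the ``tail'' indices near $0$ and $n$, where the max equals $1/2^{\min(k,n-k)}$. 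For the middle block, the contribution is $\frac{1}{R}\cdot\frac{1}{2^n}\sum_{k \text{ middle}}\binom{n}{k}$, and since only $O(\log R)$ values of $k$ are excluded from each tail, $\frac{1}{2^n}\sum_{k\text{ middle}}\binom{n}{k} \to 1$ as $n\to\infty$ (the excluded binomial coefficients $\binom{n}{j}$ for fixed small $j$ are polynomial in $n$, hence negligible against $2^n$). So the middle block contributes $1/R$ in the limit.

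Next I would show the tail contributes nothing in the limit. The tail sum is $\frac{1}{2^n}\sum_{k < \lceil\log_2 R\rceil} \binom{n}{k}\frac{1}{2^k} \cdot 2$ (by the $k \leftrightarrow n-k$ symmetry, up to the double-counting bookkeeping), and each term is at most $\frac{1}{2^n}\binom{n}{k} \le \frac{n^k}{2^n k!}$, which vanishes as $n\to\infty$ for each fixed $k < \lceil\log_2 R\rceil$; since there are only finitely many such $k$ (with $R$ fixed), the whole tail vanishes. Combining, the subtracted quantity converges to $1/R$, giving $\CE{\psi} \le 1 - 1/R + o(1) = \frac{R-1}{R} + o(1)$, which is the claimed asymptotic statement.

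I do not anticipate a serious obstacle here — the argument is an elementary asymptotic estimate on binomial sums, and the only mild subtlety is being careful about the $\min(k,n-k)$ symmetry and the exact threshold $\lceil\log_2 R\rceil$ where the $\max$ switches branches, plus noting that the equality constraint $\frac{1}{2^n}\sum_k \binom{n}{k} = 1$ anchors the middle block. One should also remark that this is genuinely an upper bound statement (the corollary says CE is ``less than $\frac{R-1}{R}$'' asymptotically), so strictly it suffices to exhibit the limit of the upper bound; if one wants strict inequality for finite $n$ that follows since the tail terms are strictly positive whenever $R \ge 2$.
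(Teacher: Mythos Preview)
Your proof is correct and follows essentially the same approach as the paper's: both split the sum in Proposition~\ref{prop:SM_cp} into the finitely many tail indices where $2^{\min(k,n-k)} < R$ (whose total contribution vanishes since $2^{-n}\binom{n}{k}\to 0$ for each fixed $k$) and the middle block, which contributes $1/R$ in the limit. Your write-up is in fact a bit cleaner --- you identify the explicit threshold $\lceil\log_2 R\rceil$ and use the elementary bound $\binom{n}{k}\le n^k/k!$, whereas the paper reaches the same conclusion via a Stirling-type estimate.
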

\begin{proof}
    The corollary follows from assuming $n \gg k$. In this regime, $\binom{n}{k}\approx (\frac{ne}{k})^k(2\pi k)^{-\frac{1}{2}}e^{-\frac{k^2}{2n}(1+O(1))} < (\frac{ne}{k})^k$. Thus, $2^{-n}\binom{n}{k} < 2^{-n}(\frac{ne}{k})^k \rightarrow{} 0$ as $n\rightarrow\infty$. As there are a finite number of terms involving $\binom{n}{h}$ for $h\leq k$, these all are dampened, in the average, to 0 asymptotically. As there are infinitely many terms with corresponding purity $\frac{1}{R}$, Eq.~\eqref{eq:cp} goes to $1-\frac{1}{R}=\frac{R-1}{R}$ as $n\rightarrow \infty$.
\end{proof}
Lastly, we will show that this upper bound is asymptotically tight, as is demonstrated by the family of states $\ket{\psi_R^n} = c\sum_{k=0}^{R-1}R_y(\frac{2\pi k}{R})^{\otimes n}\ket{0^{\otimes n}}$ achieving this value in limit.
    
    We break the proof that this bound is tight into two parts. First, we show that $\CE{\psi_R^n}\rightarrow\frac{R-1}{R}$ as $n\rightarrow\infty$. Second, we show that this implies that the CP rank of these states is actually $R$ (for $n$ large enough).
    
    Note that $\bra{\textbf{0}}R_y^\dagger(\frac{2\pi b}{R})^{\otimes n}R_y(\frac{2\pi a}{R})^{\otimes n}\ket{\textbf{0}} = \cos^n\frac{\pi(a-b)}{R}$. For arbitrary $n$, consider a reduced density matrix on half the system, $\rho_{1,\ldots,\lfloor n/2 \rfloor}$. 
    \begin{align}
        \rho_{1,\ldots,\lfloor n/2 \rfloor} = c^2\sum_{a,b=0}^{R}\cos^n\frac{\pi(a-b)}{R}R_y(\frac{2\pi a}{R})^{\otimes n}\dya{0}^{\otimes n}R_y^\dagger(\frac{2\pi b}{R})^{\otimes n}.
    \end{align}
    The trace of the terms in the summation clearly goes to $\frac{1}{R}$ and thus the normalization constant $c$ also goes to $R^{-\frac{1}{2}}$.
    \begin{align}
    \rho_{1,\ldots,\lfloor n/2 \rfloor}^2 = c^4\sum_{a,b,c,d=0}^R\cos^n\frac{\pi(a-b)}{R}\cos^n\frac{\pi(b-c)}{R}\cos^n\frac{\pi(c-d)}{R}R_y(\frac{2\pi a}{R})^{\otimes n}\dya{0}^{\otimes n}R_y^\dagger(\frac{2\pi d}{R})^{\otimes n}.
    \end{align}
    Thus, as $n\rightarrow\infty$, $\rho_{1,\ldots,\lfloor n/2 \rfloor}^2\rightarrow c^4\sum_{a=0}^R R_y(\frac{2\pi a}{R})^{\otimes n}\dya{0}^{\otimes n}R_y^\dagger(\frac{2\pi a}{R})^{\otimes n}$ and $\Tr[\rho_{1,\ldots,\lfloor n/2 \rfloor}^2]\rightarrow \frac{1}{R}$. By symmetry, thus all reduced density matrices on half the system have purity going to $\frac{1}{R}$.
    
    Now we view $1-\CE{\psi} = 2^{-n}\sum_{\alpha\in Q}\Tr[\rho_\alpha^2]$ as the average of a sequence of $2^n$ terms $a_k$. By ordering this sequence as the $\binom{n}{0}$ and $\binom{n}{n}$ terms, followed by the $\binom{n}{1}$ and $\binom{n}{n-1}$ terms, etc, it is clear that $a_n\rightarrow\frac{1}{R}$. Thus, via the Cesaro mean theorem, the average of the sequence of purities goes to $\frac{1}{R}$ and $\CE{\psi}\rightarrow\frac{R-1}{R}$.
    
    We conclude the proof by showing that these states truly have CP rank $R$. Clearly, they have CP rank at most $R$. Further, for any state $\ket{\phi}$ of rank $R'$ on $n<\infty$ qubits, $\CE{\phi} < \frac{R'-1}{R'}$. As $\CE{\psi_R^n}\rightarrow \frac{R-1}{R}$ and is upper bounded by $\frac{R-1}{R}$ for any finite $n$, for all $R' < R$, there is some $n_{R'}$ such that $\CE{\psi_R^{n_R'}} > \frac{R'-1}{R'}$. Thus, $\text{rnk}(\ket{\psi_R^n}) > R'$ for any $n\geq n_{R'}$ and $R' < R$ via Prop.~\ref{prop:SM_cp}.

\subsection{Bounds on Geometric Measure of Entanglement}
Here we will require two lemmas:

    \begin{lemma}\label{thm:ps_dtr}
        For pure states $\ket{\psi},\ket{\phi}$,
        \begin{align}
            \sqrt{1-|\ip{\phi}{\psi}|^k} \leq \sqrt{k}\sqrt{1-|\ip{\phi}{\psi}|}.
        \end{align}
    \end{lemma}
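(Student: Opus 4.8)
The plan is to reduce the statement to an elementary one-variable inequality. Write $x := |\ip{\phi}{\psi}|$ and observe that, since $\ket{\psi},\ket{\phi}$ are normalized pure states, the Cauchy--Schwarz inequality gives $x\in[0,1]$. Both sides of the claimed inequality are then nonnegative real numbers, so it suffices to prove the squared version, namely $1-x^k \le k(1-x)$ for all $x\in[0,1]$ and all positive integers $k$.

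For that, I would use the standard factorization of the geometric-type sum,
\begin{align}
    1-x^k = (1-x)\sum_{j=0}^{k-1} x^j.
\end{align}
Since $0\le x\le 1$, every term satisfies $x^j\le 1$, hence $\sum_{j=0}^{k-1}x^j \le k$. Multiplying by the nonnegative factor $1-x$ yields $1-x^k \le k(1-x)$, and taking square roots (legitimate because both sides are $\ge 0$) gives the lemma.

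There is essentially no substantive obstacle here: the only point requiring a word of care is that square roots preserve the inequality, which is immediate once one has checked $x\in[0,1]$ so that $1-x^k\ge 0$ and $1-x\ge0$. If a slicker phrasing is preferred, one can alternatively note that the function $f(x)=k(1-x)-(1-x^k)$ satisfies $f(1)=0$ and $f'(x)=-k+kx^{k-1}\le 0$ on $[0,1]$, so $f$ is nonincreasing and hence $f(x)\ge f(1)=0$ throughout $[0,1]$; but the factorization argument above is the cleanest and is what I would include.
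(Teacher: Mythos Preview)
Your proof is correct and is essentially the same as the paper's: both set $x=|\ip{\phi}{\psi}|\in[0,1]$ and use the factorization $1-x^k=(1-x)\sum_{j=0}^{k-1}x^j\le k(1-x)$. If anything, your version is slightly cleaner since you multiply by $(1-x)$ rather than divide, avoiding any worry about the endpoint $x=1$.
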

    \begin{proof}
    We want to find the smallest $\lambda\in\mathbb{R}$ such that $\sqrt{1-x^k}\leq \lambda\sqrt{1-x}$, where $x:=|\ip{\phi}{\psi}|$, for all $\ket{\psi},\ket{\phi}\in\mathcal{H}$.
    \begin{align}
         \lambda^2 & \geq \frac{1-x^k}{1-x}\\
        & = \sum_{i=0}^{k-1}x^i\,.
    \end{align}
    As $x\in[0,1]$, $\sum_{i=0}^{k-1}x^i \leq k$. Thus, if we choose $\lambda:=\sqrt{k}$, then the Lipschitz inequality will hold. Note that this is the smallest possible Lipschitz constant as $\sqrt{1-x^k}\rightarrow \sqrt{k}\sqrt{1-x}$ as $x\rightarrow 1$.
    \end{proof}
    
    In \cite{Schatzki2021} it is shown that $\lvert\CE{\psi}-\CE{\phi}\rvert\leq 2\sqrt{2}D_{tr}(\dya{\psi},\dya{\phi})$, with $D_{tr}$ denoting the trace distance. Here we show that this can be improved. In particular, the following lemma holds. 
    \begin{lemma}
        For pure states $\ket{\psi},\ket{\phi}\in\mathcal{H}_2^{\otimes n}$, the difference in their CE is bounded by
        \begin{align}
            \lvert\CE{\psi}-\CE{\phi}\rvert \leq \sqrt{2}D_{tr}(\dya{\psi},\dya{\phi})
        \end{align}
    \end{lemma}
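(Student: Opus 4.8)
The plan is to combine the linearized form $\CE{\psi}=1-\Tr[M\dya{\psi}^{\otimes 2}]$ (with $M=\bigotimes_i\Pi_+^{(i)}$ an orthogonal projector) with a traceless-operator refinement of the Hölder bound, and then reduce the two-copy trace distance to the single-copy one using the pure-state overlap formula.

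\emph{Step 1: subtract and use that $M$ is a projector.} Writing $\Delta:=\dya{\phi}^{\otimes 2}-\dya{\psi}^{\otimes 2}$, we have $\CE{\psi}-\CE{\phi}=\Tr[M\Delta]$, and $\Delta$ is Hermitian and traceless. Taking its Jordan decomposition $\Delta=\Delta_+-\Delta_-$ with $\Delta_\pm\geq 0$ on orthogonal supports and $\Tr[\Delta_+]=\Tr[\Delta_-]=\frac{1}{2}\|\Delta\|_1$, the bounds $0\leq M\leq\id$ give $0\leq\Tr[M\Delta_\pm]\leq\Tr[\Delta_\pm]$, hence $-\frac{1}{2}\|\Delta\|_1\leq\Tr[M\Delta]\leq\frac{1}{2}\|\Delta\|_1$, i.e. $\lvert\CE{\psi}-\CE{\phi}\rvert\leq D_{tr}(\dya{\psi}^{\otimes 2},\dya{\phi}^{\otimes 2})$. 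Exploiting tracelessness here, rather than the cruder $\lvert\Tr[M\Delta]\rvert\leq\|M\|_\infty\|\Delta\|_1=\|\Delta\|_1$, is exactly what saves the factor of $2$ relative to the $2\sqrt{2}$ prefactor of \cite{Schatzki2021}.

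\emph{Step 2: reduce two copies to one.} For pure states $D_{tr}(\dya{\alpha},\dya{\beta})=\sqrt{1-\lvert\ip{\alpha}{\beta}\rvert^2}$, and since $\ip{\psi^{\otimes 2}}{\phi^{\otimes 2}}=\ip{\psi}{\phi}^2$ we get $D_{tr}(\dya{\psi}^{\otimes 2},\dya{\phi}^{\otimes 2})=\sqrt{1-\lvert\ip{\psi}{\phi}\rvert^4}$. (Note the generic bound $D_{tr}(\rho^{\otimes 2},\sigma^{\otimes 2})\leq 2D_{tr}(\rho,\sigma)$ is too weak; the pure-state structure is what lets us do better.) Setting $x:=\lvert\ip{\psi}{\phi}\rvert\in[0,1]$, the elementary content of Lemma~\ref{thm:ps_dtr} applied at argument $x^2$ with $k=2$ — equivalently just $\frac{1-x^4}{1-x^2}=1+x^2\leq 2$ — yields $\sqrt{1-x^4}\leq\sqrt{2}\,\sqrt{1-x^2}=\sqrt{2}\,D_{tr}(\dya{\psi},\dya{\phi})$. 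Chaining the inequalities of Steps 1 and 2 gives the claim.

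\emph{Main obstacle.} There is no deep difficulty; the only step needing a little care is the traceless refinement $\lvert\Tr[M\Delta]\rvert\leq\frac{1}{2}\|\Delta\|_1$ in Step 1, precisely because the naive operator-norm estimate would lose exactly the factor that makes this lemma an improvement over the existing bound. The remaining ingredients — the pure-state overlap identity and the inequality $1+x^2\leq 2$ — are routine.
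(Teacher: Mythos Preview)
Your proof is correct and follows essentially the same approach as the paper: both use that $0\leq M\leq\id$ together with the tracelessness of $\dya{\psi}^{\otimes 2}-\dya{\phi}^{\otimes 2}$ to get the factor $\tfrac{1}{2}\|\cdot\|_1$, and then reduce the two-copy trace distance to the single-copy one via the pure-state overlap formula and Lemma~\ref{thm:ps_dtr} (equivalently, $1+x^2\leq 2$). Your Jordan-decomposition phrasing of Step~1 is just an explicit unpacking of the same inequality the paper invokes.
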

    \begin{proof}
        \begin{align}
            \lvert\CE{\psi}-\CE{\phi}\rvert & = \lvert \Tr[\Pi_+^{\otimes n}(\dya{\psi}^{\otimes 2}-\dya{\phi}^{\otimes 2})] \rvert\\
            & \leq \frac{1}{2}(\norm{\dya{\psi}^{\otimes 2}-\dya{\phi}^{\otimes 2}}_1+\Tr[\dya{\psi}^{\otimes 2}-\dya{\phi}^{\otimes 2}])\\
            & = \frac{1}{2}(\norm{\dya{\psi}^{\otimes 2}-\dya{\phi}^{\otimes 2}}_1)\\
            & \leq \frac{\sqrt{2}}{2}\norm{\dya{\psi}-\dya{\phi}}_1\\
            & = \sqrt{2}D_{tr}(\dya{\psi},\dya{\phi}).
        \end{align}
    \end{proof}
    The first inequality comes from the the fact that $0 \leq \Pi_+^{\otimes n} \leq \id$ and the definition of trace norm. The second inequality comes from Lemma~\ref{thm:ps_dtr}.
    
    With these Lemmas proven we return to the proposition.

\begin{proposition*}
    For any n-qubit pure state $\ket{\psi}$ its eigenvalue of entanglement, $\Lambda_{\text{max}}$, is upper bounded by $\sqrt{1-\frac{\CE{\psi}^2}{2}}$ and thus $E_g(\ket{\psi}) \geq \frac{\CE{\psi}^2}{2}$.
\end{proposition*}
\begin{proof}
    Take one state to be an arbitrary product state $\ket{\phi}=\bigotimes_i\ket{a_i}$, for which $\CE{\phi} = 0$.
    \begin{align}
        \CE{\psi} & = \lvert\CE{\psi}-\CE{\phi}\rvert\\
        & \leq \sqrt{2}D_{tr}(\dya{\psi},\dya{\phi})\\
        & = \sqrt{2(1-\lvert\ip{\psi}{\phi}\rvert^2)},
    \end{align}
    where the third line comes from $D_{tr}(\dya{\psi},\dya{\phi}) = \sqrt{1-\lvert\ip{\psi}{\phi}\rvert^2}$ for pure states. As this holds for any arbitrary $\ket{\phi}$, rearranging with elementary algebra proves the proposition.
\end{proof}

\section{Expected Number of Bell Pairs}\label{sec:bellpairs}
In the following proofs we will need several known combinatorial identities.
\begin{lemma}\label{lem:comb1}
    \begin{align}
        \sum_{k \text{ even/odd}} \binom{n}{k}= 2^{n-1} \quad \text{and} \quad \sum_{k \text{ even/odd}} k\binom{n}{k} = n2^{n-2}.
    \end{align}
\end{lemma}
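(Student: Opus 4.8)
The plan is to establish both combinatorial identities in Lemma~\ref{lem:comb1} by standard generating-function / binomial-coefficient manipulations, handling the ``even'' and ``odd'' cases together.

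First I would prove $\sum_{k\text{ even}}\binom{n}{k}=\sum_{k\text{ odd}}\binom{n}{k}=2^{n-1}$. The key step is to add and subtract the two binomial expansions $2^n=(1+1)^n=\sum_k\binom{n}{k}$ and $0=(1-1)^n=\sum_k(-1)^k\binom{n}{k}$ (the latter valid since $n\ge 1$; for $n=0$ the stated identity does not apply, so I would tacitly assume $n\ge1$ as is implicit in the cSWAP-test context). Adding gives $\sum_{k\text{ even}}\binom{n}{k}=2^{n-1}$; subtracting gives $\sum_{k\text{ odd}}\binom{n}{k}=2^{n-1}$.

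Next I would prove $\sum_{k\text{ even}}k\binom{n}{k}=\sum_{k\text{ odd}}k\binom{n}{k}=n2^{n-2}$. Here the main idea is the absorption identity $k\binom{n}{k}=n\binom{n-1}{k-1}$, which reduces each sum to $n\sum_{k}\binom{n-1}{k-1}$ restricted to $k$ of a fixed parity, i.e.\ $n\sum_{j}\binom{n-1}{j}$ with $j=k-1$ ranging over the opposite parity. By the first identity (applied to $n-1$ in place of $n$, valid for $n\ge2$), each such restricted sum over $j$ equals $2^{n-2}$, giving $n2^{n-2}$ in both the even and odd cases. (The $n=1$ case is trivial: both sums equal~$0$ if one interprets $2^{-1}$ suitably, or one simply notes the identity is used only for $n\ge2$.)

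There is no real obstacle here — these are textbook identities and the only minor care needed is tracking the small-$n$ edge cases and making sure the parity of the shifted index $j=k-1$ is the opposite of the parity of $k$, so that the even-$k$ sum in $k\binom{n}{k}$ becomes an odd-$j$ sum of $\binom{n-1}{j}$ and vice versa; in either case the first identity delivers $2^{n-2}$, so the final answer is parity-independent as claimed. A clean one-line alternative for the second identity is to differentiate $(1+x)^n$ and evaluate the even/odd parts at $x=1$ using $\tfrac12\big((1+x)^{n}\pm(1-x)^{n}\big)$, but the absorption-identity route is the shortest to write out.
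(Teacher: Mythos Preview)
Your proof is correct. The paper itself does not prove this lemma at all --- it simply states the identities and refers the reader to a standard handbook of combinatorial identities --- so your argument via the $(1\pm 1)^n$ expansions for the first identity and the absorption identity $k\binom{n}{k}=n\binom{n-1}{k-1}$ for the second is already more than the paper supplies, and is entirely standard and sound (including your remarks on the small-$n$ edge cases).
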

\begin{lemma}\label{lem:comb2}
    For $l \geq j \geq 0$ the sum $\sum_{k=0}^l k^j(-1)^k\binom{l}{k}$ evaluates to $0$ if $j\neq l$.
\end{lemma}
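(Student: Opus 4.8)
The plan is to read the sum as, up to a sign, an iterated finite difference of the monomial $k\mapsto k^j$ evaluated at $0$: since a single application of the difference operator lowers polynomial degree by one, $l$ applications annihilate any polynomial of degree $j<l$. Concretely I would carry this out through the generating function $(1-x)^l$, which avoids explicit bookkeeping with the finite-difference operator.

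First I would start from the elementary identity $\sum_{k=0}^{l}(-1)^k\binom{l}{k}x^k=(1-x)^l$. Applying the Euler operator $\theta:=x\frac{d}{dx}$ sends $x^k$ to $kx^k$, so applying it $j$ times and using linearity gives
\[
\sum_{k=0}^{l}k^j(-1)^k\binom{l}{k}x^k=\theta^j\big[(1-x)^l\big].
\]
Evaluating at $x=1$ reproduces exactly the sum in the statement, so it suffices to prove that $\theta^j[(1-x)^l]$ vanishes at $x=1$ whenever $0\le j<l$.

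Next I would show, by induction on $i$ for $0\le i\le l$, that $\theta^i[(1-x)^l]=(1-x)^{l-i}q_i(x)$ for some polynomial $q_i$. The base case $i=0$ holds with $q_0\equiv 1$. For the inductive step, differentiating $(1-x)^{l-i}q_i(x)$ produces one term with the power dropped to $(1-x)^{l-i-1}$ and one term still carrying $(1-x)^{l-i}$; multiplying by $x$ then exhibits $\theta^{i+1}[(1-x)^l]$ as $(1-x)^{l-i-1}$ times a polynomial, which we take as $q_{i+1}$. Taking $i=j$ (permissible since $0\le j\le l$) gives $\theta^j[(1-x)^l]=(1-x)^{l-j}q_j(x)$ with exponent $l-j\ge 1$; evaluating at $x=1$ yields $0$, i.e. $\sum_{k=0}^{l}k^j(-1)^k\binom{l}{k}=0$, as claimed.

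There is essentially no serious obstacle here: the only point requiring care is the induction of the previous paragraph, namely checking that each application of $\theta$ drops the exponent of $(1-x)$ by exactly one and that this exponent does not reach $0$ before step $j$, which is precisely guaranteed by $j<l$. For completeness I would note that at the borderline $j=l$ the same computation gives $\theta^l[(1-x)^l]$ equal at $x=1$ to $(-1)^l\,l!$, consistent with the finite-difference picture, though only the vanishing case is needed in what follows.
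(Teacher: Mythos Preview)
Your proof is correct. The paper does not actually prove this lemma at all: it simply states Lemmas~\ref{lem:comb1} and~\ref{lem:comb2} and then writes ``We will not prove these here but rather refer the reader to encyclopedias of combinatorial identities.'' So your argument supplies something the paper omits entirely.

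Your generating-function approach via the Euler operator $\theta=x\,d/dx$ is a clean and standard way to see the identity; it is essentially the finite-difference statement you mention (that the $l$-th forward difference annihilates polynomials of degree $<l$) dressed in analytic clothing. The induction is fine: the one small imprecision is that you state it ``for $0\le i\le l$'' while the inductive step only makes sense for $i\le l-1$ (so that $(1-x)^{l-i-1}$ is a genuine polynomial factor), but since you only need the conclusion at $i=j<l$ this is harmless. Your parenthetical remark about the borderline case $j=l$ yielding $(-1)^l l!$ is also correct and matches the well-known value of the top finite difference.
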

We will not prove these here but rather refer the reader to encyclopedias of combinatorial identities \cite{spiegel_lipschutz_liu_2018}.
\begin{proposition}
    Given an n-qubit state $\ket{\psi}$, the expected number of Bell pairs from running the parallelized SWAP test on $\ket{\psi}^{\otimes 2}$ is 
    \begin{align}
        \mathcal{B}(\ket{\psi}) = \frac{1}{2}(n-\sum_i \Tr[\rho_i^2]),
    \end{align}
    with variance
    \begin{align}
        var(\mathcal{B}(\ket{\psi})) = \frac{n}{4}-\frac{\sum_i\Tr[\rho_i^2]}{4}(\sum_j\Tr[\rho_j^2]+2)+\sum_{i,j}\Tr[\rho_{i,j}^2].
    \end{align}
\end{proposition}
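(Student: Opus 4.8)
The plan is to note first that the number of Bell pairs produced when the parallelized controlled-SWAP test is run on $\ket{\psi}^{\otimes2}$ equals the Hamming weight $w(\mathbf{z})$ of the measured ancilla string: by the discussion in Section~\ref{CE_rev}, outcome $z_k=1$ leaves the $k$-th pair of qubits in $\ket{\Phi^-}$, so $w(\mathbf{z})$ counts exactly the Bell pairs created. Thus $\mathcal{B}(\ket{\psi})$ is the mean and $\mathrm{var}(\mathcal{B})$ the variance of $w(\mathbf{z})$ under the cSWAP outcome distribution $p$, and the whole task reduces to computing the first two moments of $w$.

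The first step is to expand $p(\mathbf{z})$ in subsystem purities. Expanding the measurement operator $\bigotimes_i\frac{1}{2}(\id+(-1)^{z_i}\mathbb{F})^{(i)}$ into a sum over subsets $\alpha\subseteq\mathcal{S}$ and using $\Tr[\mathbb{F}^{(\alpha)}\dya{\psi}^{\otimes2}]=\Tr[\rho_\alpha^2]$ (trace out the complement of $\alpha$ to get $\rho_\alpha\otimes\rho_\alpha$, then apply the swap trick), one obtains
\begin{align}
 p(\mathbf{z})=\frac{1}{2^n}\sum_{\alpha\subseteq\mathcal{S}}(-1)^{\sum_{i\in\alpha}z_i}\Tr[\rho_\alpha^2],
\end{align}
where $\mathbb{F}^{(\alpha)}$ swaps only the qubit pairs in $\alpha$ and $\rho_\emptyset:=1$. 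For the mean, substitute this into $\mathcal{B}=\sum_{\mathbf{z}}w(\mathbf{z})p(\mathbf{z})$, exchange the order of summation, and split $w(\mathbf{z})$ into its restrictions to $\alpha$ and to $\mathcal{S}\setminus\alpha$; the inner $\mathbf{z}$-sum then factorizes over these two blocks of coordinates. Lemma~\ref{lem:comb1} evaluates the free block, and Lemma~\ref{lem:comb2} (with $j=1$) shows the $\alpha$-block vanishes unless $\lvert\alpha\rvert\le1$, so only $\alpha=\emptyset$ and the singletons survive and $\mathcal{B}=\frac{1}{2}\bigl(n-\sum_i\Tr[\rho_i^2]\bigr)$. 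A one-line cross-check is the direct argument by linearity of expectation: $\Pr[z_i=1]=\Tr[\Pi_-^{(i)}(\rho_i\otimes\rho_i)]=\frac{1}{2}(1-\Tr[\rho_i^2])$, then sum over $i$.

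For the variance I would run the same machinery on $w(\mathbf{z})^2$, which is a degree-two symmetric polynomial in $(z_1,\dots,z_n)$; now the inner $\mathbf{z}$-sum against $\prod_{i\in\alpha}(-1)^{z_i}$ vanishes for $\lvert\alpha\rvert>2$ by Lemma~\ref{lem:comb2}, so only $\Tr[\rho_\emptyset^2]=1$, the single-qubit purities $\Tr[\rho_i^2]$, and the two-qubit purities $\Tr[\rho_{ij}^2]$ can appear, with prefactors read off from Lemma~\ref{lem:comb1}. Equivalently one can compute the joint statistics directly: for $i\ne j$, expanding the product of antisymmetric projectors and applying the swap trick term by term gives $\Pr[z_i=z_j=1]=\Tr[(\Pi_-^{(i)}\otimes\Pi_-^{(j)})(\rho_{ij}\otimes\rho_{ij})]=\frac{1}{4}(1-\Tr[\rho_i^2]-\Tr[\rho_j^2]+\Tr[\rho_{ij}^2])$. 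Then, using $z_i^2=z_i$, $\mathrm{var}(w)=\mathbb{E}[w]+\sum_{i\ne j}\mathbb{E}[z_iz_j]-\mathcal{B}^2$, and collecting terms with the elementary counting $\sum_{i\ne j}1=n(n-1)$, $\sum_{i\ne j}\Tr[\rho_i^2]=(n-1)\sum_i\Tr[\rho_i^2]$, etc., yields the claimed variance after simplification.

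The one genuinely fiddly step is this last bookkeeping for the variance: tracking which subsets $\alpha$ survive at each Hamming weight, pinning down the combinatorial coefficients attached to $\Tr[\rho_i^2]$ versus $\Tr[\rho_{ij}^2]$, and carrying out the cancellations against $\mathcal{B}^2$. Everything else — the identification of Bell pairs with $w(\mathbf{z})$, the purity expansion of $p(\mathbf{z})$, and the evaluation of the individual sums — is a routine use of the swap trick together with Lemmas~\ref{lem:comb1} and~\ref{lem:comb2}.
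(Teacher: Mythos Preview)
Your proposal is correct and in fact offers two arguments. Your first route --- expand $p(\mathbf{z})=2^{-n}\sum_\alpha(-1)^{\sum_{i\in\alpha}z_i}\Tr[\rho_\alpha^2]$, interchange sums, factorize the $\mathbf{z}$-sum over $\alpha$ and its complement, and kill the contributions with $|\alpha|>1$ (resp.\ $|\alpha|>2$) using Lemmas~\ref{lem:comb1}--\ref{lem:comb2} --- is essentially the paper's proof. The paper organizes it slightly differently: it restricts to even-weight $\mathbf{z}$ from the outset, folds $\alpha$ and $\alpha^c$ together via $\Tr[\rho_\alpha^2]=\Tr[\rho_{\alpha^c}^2]$, and packages the combinatorics into explicit weight-dependent coefficients $c(l)$ and $d(l)$; but the mechanism and the lemmas invoked are the same.

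Your second route --- write $w=\sum_i z_i$, compute the one- and two-site marginals $\Pr[z_i{=}1]=\Tr[\Pi_-^{(i)}\rho_i^{\otimes2}]$ and $\Pr[z_i{=}z_j{=}1]=\Tr[(\Pi_-^{(i)}\otimes\Pi_-^{(j)})\rho_{ij}^{\otimes2}]$ directly, then use linearity of expectation and $z_i^2=z_i$ --- is a genuinely different and more elementary argument than the paper's. It bypasses the global purity expansion and the combinatorial identities entirely. What the paper's approach buys in exchange is uniform extensibility: the same computation with $w(\mathbf{z})^m$ and the coefficients $d(l)$ immediately shows that the $m$-th moment involves only purities of subsystems of size at most $m$, which the paper records as a separate proposition. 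Your indicator-variable method would need a fresh $m$-body marginal computation for each moment.
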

\begin{proof}
    The expected number of bell pairs can be written as 
    \begin{align}
        \mathcal{B}(\ket{\psi}) = \sum_{\textbf{z}} w(\textbf{z})*p(\textbf{z}),
    \end{align}
    where $p(\textbf{z})$ is the probability of measuring bitstring $\textbf{z}$ from the cSWAP test. Note that we only need sum over $\textbf{z}$ such that $w(\textbf{z}) \% 2 = 0$ since $p(\textbf{z})=0$ if $\textbf{z}$ has odd Hamming weight \cite{Beckey2021}. As the Kraus operator for a measurement of $0$ ($1$) is $\Pi_+$ ($\Pi_-)$ we can write $p(\textbf{z})$ as 
    \begin{align}
        p(\textbf{z}) = \Tr[(\bigotimes_{i=1}^n \frac{1}{2}(\id+(-1)^{z_i}\mathbb{F}))\ket{\psi}\bra{\psi}^{\otimes 2}].
    \end{align}
    Recall that $\Tr[(\otimes_{i\in \alpha}\mathbb{F}_i)\rho^{\otimes 2}] = \Tr[\rho_\alpha^2]$ for any subset of parties $\alpha$. Thus, $p(\textbf{z}) = \sum_{\alpha\in Q}a(\alpha)\Tr[\rho_\alpha^2]$, where $\{a\}$ is some set of real coefficients. Thus, the expected number of Bell pairs can be written as 
    \begin{align}
        \mathcal{B}(\ket{\psi}) = \sum_{\alpha \in Q} c(\alpha) \Tr[\rho_\alpha^2].
    \end{align}
    Consider two bitstrings $\textbf{z}$ and $\textbf{z'}$ with corresponding sets of coefficients $\{a\}$ and $\{b\}$. If $w(\textbf{z}) = w(\textbf{z'})$ then $\{a\} \cong \{b\}$. This is easy to see by simply applying a permutation that maps each $1$ in $\textbf{z}$ to a $1$ in $\textbf{z'}$. By summing over all $\textbf{z}$, the coefficient for $\Tr[\rho_\alpha^2]$ will be equal to that of $\Tr[\rho_\beta^2]$ if $\alpha$ and $\beta$ are subsystems of the same size. Further, as $\Tr[\rho_\alpha^2] = \Tr[\rho_{\alpha^c}^2]$ we can restrict our attention to subsystems of sizes in the range $[0,\lfloor \frac{n}{2} \rfloor]$. Thus we can write the expected number of Bell pairs as
    \begin{align}
         \mathcal{B}(\ket{\psi}) = 2^{1-n}\sum_{l=0}^{\lfloor \frac{n}{2} \rfloor} c(l) \sum_{\alpha:\lvert \alpha 
         \rvert = l}\Tr[\rho_\alpha^2].
    \end{align}
    Note that if $n$ is even then we  only need to sum over half of the possible subsystems of size $\frac{n}{2}$. This can be absorbed into $c(l)$ and does not affect the following analysis. To compute $c(l)$ we start by noticing that the coefficient for a subsystem will be positive or negative based on the number of $1$'s in $\textbf{z}$ coinciding with parties in $\alpha$. In particular, each $1$ adds a factor of $-1$. For a given system of size $l$ and bitstring $\textbf{z}$, there are $k=0,1,\ldots,\min(l,w(z))$ possible overlaps. Thus, we need to sum over the possible weights and the number of ways that we could construct bitstrings such that $\alpha$ and $\textbf{z}$ overlap in $k$ locations. One can verify that
    \begin{align}
    c(l) = \sum_{w\% 2 = 0}w*\sum_{k=0}^w(-1)^k\binom{l}{k}\binom{n-l}{w-k}.
\end{align}
We now manipulate this expression into a more manageable form.
\begin{align}
     c(l) & = \sum_{w\% 2 = 0}\sum_{k=0}^w w(-1)^k\binom{l}{k}\binom{n-l}{w-k}\\
     & = \sum_{w\% 2 = 0}\sum_{k=0}^w (-1)^k\binom{l}{k}(w-k+k)\binom{n-l}{w-k}\\
     & = \sum_{k=0}^{l}(-1)^k\binom{l}{k}
     \sum_{\substack{w\% 2 = 0 \\ w\geq k}}^{n-l+k} (w-k+k)\binom{n-l}{w-k}.
\end{align}
Fix $i:= w-k$ and continue:
\begin{align}
    c(l) & = \sum_{k=0}^l(-1)^k \binom{l}{k}(\sum_{\substack{i=0\\ i + k \% 2 = 0}}^{n-l}i\binom{n-l}{i}+k\sum_{\substack{i=0\\ i + k \% 2 = 0}}^{n-l}\binom{n-l}{i})\\
     & = \sum_{k=0}^l(-1)^k \binom{l}{k}((n-l)2^{n-l-2}+k*2^{n-l-1})\\
     & = 2^{n-l-1}\sum_{k=0}^l(-1)^k\binom{l}{k}(\frac{n-l}{2}+k)\\
     & = (n-l)2^{n-l-2}\sum_{k=0}^l(-1)^k\binom{l}{k}+2^{n-l-1}\sum_{k=0}^lk(-1)^k\binom{l}{k}\,.
\end{align}
In the fifth equality we are using Lemma~\ref{lem:comb1}. To finish the proof, note that $\sum_{k=0}^l k^j(-1)^k\binom{l}{k}$ evaluates to $0$ if $j\neq l$ via Lemma~\ref{lem:comb2}. For $l=0$ or $1$ it is easy to evaluate this expression to find
\begin{align}
    c(l) = \begin{cases}
        n2^{n-2} & l = 0\\
        -2^{n-2} & l = 1\\
        0 & \text{otherwise}
    \end{cases}
\end{align}
Plugging this into our expression for $\mathcal{B}(\ket{\psi})$ yields the desired identity.

Let us now evaluate the variance. Here we must find an expression for the second moment of the number of Bell pairs. This can also be written as some summation over purities. Following the same argument as above, one can show that
\begin{align}
    \sum_{\textbf{z}}w(\textbf{z})^2p(\textbf{z}) = 2^{1-n}\sum_{l=0}^{\lfloor \frac{n}{2} \rfloor}d(l)\sum_{\alpha:\lvert \alpha \rvert =l}\Tr[\rho_\alpha^2],
\end{align}
where the same combinatorial argument as above yields 
\begin{align}
    d(l) & = \sum_{w\% 2 = 0}w^2*\sum_{k=0}^w(-1)^k\binom{l}{k}\binom{n-l}{w-k}\\
    & = \sum_{k=0}^l(-1)^k \binom{l}{k}(\sum_{\substack{i=0\\ i + k \% 2 = 0}}^{n-l}(i+k)^2\binom{n-l}{i} \\
    & = \sum_{k=0}^l(-1)^k \binom{l}{k}[(n-l)(n-l+1)2^{n-l-3}+k(n-l)2^{n-l-1}+k^22^{n-l}].
\end{align}
 Via Lemma~\ref{lem:comb2}, the first term is nonzero only for $l=0$, the second for $l=1$, and the third for $l=2$. Evaluating, we arrive at
    \begin{align}
        d(l) = \begin{cases}
            n(n+1)2^{n-3} & l = 0\\
        (1-n)2^{n-2} & l = 1\\
        2^{n-1} & l = 2\\
        0 & \text{otherwise}
        \end{cases}
    \end{align}
    Thus the second moment is $\frac{n(n+1)}{4}-\frac{n-1}{2}\sum_i\Tr[\rho_i^2]+\sum_{i,j}\Tr[\rho_{i,j}^2]$. We can now compute the variance:
    \begin{align}
        \text{var}(B)_{\ket{\psi}} = \frac{n}{4}-\frac{\sum_i\Tr[\rho_i^2]}{4}(\sum_j\Tr[\rho_j^2])+\sum_{i,j}\Tr[\rho_{i,j}^2].
    \end{align}
\end{proof}
\begin{proposition}
    The k\textsuperscript{th} moment of the number of Bell pairs generated in the SWAP test, $\sum_\textbf{z} w(\textbf{z})^mp(\textbf{z})$, is a function of the purities of only $1-m$ party marginals and not of that of any larger subsystems.
\end{proposition}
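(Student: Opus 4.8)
The plan is to mirror the computations that produced $\mathcal{B}(\ket{\psi})$ and its variance, but to track only the dependence on subsystem \emph{size} rather than evaluating any coefficient in closed form. First I would record the identity, valid for any pure $\ket{\psi}$,
\begin{equation}
    \sum_{\textbf{z}} w(\textbf{z})^m p(\textbf{z}) \;=\; 2^{-n}\sum_{\alpha\in Q(\mathcal{S})}\tilde{c}_m(\lvert\alpha\rvert)\,\Tr[\rho_\alpha^2],
\end{equation}
obtained by expanding $p(\textbf{z})=\Tr\big[\bigotimes_i\tfrac{1}{2}(\id+(-1)^{z_i}\mathbb{F})\,\dya{\psi}^{\otimes 2}\big]$ as $2^{-n}\sum_{S\subseteq\mathcal{S}}(-1)^{\lvert S\cap\,\mathrm{supp}(\textbf{z})\rvert}\Tr[\rho_S^2]$, using $\Tr[(\bigotimes_{i\in S}\mathbb{F}^{(i)})\rho^{\otimes 2}]=\Tr[\rho_S^2]$, and then exchanging the order of summation, restricting $\textbf{z}$ to even weight exactly as in the earlier proofs since $p(\textbf{z})=0$ for odd $w(\textbf{z})$. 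The coefficient that appears is precisely the one seen for $m=1,2$ as $c(l)$ and $d(l)$, namely
\begin{equation}
    \tilde{c}_m(l)=\sum_{w\%2=0}w^m\sum_{k=0}^w(-1)^k\binom{l}{k}\binom{n-l}{w-k}.
\end{equation}
The proposition then reduces to the combinatorial claim that $\tilde{c}_m(l)=0$ whenever $m<l<n-m$.

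To establish that, I would set $i:=w-k$ and peel off the parity constraint ``$w=i+k$ even'' through the factor $\tfrac{1}{2}\big(1+(-1)^{i+k}\big)$, which yields
\begin{equation}
    \tilde{c}_m(l)=\frac{1}{2}\sum_{k=0}^{l}(-1)^k\binom{l}{k}G_m(k)\;+\;\frac{1}{2}\sum_{k=0}^{l}\binom{l}{k}H_m(k),
\end{equation}
where $G_m(k)=\sum_{i=0}^{n-l}(i+k)^m\binom{n-l}{i}$ and $H_m(k)=\sum_{i=0}^{n-l}(-1)^i(i+k)^m\binom{n-l}{i}$. Expanding $(i+k)^m$ by the binomial theorem, $G_m(k)=\sum_{j=0}^{m}\binom{m}{j}k^j\big(\sum_i i^{m-j}\binom{n-l}{i}\big)$ is a polynomial in $k$ of degree at most $m$, while every coefficient of $H_m(k)$ carries the factor $\sum_i(-1)^i i^{m-j}\binom{n-l}{i}$, which vanishes for $m-j<n-l$ by Lemma~\ref{lem:comb2}. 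When $l<n-m$ one has $m-j\le m<n-l$ for all $j\ge 0$, so $H_m\equiv 0$ and the second sum drops out entirely; and when $m<l$ the polynomial $G_m$ has degree $<l$, so the alternating binomial sum $\sum_{k=0}^{l}(-1)^k\binom{l}{k}G_m(k)$ vanishes as well — again by Lemma~\ref{lem:comb2}, which kills every monomial $k^j$ with $j<l$. Hence $\tilde{c}_m(l)=0$ throughout the window $m<l<n-m$.

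Feeding this back into the first identity, the only surviving terms are those with $\lvert\alpha\rvert\le m$ or $\lvert\alpha\rvert\ge n-m$; for the latter $\Tr[\rho_\alpha^2]=\Tr[\rho_{\alpha^c}^2]$ with $\lvert\alpha^c\rvert\le m$, so relabelling $\alpha\mapsto\alpha^c$ in that part gives
\begin{equation}
    \sum_{\textbf{z}}w(\textbf{z})^m p(\textbf{z}) = 2^{-n}\sum_{\lvert\alpha\rvert\le m}\big(\tilde{c}_m(\lvert\alpha\rvert)+\tilde{c}_m(n-\lvert\alpha\rvert)\big)\Tr[\rho_\alpha^2],
\end{equation}
which is manifestly a function of the purities of marginals on at most $m$ parties and of no larger subsystem. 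The main obstacle I anticipate is the bookkeeping around the parity constraint: one must keep the even-weight restriction consistent between the two sums defining $\tilde{c}_m(l)$, and then verify that $m<l<n-m$ is exactly the range where both mechanisms engage — in particular that, once the complement symmetry $\Tr[\rho_\alpha^2]=\Tr[\rho_{\alpha^c}^2]$ has confined attention to $\lvert\alpha\rvert\le\lfloor n/2\rfloor$, the hypothesis $\lvert\alpha\rvert>m$ already forces $\lvert\alpha\rvert<n-m$, so the $H_m$ argument and the finite-difference argument are never invoked outside their range of validity.
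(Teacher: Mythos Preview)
Your proposal is correct and follows essentially the same route as the paper: write the $m$\textsuperscript{th} moment as $2^{-n}\sum_\alpha \tilde{c}_m(|\alpha|)\Tr[\rho_\alpha^2]$, identify $\tilde{c}_m(l)$ with the paper's $d(l)$, substitute $i=w-k$, expand $(i+k)^m$ binomially, and invoke Lemma~\ref{lem:comb2} to kill the coefficients for $l>m$. Your explicit $G_m/H_m$ split via $\tfrac{1}{2}(1+(-1)^{i+k})$ is in fact a cleaner treatment of the parity constraint than the paper's own sketch, which silently treats the inner sum $\sum_{i:\,i+k\text{ even}} i^{m-j}\binom{n-l}{i}$ as $k$-independent---a step that is only justified once one has already restricted to $l\leq\lfloor n/2\rfloor$ (so that $m-j<n-l$ and the even/odd partial sums coincide); your version makes that dependence, and its vanishing for $l<n-m$, explicit.
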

\begin{proof}
    As in the case of $m=1$ this evaluates to some sum over subsystem purities. By the same argument as before, the coefficients will be the same for all subsystems of the same size. 
    \begin{align}
        \sum_\textbf{z} w(\textbf{z})^mp(\textbf{z}) & = \sum_{\alpha \in Q}d(\alpha) \Tr[\rho_\alpha^2]\\
        & = \frac{1}{2^{n-1}}\sum_{l=0}^{\lfloor \frac{n}{2} \rfloor} d(l) \sum_{\alpha:\lvert \alpha 
         \rvert = l}\Tr[\rho_\alpha^2].
    \end{align}
    Using the same argument as $m=1$ it is clear that
    \begin{align}
        d(l) = \sum_{w \% 2 =0} w^m*\sum_{k=0}^w(-1)^k\binom{l}{k}\binom{n-l}{w-k}.
    \end{align}
    Via the same steps as before and the substitution $i:=w-k$ this can be manipulated into the form
    \begin{align}
        d(l) = \sum_{k=0}^l(-1)^k\left(\sum_{\substack{i=0 \\ i+k \% 2 = 0}}^{n-l}\binom{l}{k}(i+k)^m\binom{n-l}{i}\right).
    \end{align}

Using the binomial theorem we can readily expand this to
\begin{align}
    d(l) = \sum_{k=0}^l(-1)^k \sum_{j=0}^m \binom{m}{j}k^j\left(\sum_{\substack{i=0 \\ i+k \% 2 = 0}}^{n-l}i^{m-j}\binom{n-l}{i}\right).
\end{align}
From Lemma~\ref{lem:comb2} we know that each term $ \sum_{k=0}^l(-1)^k \binom{m}{j}k^j$ is non-zero only if $l=j$. As we are summing from $j = 0 $ to $m$, $d(l) =0$ for $l > m$.
\end{proof}

\section{Connections to Coding Theory}\label{sec:coding}
\subsection{States that Maximize Concentratable Entanglement}
\begin{figure}[t]
    \centering
    \includegraphics[width=1.\columnwidth]{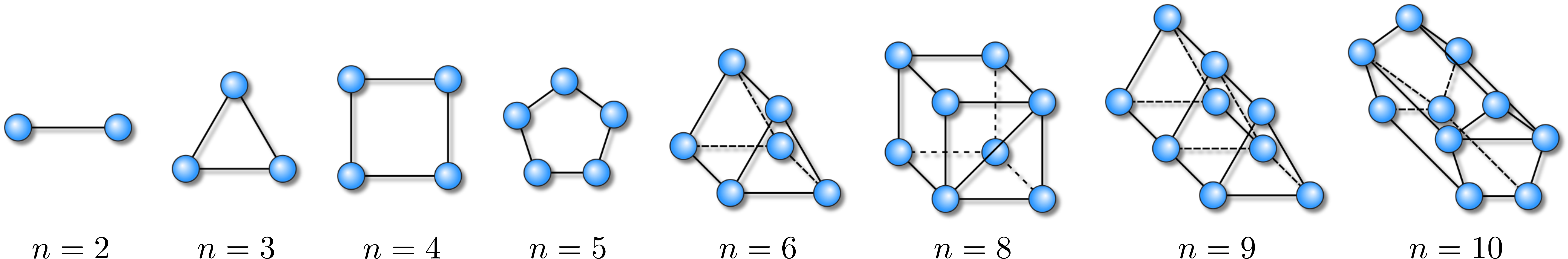}
    \caption{\textbf{Graph states that achieve $\CE{G}=\Cmax{n}$}. Note that for $n=7$ graph states are not maximal. Every state given corresponds to a type I code except for $n=2$ or $n=6$, which are type II. Further, each code is extremal, i.e., the distance is 2 for $n\leq 4$, 3 for $n=5$, and 4 for $n\geq 6$. }
    \label{fig:graphs}
\end{figure}
First, let us recall that one can easily convert between the weight enumerator of a code and the purities of subsystems, as is detailed in \cite{Scott2004}. We begin by noting that the error correcting codes for $n\in\{2,3,4,5,6,7,8,9,11,12\}$ given in Tbl. 1 of \cite{Scott2004} have CE equal to the upper bound. Here we give explicit constructions, shown as graph states in Fig.~\ref{fig:graphs}, corresponding to these codes and a different code for $n=10$ which achieves the CE upper bound. We will quickly recall the definition of graph states here and direct the reader to \cite{Hein2006} for a detailed review. Note that a recent work considered how to calculate subsystem purities for graph states and found the maximal values over graph states for up to 9 qubits. Here we show that the certain graphs states are indeed maximal for their system size over all states.

A graph state is a pure state which can be prepared given a graph $G$ with edge set $E$:
\begin{align}
    \ket{G} = \Pi_{(a,b)\in E}U^{(a,b)}\ket{+}^{\otimes n},
\end{align}
where $U^{(a,b)}$ is the controlled-Z operation. Graph states are also stabilizer states and thus $[[n,0,d]]$ quantum error correcting codes. The $n$ stabilizers are given by
\begin{align}
    S_b = \sigma_x^{(b)}\Pi_{v\in N(b)}\sigma_z^{(v)},
\end{align}
where $N(b)$ is the neighborhood of the node $b$ in the graph $G$. Thus, the stabilizer matrix for a graph state will take the simple form $(\mathbb{1}|\Gamma)$, where $\Gamma$ is the graph adjacency matrix. We note that every state given here corresponds to an extremal code. Lastly, we note that the optimal $n=7$ code in \cite{Scott2004} achieves the maximum CE over all stabilizer states, which is $0.7734375$ and corresponds to the maximum number (32 out of 35) of 3 party reduced density matrices being maximally mixed \cite{Huber2017}. However, this does not achieve $\mathcal{C}^*(7)$.

More generally, by considering extremal codes over GF(4), we find that the bounds from the LP for $n\in\{14,17,18\}$ are tight as well. The length 14 code denoted as $G_{14,2}$ in \cite{Varbanov2007} achieves the bound from the LP. As does the length 17 code ($C_{17,1}$) in \cite{1266813} and the length 18 code ($S18$) in \cite{BACHOC200155}.

\subsection{Bounds}
In Eq.~\eqref{eq:SM_LP} we provided a linear program to upper bound $\Cmax{n}$. As mentioned in the main text, the inequality constraint $Ky \geq 0$ is the quantum MacWilliams identity. Now we also note that the equality constraint is a sum over $K_n(2j)$. This suggests that if $\{y\}$ is some enumerator $\{A\}$ such that all odd enumerators are 0, then the equality constraint is simply the transformation from $\{A\}$ to $B_n$. We will work with this assumption for now and return to the general case later. 

The quantum MacWilliams identity yields
\begin{align}
    B_n = 2^{k-n}\sum_{j}K_n(j)A_j.
\end{align}
If we now assume that $y_i = A_{2i}$ this yields the condition
\begin{align}
    B_n = 2^{k-n}.
\end{align}
As $k \leq n$ this is not a valid statement. However, we can rescale Eq.~\eqref{eq:SM_LP} by some $\alpha > 0$ such that this constraint is valid. That is, instead of optimizing over $\{y\}$ we optimize over $\{A\}$ where $A_i = \alpha y_i$. Note that rescaling does not affect the two inequality constraints and simply multiplies the optimal value by $\alpha$. We choose this factor to be $\alpha := 2^{n-k}B_n$ so that $2^{n-k}B_n = \sum_{w\% 2 =0}3^{n-w}A_w = \alpha = 2^{n-k}B_n$. Thus, given some enumerator $B_n$ we can rescale the value of the LP, $L(n)$, to yield a bound on the code.
\begin{proposition}
   A stabilizer code such that $A_{2i+1}=0$ must satisfy the inequality
    \begin{align}
        B_n \leq \frac{1}{L(n)}\frac{3^n}{2^{n-k}}.
    \end{align}
\end{proposition}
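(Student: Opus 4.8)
The plan is to connect the linear program of Eq.~\eqref{eq:SM_LP} directly to the quantum weight enumerators of a stabilizer code, exploiting the observation already made in the text that the inequality constraint $K\mathbf{y}\geq 0$ is precisely the quantum MacWilliams identity and that the equality constraint is the bottom row of $K$. Concretely, I would argue as follows. Suppose we are given a stabilizer $[[n,k,d]]$ code whose Shor--Laflamme enumerators satisfy $A_{2i+1}=0$, so that the nonzero enumerators live exactly on the even indices indexed by the LP variables. Set $A_i := \alpha\, y_i$ with $\alpha := 2^{n-k}B_n > 0$, as in the paragraph preceding the proposition. Then the vector $\mathbf{y} = \mathbf{A}/\alpha$ is a candidate point for the LP: positivity $\mathbf{y}\geq 0$ holds because enumerators are nonnegative; $K\mathbf{y}\geq 0$ holds because $K\mathbf{A}\geq 0$ is exactly the quantum MacWilliams transform giving the dual enumerators $B_\ell \geq 0$ (and scaling by $1/\alpha>0$ preserves this); and the equality constraint $\sum_{w\%2=0} y_w 3^{n-w} = 1$ holds by the choice of $\alpha$, since $\sum_{w\%2=0}3^{n-w}A_w = 2^{n-k}B_n = \alpha$ from the MacWilliams relation evaluated at the top coefficient. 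Hence $\mathbf{y}$ is feasible.

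Given feasibility, the LP optimal value $L(n) = \min 3^n y_0$ over the feasible region is a lower bound on the objective evaluated at this particular point, i.e. $L(n) \leq 3^n y_0 = 3^n A_0/\alpha$. Now $A_0 = 1$ for any normalized code (the identity element always contributes weight one to the zeroth enumerator), so $L(n) \leq 3^n/\alpha = 3^n/(2^{n-k}B_n)$. Rearranging gives exactly $B_n \leq \dfrac{1}{L(n)}\dfrac{3^n}{2^{n-k}}$, which is the claimed inequality. The only facts I am invoking beyond the excerpt are standard ones about quantum weight enumerators — $A_0=1$, $A_w,B_w\geq 0$, and the quantum MacWilliams identity $B_\ell = 2^{k-n}\sum_w K_\ell(w) A_w$ — all of which are alluded to in the surrounding text.

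I expect the main obstacle to be bookkeeping rather than conceptual: one must verify carefully that the matrix $K$ appearing in the LP, with entries $K_{i,j}=K_i(2j)$, really does implement the restriction of the MacWilliams transform to even-weight enumerators, including the correct normalization $2^{k-n}$ (which is absorbed into the rescaling by $\alpha$, so it does not appear as a separate constraint), and that the bottom row of $K$ coincides with the coefficient vector $(3^{n-w}\binom{n}{w})$ after the substitution $y_w = x_w\binom{n}{w}$. This is the step where the identification "equality constraint $=$ the $B_n$ component of the MacWilliams transform" has to be made precise, and it is where the assumption $A_{2i+1}=0$ is genuinely used — without it the odd enumerators would contribute to $B_n$ and the rescaling argument would not close. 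A secondary, more expository point is to flag that the bound is only meaningful when $L(n)>0$, which was already established earlier in the excerpt since the LP is bounded below by $0$ and, for the cases where graph-state or GF(4)-code maximizers exist, $L(n) = 1 - \Cmax{n} > 0$.
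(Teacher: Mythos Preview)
Your proposal is correct and follows essentially the same approach as the paper: rescale the enumerators by $\alpha=2^{n-k}B_n$ to obtain a feasible point of the LP, then use $L(n)\leq 3^n y_0 = 3^n A_0/\alpha = 3^n/\alpha$ and rearrange. The paper's proof takes a slightly more circuitous route---first substituting out $3^n y_0$ via the equality constraint and then simplifying back to $3^n$ using MacWilliams---but the underlying argument is identical, and your direct use of $A_0=1$ is in fact cleaner.
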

\begin{proof}
    The objective in Eq.~\eqref{eq:SM_LP} is constant for an enumerator, as $A_0$ is always equal to one. Thus, we need to replace it via the equality constraint to obtain $3^ny_0 = 1- \sum_{j > 0}K_n(2j)y_j$. Scaling by $\alpha = 2^{n-k}B_n$ this yields
    \begin{align}
        \alpha - \sum_{j>0}K_n(2j)A_{2j} \geq \alpha L(n) = 2^{n-k}B_nL(n).
    \end{align}
    The left hand side can be rewritten as $ 2^{n-k}B_n + 3^n - \sum_{j=0}^{\lfloor \frac{n}{2} \rfloor}K_n(2j)A_{2j} = 3^n$. Rearranging yields the desired inequality.
\end{proof}

\begin{proposition}
    A stabilizer code such $A_{2i+1}=0$ must satisfy the inequality
    \begin{align}
        \sum_{i=0}^n i\frac{1}{3^i}A_i \leq \frac{n}{4}\frac{2^{n-k}}{3^n}B_n.
    \end{align}
\end{proposition}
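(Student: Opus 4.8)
The plan is to read the left-hand side $\sum_{i}i\,3^{-i}A_i$ directly off of the quantum MacWilliams identity $B_j=2^{k-n}\sum_i K_j(i)A_i$ — the same identity that supplies the constraint $Ky\ge 0$ in Eq.~\eqref{eq:SM_LP} and that was used for the preceding proposition — by taking the right combination of its two top coefficients $B_n$ and $B_{n-1}$. Concretely, I expect $nB_n-3B_{n-1}$ to equal $\sum_i i\,3^{-i}A_i$ up to an explicit positive factor, after which the inequality is immediate from $B_{n-1}\ge 0$ (this is one of the LP constraints; equivalently it is the count of weight-$(n-1)$ elements of the normaliser, hence a nonnegative integer).

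First I would record the two relevant quaternary Krawtchouk values. From $K_w(l)=\sum_{k}\binom lk\binom{n-l}{w-k}(-1)^k 3^{w-k}$, setting $w=n$ forces $\binom{n-l}{n-k}\ne 0$ and $\binom lk\ne 0$, hence $k=l$, so $K_n(l)=(-1)^l 3^{\,n-l}$; setting $w=n-1$ leaves only $k=l$ and $k=l-1$, which gives $K_{n-1}(l)=(-1)^l 3^{\,n-1-l}\,(n-4l)$. Therefore
\begin{align}
nK_n(l)-3K_{n-1}(l)=(-1)^l 3^{\,n-l}\bigl(n-(n-4l)\bigr)=4l\,(-1)^l 3^{\,n-l}.\notag
\end{align}

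With this the computation is short: using $\sum_l K_j(l)A_l=2^{\,n-k}B_j$ for $j=n$ and $j=n-1$,
\begin{align}
nB_n-3B_{n-1}&=2^{k-n}\sum_l\bigl(nK_n(l)-3K_{n-1}(l)\bigr)A_l=2^{k-n}\,4\cdot 3^{\,n}\sum_l l\,(-1)^l 3^{-l}A_l\notag\\
&=2^{k-n}\,4\cdot 3^{\,n}\sum_l l\,3^{-l}A_l,\notag
\end{align}
where the last step drops the sign $(-1)^l$ because the hypothesis $A_{2i+1}=0$ restricts the sum to even $l$. Rearranging gives the exact identity
\begin{align}
\sum_{i}i\,3^{-i}A_i=\frac{2^{\,n-k}}{4\cdot 3^{\,n}}\,\bigl(nB_n-3B_{n-1}\bigr),\notag
\end{align}
and since $B_{n-1}\ge 0$ this is at most $\tfrac{2^{\,n-k}}{4\cdot 3^{\,n}}\,nB_n=\tfrac n4\tfrac{2^{\,n-k}}{3^{\,n}}B_n$, which is the proposition; equality holds precisely when $B_{n-1}=0$.

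The only delicate point is the Krawtchouk bookkeeping in the second paragraph — correctly pinning down which binomial terms survive at $w=n$ and $w=n-1$ and tracking the power of $3$; after that it is pure algebra. An essentially equivalent route would be to use the generating function $\sum_w K_w(l)z^w=(1-z)^l(1+3z)^{\,n-l}$ and extract $K_n(l),K_{n-1}(l)$ as its two leading coefficients; or, at the level of the LP, to observe that $\sum_i i\,3^{-i}A_i$ is a rescaling of the expected number of Bell pairs $\sum_w w\,y_w 3^{\,n-w}$, whose value at the code's feasible point is $\tfrac n4-\tfrac{3B_{n-1}}{4B_n}$, so that the bound is simply the statement $\mathcal{B}\le n/4$ (i.e.\ that the single-qubit purities sum to at least $n/2$).
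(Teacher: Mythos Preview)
Your proof is correct. The Krawtchouk evaluations $K_n(l)=(-1)^l3^{n-l}$ and $K_{n-1}(l)=(-1)^l3^{\,n-1-l}(n-4l)$ are right, the combination $nK_n-3K_{n-1}$ cleanly produces $4l(-1)^l3^{n-l}$, and the hypothesis $A_{2i+1}=0$ kills the sign. The resulting exact identity $\sum_i i\,3^{-i}A_i=\tfrac{2^{n-k}}{4\cdot 3^n}(nB_n-3B_{n-1})$ together with $B_{n-1}\ge 0$ is a complete proof, and you correctly identify the equality case.

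The paper proves this via LP duality: it sets up the ``Bell-pair'' LP (maximize $\sum_w w\,3^{n-w}y_w$ over the same feasible region as Eq.~\eqref{eq:SM_LP}), exhibits the dual-feasible point $\nu=\tfrac n4 3^n$, $\lambda=\tfrac14 e_{n-1}$, and then rescales by $\alpha=2^{n-k}B_n$. Your argument and the paper's are the same mechanism in different clothing: the dual multiplier $\lambda\propto e_{n-1}$ is precisely the choice that isolates the PPT/MacWilliams constraint $(Ky)_{n-1}\ge 0$, which after rescaling is exactly $B_{n-1}\ge 0$; and the identity $\nu d=b+K^T\lambda$ that the paper checks is (row by row) your identity $nK_n(2i)-3K_{n-1}(2i)=8i\cdot 3^{\,n-2i}$. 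What you gain is directness and the sharp equality condition; what the paper's framing gains is the statement that $n/4$ is the exact LP optimum (not merely a dual bound), though that extra information is not needed for the inequality on codes. Your closing remark connecting $\sum_i i\,3^{-i}A_i$ to the Bell-pair objective $\mathcal B\le n/4$ makes this equivalence explicit.
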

\begin{proof}
    In a similar fashion to the LP approach for CE, we consider a relaxation to PPT states and use symmetry (unitary and symmetry group twirls) to arrive at 
    \begin{align}
        \max & \quad \sum_{w\% 2 = 0} w*3^{n-w}*y_w\\
        & \quad y\geq 0,\notag\\
        \text{subject to} & \quad Ky\geq 0,\notag\\
        & \quad \sum_{w\% 2 = 0}3^{n-w}y_w = 1.\notag
    \end{align}
    
    As this is a relaxation of $\max_{\ket{\psi}} \mathcal{B}(\ket{\psi})$, the value $\frac{n}{4}$ is achievable since 1-uniform states exist for any system size $n \geq 2$. We now show that $\frac{n}{4}$ is dual feasible and thus is the value of this linear program. The dual program is
    \begin{align}
    \text{minimize} & \quad \frac{1}{3^n}\nu\\
    \text{subject to}& \quad \nu d \geq b + K^T\lambda\,,\notag\\
    &\quad \lambda \geq 0\notag,
\end{align}
where $d_i = 3^{-2i}$ and $b_i = 2i*3^{n-2i}$. Note that $b,d\in\mathbb{R}^{\lceil (n+1)/2\rceil}$, $K \in \mathbb{R}^{(n+1) \times \lceil (n+1)/2\rceil}$, and $\lambda\in\mathbb{R}^{n+1}$. We claim that $\nu = \frac{n}{4}3^{n}$ and $\lambda = \frac{1}{4}e_{n-1}$ lies in the feasible region (here $e_j$ is the standard basis vector with entries $(e_j)_i = \delta_{i,j}$). For these choices, we see that
\begin{align}
    (K^T\lambda)_i = \frac{1}{4}K_{n-1}(2i;n),\\
    (b+K^T\lambda)_i = 2i*3^{n-2i}+\frac{1}{4}K_{n-1}(2i;n),\\
    (\nu d)_i = \frac{n}{4}3^{n-2i}.
\end{align}
By expanding the Krawtchouk polynomials one can verify that
\begin{align}
    K_{n-1}(2i;n) = 3^{n-2i}(n-8i).
\end{align}
Thus, $\nu d = b+K^T\lambda$ and this point lies in the feasible region. Evaluating, this shows that $\frac{n}{4}$ is dual feasible.

As above, we now introduce a scaling parameter $\alpha := 2^{n-k}B_n$ so that we can take $\{y\}$ to be the enumerators of a code. As we are assuming that $A_{2i+1}=0$, this yields the bound $\sum_ii*3^{n-i}A_i \leq \frac{n}{4}2^{n-k}B_n$. Moving the factor of $3^n$ to the right hand side yields the desired inequality.
\end{proof}

In the propositions above we assumed that only the even weight enumerators $\{A_{2j}\}$ could be non-zero. However, our linear programs yield more general bounds. For convenience we define $C = \sum_{w\% 2 =0}K_n(w)A_w$. This is (up to normalization) the equality constraint arising in our linear programs. As before, we can scale the objective  to obtain a bound on stabilizer codes.

First we consider the LP for bounding $\Cmax{n}$:

\begin{proposition*}
    Any stabilizer code must satisfy the inequality $C \leq \frac{3^n}{L(n)}$, where $C = \sum_{w \% 2=0} K_n(w)A_w$.
\end{proposition*}

Similarly we can write the bound given from the Bell pair LP as:
\begin{proposition*}
    Any stabilizer code must satisfy the inequality $\sum_{w \% 2 = 0}w\frac{1}{3^w}A_{w} \leq \frac{n}{4}\frac{C}{3^n}$, where $C = \sum_{w \% 2=0} K_n(w)A_w$.
\end{proposition*}

\newpage
\section{Results from Linear Programming and Hierarchies}\label{data}

\subsection{Results from Linear Programming}
Below we list results from solving the LP given in Eq.~\eqref{eq:SM_LP} for up to 31 qubits. Results are reported for up to 15 decimal places. As we only know of states that maximize CE for $n\leq 12$, these are generally upper bounds for $\Cmax{n}$.

\begin{center}
\begin{tabular}{c|c|c|c|c|c}
     Number of Qubits & $\Cmax{n}$ & Number of Qubits & $\Cmax{n}$ & Number of Qubits \\
     \hline
     2 & 0.25 & 12 & 0.94287109375 & 22 & 0.99652671813964844964\\
    3 & 0.375 & 13 & 0.956512451171875 & 23 & 0.99739503860473658096\\
    4 & 0.5 & 14 & 0.96685791015625 & 24 & 0.99804663658142089974\\
    5 & 0.625 & 15 & 0.974822998046875 & 25 & 0.99852997395727370144\\
    6 & 0.71875 & 16 & 0.98095703125 & 26 & 0.99889324605464935585\\
    7 & 0.779296875 & 17 & 0.9857177734375 & 27 & 0.9991672709584236147\\
    8 & 0.828125 & 18 & 0.989288330078125 & 28 & 0.9993754532188177101\\
    9 & 0.8671875 & 19 & 0.9919147491455078333 & 29 & 0.99953059107065201\\
    10 & 0.8984375 & 20 & 0.99388885498046875694 & 30 & 0.99965104753916453405\\
    11 & 0.923828125 & 21 & 0.99538803100585938107 & 31 & 0.9997371863573789609
\end{tabular}
\end{center}

\subsection{Hierarchies on 3-12 Qubits}\label{hiers}
Here we give hierarchies for up to 12 qubits based on CE. The values come from iteratively applying Prop.~\ref{prop:SM_bisep} and using the upper bounds on $\Cmax{n}$ found via linear programming. Below, $a\otimes b\otimes c\otimes \ldots$ indicates the state can be written as a tensor product of a pure state on $a$ qubits with a state on $b$ of qubits, with a state on $c$ qubits, etc. Note that every value is tight except for those with a partition of size 7.

\begin{center}
    \begin{tabular}{c|c}
    Structure (in qubits) & Max CE\\
    \hline
     3   &  0.375\\
     $2\otimes 1$   & 0.25\\
     $1\otimes 1\otimes 1$ & 0
    \end{tabular}
\end{center}
\begin{center}
    \begin{tabular}{c|c}
        Structure & Max CE\\
        \hline
        4 & 0.5\\
        $3\otimes 1$ & 0.375\\
        $2\otimes 2$ & 0.4375\\
        $2\otimes 1 \otimes 1$ & 0.25 \\
        $1\otimes 1 \otimes 1\otimes 1$ & 0
    \end{tabular}
\end{center}
\begin{center}
    \begin{tabular}{c|c}
        Structure & Max CE\\
        \hline
        5 & 0.6245\\
        $3\otimes 2$ & 0.53125\\
        $4\otimes 1$ & 0.5\\
        $2\otimes 2 \otimes 1$ & 0.4375\\
        $3\otimes 1 \otimes 1$ & 0.375 \\
        $2\otimes 1\otimes 1\otimes 1\otimes 1$ & 0.25\\
        $1\otimes 1\otimes 1\otimes 1\otimes 1$ & 0
    \end{tabular}
\end{center}
\begin{center}
    \begin{tabular}{c|c}
        Structure & Max CE\\
        \hline
        6 & 0.71875\\
        $5\otimes 1$ & 0.625\\
        $4\otimes 2$ & 0.625\\
        $3\otimes 3$ & 0.609375\\
        $2\otimes 2\otimes 2$ & 0.578125\\
        $3\otimes 2 \otimes 1$ & 0.53125\\
        $4\otimes 1\otimes 1$& 0.5\\
        $2\otimes 2\otimes 1\otimes 1$ & 0.4375\\
        $3\otimes 1\otimes 1\otimes 1\otimes 1$ & 0.375\\
        $2\otimes 1\otimes 1\otimes 1\otimes 1\otimes 1$ & 0.25\\
        $1\otimes 1\otimes 1\otimes 1\otimes 1\otimes 1\otimes 1$ & 0\\
    \end{tabular}
\end{center}
\begin{center}
    \begin{tabular}{c|c}
        Structure & Max CE\\
        \hline
        7 & 0.779296875\\
        $6\otimes 1$ & 0.71825\\
        $5\otimes 2$ & 0.71825\\
        $4\otimes 3$ & 0.7\\
        $3\otimes 2\otimes 2\otimes 1$ & 0.6484375\\
        $5\otimes 1\otimes 1$& 0.625\\
        $4\otimes 2 \otimes 1$ & 0.625\\
        $3\otimes 3\otimes 1$ & 0.609375\\
        $2\otimes 2\otimes 2\otimes 1$ & 0.578125\\
        $3\otimes 2\otimes 1\otimes 1\otimes 1\otimes 1$ & 0.53125\\
        $4\otimes 1\otimes 1\otimes 1\otimes 1$ & 0.5\\
        $2\otimes 2\otimes 1\otimes 1\otimes 1$ & 0.4375\\
        $3\otimes 1\otimes 1\otimes 1\otimes 1$ & 0.375\\
        $2\otimes 1\otimes 1\otimes 1\otimes 1\otimes 1$ & 0.25\\
        $1\otimes 1\otimes 1\otimes 1\otimes 1\otimes 1\otimes 1$ & 0
        \end{tabular}
\end{center}
\begin{center}
    \begin{tabular}{c|c}
    Structure & Max CE\\
    \hline
    $8$ & 0.828125\\
$6\otimes 2$ & 0.7890625\\
$7\otimes 1$ & 0.779296875\\
$5\otimes 3$ & 0.765625\\
$4\otimes 4$ & 0.75\\
$6\otimes 1\otimes 1$ & 0.71875\\
$5\otimes 2\otimes 1$ & 0.71875\\
$4\otimes 2\otimes 2$ & 0.71875\\
$3\otimes 3\otimes 2$ & 0.70703125\\
$4\otimes 3\otimes 1$ & 0.6875\\
$2\otimes 2\otimes 2\otimes 2$ & 0.68359375\\
$3\otimes 2\otimes 2\otimes 1$ & 0.6484375\\
$5\otimes 1\otimes 1\otimes 1$ & 0.625\\
$4\otimes 2\otimes 1\otimes 1$ & 0.625\\
$3\otimes 3\otimes 1\otimes 1$ & 0.609375\\
$2\otimes 2\otimes 2\otimes 1\otimes 1$ & 0.578125\\
$3\otimes 2\otimes 1\otimes 1\otimes 1$ & 0.53125\\
$4\otimes 1\otimes 1\otimes 1\otimes 1$ & 0.5\\
$2\otimes 2\otimes 1\otimes 1\otimes 1\otimes 1$ & 0.4375\\
$3\otimes 1\otimes 1\otimes 1\otimes 1\otimes 1$ & 0.375\\
$2\otimes 1\otimes 1\otimes 1\otimes 1\otimes 1\otimes 1$ & 0.25\\
$1\otimes 1\otimes 1\otimes 1\otimes 1\otimes 1\otimes 1\otimes 1$ & 0
    \end{tabular}
\end{center}

\begin{center}
    \begin{tabular}{c|c}
    Structure & Max CE\\
    \hline
    $9$ & 0.8671875\\
$7\otimes 2$ & 0.83447265625\\
$8\otimes 1$ & 0.828125\\
$6\otimes 3$ & 0.82421875\\
$5\otimes 4$ & 0.8125\\
$6\otimes 2\otimes 1$ & 0.7890625\\
$5\otimes 2\otimes 2$ & 0.7890625\\
$7\otimes 1\otimes 1$ & 0.779296875\\
$5\otimes 3\otimes 1$ & 0.765625\\
$4\otimes 3\otimes 2$ & 0.765625\\
$3\otimes 3\otimes 3$ & 0.755859375\\
$4\otimes 4\otimes 1$ & 0.75\\
$3\otimes 2\otimes 2\otimes 2$ & 0.736328125\\
$6\otimes 1\otimes 1\otimes 1$ & 0.71875\\
$5\otimes 2\otimes 1\otimes 1$ & 0.71875\\
$4\otimes 2\otimes 2\otimes 1$ & 0.71875\\
$3\otimes 3\otimes 2\otimes 1$ & 0.70703125\\
$4\otimes 3\otimes 1\otimes 1$ & 0.6875\\
$2\otimes 2\otimes 2\otimes 2\otimes 1$ & 0.68359375\\
$3\otimes 2\otimes 2\otimes 1\otimes 1$ & 0.6484375\\
$5\otimes 1\otimes 1\otimes 1\otimes 1$ & 0.625\\
$4\otimes 2\otimes 1\otimes 1\otimes 1$ & 0.625\\
$3\otimes 3\otimes 1\otimes 1\otimes 1$ & 0.609375\\
$2\otimes 2\otimes 2\otimes 1\otimes 1\otimes 1$ & 0.578125\\
$3\otimes 2\otimes 1\otimes 1\otimes 1\otimes 1$ & 0.53125\\
$4\otimes 1\otimes 1\otimes 1\otimes 1\otimes 1$ & 0.5\\
$2\otimes 2\otimes 1\otimes 1\otimes 1\otimes 1\otimes 1$ & 0.4375\\
$3\otimes 1\otimes 1\otimes 1\otimes 1\otimes 1\otimes 1$ & 0.375\\
$2\otimes 1\otimes 1\otimes 1\otimes 1\otimes 1\otimes 1\otimes 1$ & 0.25\\
$1\otimes 1\otimes 1\otimes 1\otimes 1\otimes 1\otimes 1\otimes 1\otimes 1$ & 0
    \end{tabular}
\end{center}

\begin{center}
   \begin{tabular}{c|c}
    Structure & Max CE\\
    \hline
$10$ & 0.8984375\\
$8\otimes 2$ & 0.87109375\\
$9\otimes 1$ & 0.8671875\\
$7\otimes 3$ & 0.862060546875\\
$6\otimes 4$ & 0.859375\\
$5\otimes 5$ & 0.859375\\
$6\otimes 2\otimes 2$ & 0.841796875\\
$7\otimes 2\otimes 1$ & 0.83447265625\\
$8\otimes 1\otimes 1$ & 0.828125\\
$6\otimes 3\otimes 1$ & 0.82421875\\
$5\otimes 3\otimes 2$ & 0.82421875\\
$5\otimes 4\otimes 1$ & 0.8125\\
$4\otimes 4\otimes 2$ & 0.8125\\
$4\otimes 3\otimes 3$ & 0.8046875\\
$6\otimes 2\otimes 1\otimes 1$ & 0.7890625\\
$5\otimes 2\otimes 2\otimes 1$ & 0.7890625\\
$4\otimes 2\otimes 2\otimes 2$ & 0.7890625\\
$3\otimes 3\otimes 2\otimes 2$ & 0.7802734375\\
$7\otimes 1\otimes 1\otimes 1$ & 0.779296875\\
$5\otimes 3\otimes 1\otimes 1$ & 0.765625\\
$4\otimes 3\otimes 2\otimes 1$ & 0.765625\\
$2\otimes 2\otimes 2\otimes 2\otimes 2$ & 0.7626953125\\
$3\otimes 3\otimes 3\otimes 1$ & 0.755859375\\
$4\otimes 4\otimes 1\otimes 1$ & 0.75\\
$3\otimes 2\otimes 2\otimes 2\otimes 1$ & 0.736328125\\
$6\otimes 1\otimes 1\otimes 1\otimes 1$ & 0.71875\\
$5\otimes 2\otimes 1\otimes 1\otimes 1$ & 0.71875\\
$4\otimes 2\otimes 2\otimes 1\otimes 1$ & 0.71875\\
$3\otimes 3\otimes 2\otimes 1\otimes 1$ & 0.70703125\\
$4\otimes 3\otimes 1\otimes 1\otimes 1$ & 0.6875\\
$2\otimes 2\otimes 2\otimes 2\otimes 1\otimes 1$ & 0.68359375\\
$3\otimes 2\otimes 2\otimes 1\otimes 1\otimes 1$ & 0.6484375\\
$5\otimes 1\otimes 1\otimes 1\otimes 1\otimes 1$ & 0.625\\
$4\otimes 2\otimes 1\otimes 1\otimes 1\otimes 1$ & 0.625\\
$3\otimes 3\otimes 1\otimes 1\otimes 1\otimes 1$ & 0.609375\\
$2\otimes 2\otimes 2\otimes 1\otimes 1\otimes 1\otimes 1$ & 0.578125\\
$3\otimes 2\otimes 1\otimes 1\otimes 1\otimes 1\otimes 1$ & 0.53125\\
$4\otimes 1\otimes 1\otimes 1\otimes 1\otimes 1\otimes 1$ & 0.5\\
$2\otimes 2\otimes 1\otimes 1\otimes 1\otimes 1\otimes 1\otimes 1$ & 0.4375\\
$3\otimes 1\otimes 1\otimes 1\otimes 1\otimes 1\otimes 1\otimes 1$ & 0.375\\
$2\otimes 1\otimes 1\otimes 1\otimes 1\otimes 1\otimes 1\otimes 1\otimes 1$ & 0.25\\
$1\otimes 1\otimes 1\otimes 1\otimes 1\otimes 1\otimes 1\otimes 1\otimes 1\otimes 1$ & 0
    \end{tabular}
\end{center}

\begin{small}
\begin{center}
    \begin{tabular}{c|c}
    Structure & Max CE \\
    \hline
     $11$ & 0.92382812\\
$9\otimes 2$ & 0.900390625\\
$10\otimes 1$ & 0.8984375\\
$6\otimes 5$ & 0.89453125\\
$8\otimes 3$ & 0.892578125\\
$7\otimes 4$ & 0.8896484375\\
$7\otimes 2\otimes 2$ & 0.8758544921875\\
$8\otimes 2\otimes 1$ & 0.87109375\\
$6\otimes 3\otimes 2$ & 0.8681640625\\
$9\otimes 1\otimes 1$ & 0.8671875\\
$7\otimes 3\otimes 1$ & 0.862060546875\\
$6\otimes 4\otimes 1$ & 0.859375\\
$5\otimes 5\otimes 1$ & 0.859375\\
$5\otimes 4\otimes 2$ & 0.859375\\
$5\otimes 3\otimes 3$ & 0.853515625\\
$4\otimes 4\otimes 3$ & 0.84375\\
$6\otimes 2\otimes 2\otimes 1$ & 0.841796875\\
$5\otimes 2\otimes 2\otimes 2$ & 0.841796875\\
$7\otimes 2\otimes 1\otimes 1$ & 0.83447265625\\
$8\otimes 1\otimes 1\otimes 1$ & 0.828125\\
$6\otimes 3\otimes 1\otimes 1$ & 0.82421875\\
$5\otimes 3\otimes 2\otimes 1$ & 0.82421875\\
$4\otimes 3\otimes 2\otimes 2$ & 0.82421875\\
$3\otimes 3\otimes 3\otimes 2$ & 0.81689453125\\
$5\otimes 4\otimes 1\otimes 1$ & 0.8125\\
$4\otimes 4\otimes 2\otimes 1$ & 0.8125\\
$4\otimes 3\otimes 3\otimes 1$ & 0.8046875\\
$3\otimes 2\otimes 2\otimes 2\otimes 2$ & 0.80224609375\\
$6\otimes 2\otimes 1\otimes 1\otimes 1$ & 0.7890625\\
$5\otimes 2\otimes 2\otimes 1\otimes 1$ & 0.7890625\\
$4\otimes 2\otimes 2\otimes 2\otimes 1$ & 0.7890625\\
$3\otimes 3\otimes 2\otimes 2\otimes 1$ & 0.7802734375\\
$7\otimes 1\otimes 1\otimes 1\otimes 1$ & 0.779296875\\
$5\otimes 3\otimes 1\otimes 1\otimes 1$ & 0.765625\\
$4\otimes 3\otimes 2\otimes 1\otimes 1$ & 0.765625\\
$2\otimes 2\otimes 2\otimes 2\otimes 2\otimes 1$ & 0.7626953125\\
$3\otimes 3\otimes 3\otimes 1\otimes 1$ & 0.755859375\\
$4\otimes 4\otimes 1\otimes 1\otimes 1$ & 0.75\\
$3\otimes 2\otimes 2\otimes 2\otimes 1\otimes 1$ & 0.736328125\\
$6\otimes 1\otimes 1\otimes 1\otimes 1\otimes 1$ & 0.71875\\
$5\otimes 2\otimes 1\otimes 1\otimes 1\otimes 1$ & 0.71875\\
$4\otimes 2\otimes 2\otimes 1\otimes 1\otimes 1$ & 0.71875\\
$3\otimes 3\otimes 2\otimes 1\otimes 1\otimes 1$ & 0.70703125\\
$4\otimes 3\otimes 1\otimes 1\otimes 1\otimes 1$ & 0.6875\\
$2\otimes 2\otimes 2\otimes 2\otimes 1\otimes 1\otimes 1$ & 0.68359375\\
$3\otimes 2\otimes 2\otimes 1\otimes 1\otimes 1\otimes 1$ & 0.6484375\\
$5\otimes 1\otimes 1\otimes 1\otimes 1\otimes 1\otimes 1$ & 0.625\\
$4\otimes 2\otimes 1\otimes 1\otimes 1\otimes 1\otimes 1$ & 0.625\\
$3\otimes 3\otimes 1\otimes 1\otimes 1\otimes 1\otimes 1$ & 0.609375\\
$2\otimes 2\otimes 2\otimes 1\otimes 1\otimes 1\otimes 1\otimes 1$ & 0.578125\\
$3\otimes 2\otimes 1\otimes 1\otimes 1\otimes 1\otimes 1\otimes 1$ & 0.53125\\
$4\otimes 1\otimes 1\otimes 1\otimes 1\otimes 1\otimes 1\otimes 1$ & 0.5\\
$2\otimes 2\otimes 1\otimes 1\otimes 1\otimes 1\otimes 1\otimes 1\otimes 1$ & 0.4375\\
$3\otimes 1\otimes 1\otimes 1\otimes 1\otimes 1\otimes 1\otimes 1\otimes 1$ & 0.375\\
$2\otimes 1\otimes 1\otimes 1\otimes 1\otimes 1\otimes 1\otimes 1\otimes 1\otimes 1$ & 0.25\\
$1\otimes 1\otimes 1\otimes 1\otimes 1\otimes 1\otimes 1\otimes 1\otimes 1\otimes 1\otimes 1$ & 0
    \end{tabular}
\end{center}
\end{small}
\newpage

\begin{tiny}
\begin{center}
    \begin{tabular}{c|c}
        Structure & Max CE  \\
         \hline\\
         $12$ & 0.94287109375\\
$1 1\otimes 1$ & 0.923828125\\
$10\otimes 2$ & 0.923828125\\
$6\otimes 6$ & 0.9208984375\\
$7\otimes 5$ & 0.917236328125\\
$9\otimes 3$ & 0.9169921875\\
$8\otimes 4$ & 0.9140625\\
$8\otimes 2\otimes 2$ & 0.9033203125\\
$9\otimes 2\otimes 1$ & 0.900390625\\
$10\otimes 1\otimes 1$ & 0.8984375\\
$7\otimes 3\otimes 2$ & 0.89654541015625\\
$6\otimes 5\otimes 1$ & 0.89453125\\
$6\otimes 4\otimes 2$ & 0.89453125\\
$5\otimes 5\otimes 2$ & 0.89453125\\
$8\otimes 3\otimes 1$ & 0.892578125\\
$6\otimes 3\otimes 3$ & 0.89013671875\\
$7\otimes 4\otimes 1$ & 0.8896484375\\
$5\otimes 4\otimes 3$ & 0.8828125\\
$6\otimes 2\otimes 2\otimes 2$ & 0.88134765625\\
$7\otimes 2\otimes 2\otimes 1$ & 0.8758544921875\\
$4\otimes 4\otimes 4$ & 0.875\\
$8\otimes 2\otimes 1\otimes 1$ & 0.87109375\\
$6\otimes 3\otimes 2\otimes 1$ & 0.8681640625\\
$5\otimes 3\otimes 2\otimes 2$ & 0.8681640625\\
$9\otimes 1\otimes 1\otimes 1$ & 0.8671875\\
$7\otimes 3\otimes 1\otimes 1$ & 0.862060546875\\
$6\otimes 4\otimes 1\otimes 1$ & 0.859375\\
$5\otimes 5\otimes 1\otimes 1$ & 0.859375\\
$5\otimes 4\otimes 2\otimes 1$ & 0.859375\\
$4\otimes 4\otimes 2\otimes 2$ & 0.859375\\
$5\otimes 3\otimes 3\otimes 1$ & 0.853515625\\
$4\otimes 3\otimes 3\otimes 2$ & 0.853515625\\
$3\otimes 3\otimes 3\otimes 3$ & 0.847412109375\\
$4\otimes 4\otimes 3\otimes 1$ & 0.84375\\
$6\otimes 2\otimes 2\otimes 1\otimes 1$ & 0.841796875\\
$5\otimes 2\otimes 2\otimes 2\otimes 1$ & 0.841796875\\
$4\otimes 2\otimes 2\otimes 2\otimes 2$ & 0.841796875\\
$3\otimes 3\otimes 2\otimes 2\otimes 2$ & 0.835205078125\\
$7\otimes 2\otimes 1\otimes 1\otimes 1$ & 0.83447265625\\
$8\otimes 1\otimes 1\otimes 1\otimes 1$ & 0.828125\\
$6\otimes 3\otimes 1\otimes 1\otimes 1$ & 0.82421875\\
$5\otimes 3\otimes 2\otimes 1\otimes 1$ & 0.82421875\\
$4\otimes 3\otimes 2\otimes 2\otimes 1$ & 0.82421875\\
$2\otimes 2\otimes 2\otimes 2\otimes 2\otimes 2$ & 0.822021484375\\
$3\otimes 3\otimes 3\otimes 2\otimes 1$ & 0.81689453125\\
$5\otimes 4\otimes 1\otimes 1\otimes 1$ & 0.8125\\
$4\otimes 4\otimes 2\otimes 1\otimes 1$ & 0.8125\\
$4\otimes 3\otimes 3\otimes 1\otimes 1$ & 0.8046875\\
$3\otimes 2\otimes 2\otimes 2\otimes 2\otimes 1$ & 0.80224609375\\
$6\otimes 2\otimes 1\otimes 1\otimes 1\otimes 1$ & 0.7890625\\
$5\otimes 2\otimes 2\otimes 1\otimes 1\otimes 1$ & 0.7890625\\
$4\otimes 2\otimes 2\otimes 2\otimes 1\otimes 1$ & 0.7890625\\
$3\otimes 3\otimes 2\otimes 2\otimes 1\otimes 1$ & 0.7802734375\\
$7\otimes 1\otimes 1\otimes 1\otimes 1\otimes 1$ & 0.779296875\\
$5\otimes 3\otimes 1\otimes 1\otimes 1\otimes 1$ & 0.765625\\
$4\otimes 3\otimes 2\otimes 1\otimes 1\otimes 1$ & 0.765625\\
$2\otimes 2\otimes 2\otimes 2\otimes 2\otimes 1\otimes 1$ & 0.7626953125\\
$3\otimes 3\otimes 3\otimes 1\otimes 1\otimes 1$ & 0.755859375\\
$4\otimes 4\otimes 1\otimes 1\otimes 1\otimes 1$ & 0.75\\
$3\otimes 2\otimes 2\otimes 2\otimes 1\otimes 1\otimes 1$ & 0.736328125\\
$6\otimes 1\otimes 1\otimes 1\otimes 1\otimes 1\otimes 1$ & 0.71875\\
$5\otimes 2\otimes 1\otimes 1\otimes 1\otimes 1\otimes 1$ & 0.71875\\
$4\otimes 2\otimes 2\otimes 1\otimes 1\otimes 1\otimes 1$ & 0.71875\\
$3\otimes 3\otimes 2\otimes 1\otimes 1\otimes 1\otimes 1$ & 0.70703125\\
$4\otimes 3\otimes 1\otimes 1\otimes 1\otimes 1\otimes 1$ & 0.6875\\
$2\otimes 2\otimes 2\otimes 2\otimes 1\otimes 1\otimes 1\otimes 1$ & 0.68359375\\
$3\otimes 2\otimes 2\otimes 1\otimes 1\otimes 1\otimes 1\otimes 1$ & 0.6484375\\
$5\otimes 1\otimes 1\otimes 1\otimes 1\otimes 1\otimes 1\otimes 1$ & 0.625\\
$4\otimes 2\otimes 1\otimes 1\otimes 1\otimes 1\otimes 1\otimes 1$ & 0.625\\
$3\otimes 3\otimes 1\otimes 1\otimes 1\otimes 1\otimes 1\otimes 1$ & 0.609375\\
$2\otimes 2\otimes 2\otimes 1\otimes 1\otimes 1\otimes 1\otimes 1\otimes 1$ & 0.578125\\
$3\otimes 2\otimes 1\otimes 1\otimes 1\otimes 1\otimes 1\otimes 1\otimes 1$ & 0.53125\\
$4\otimes 1\otimes 1\otimes 1\otimes 1\otimes 1\otimes 1\otimes 1\otimes 1$ & 0.5\\
$2\otimes 2\otimes 1\otimes 1\otimes 1\otimes 1\otimes 1\otimes 1\otimes 1\otimes 1$ & 0.4375\\
$3\otimes 1\otimes 1\otimes 1\otimes 1\otimes 1\otimes 1\otimes 1\otimes 1\otimes 1$ & 0.375\\
$2\otimes 1\otimes 1\otimes 1\otimes 1\otimes 1\otimes 1\otimes 1\otimes 1\otimes 1\otimes 1$ & 0.25\\
$1\otimes 1\otimes 1\otimes 1\otimes 1\otimes 1\otimes 1\otimes 1\otimes 1\otimes 1\otimes 1\otimes 1$ & 0\\
    \end{tabular}
\end{center}
\end{tiny}

\section{Robustness of CE}

\label{Sect:robustness}
Suppose we apply $n$ parallelized SWAP tests on an $n$-qubit mixed state $\rho$.  The probability of getting all zero outcomes on the ancilla measurement is still given by $\Tr[\Pi_+^{\otimes n}\rho^{\otimes 2}]$.  This motivates us to define the CE of an $n$-qubit mixed state as
\begin{align}
    \mc{C}(\rho):=\mc{C}_{\ket{\psi_\rho}}([n]),
\end{align}
where $\ket{\psi_\rho}$ is a purification of $\rho$ (which is no more than $2n$ qubits), and $[n]$ is the first $n$ qubits of $\ket{\psi_\rho}$ whose reduced state is $\rho$.  We stress that this quantity is not a true entanglement monotone, as monotonicity would require us to take a convex roof extension.  However, we adopt this definition so that $\mc{C}(\rho)=1-\Tr[\Pi_+^{\otimes n}\rho^{\otimes 2}]$, and therefore $\mc{C}(\rho)$ is a directly measurable quantity using SWAP tests.

We will now use this construction to extend our concentratable hierarchies to mixed states.  For simplicity we show the idea for biseparability, but its extension to separability for more fine-grained partitions is straightforward.
\begin{proposition}
Every $n$-qubit biseparable density matrix $\rho^{A|B}$ with $|A|=k$ and $|B|=n-k$ satisfies the following inequality
\begin{equation}
    \mc{C}(\rho)\leq \mc{C}^*(k)+\mc{C}^*(n-k)- \mc{C}^*(k)\mc{C}^*(n-k)+2\sqrt{S_L(\rho)},\notag
\end{equation}
where $S_L(\rho)=1-\Tr[\rho^2]$ is the linear entropy of $\rho$.
\end{proposition}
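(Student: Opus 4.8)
The plan is to relate $\mathcal{C}(\rho)$ to the concentratable entanglement of a single \emph{biseparable pure} state close to $\rho$, and then combine a continuity estimate with the product formula of Proposition~\ref{prop:SM_bisep}. Since $\rho^{A|B}$ is separable across the cut $A|B$, it admits a finite decomposition into product pure states, $\rho=\sum_k p_k\,\dya{\alpha_k}\otimes\dya{\beta_k}$, with $\ket{\alpha_k}$ on $A$, $\ket{\beta_k}$ on $B$, $p_k>0$, and $\sum_k p_k=1$ (every separable state is such a mixture; diagonalize the local factors if they are mixed).

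The crucial step is that one of the product states appearing in this decomposition is automatically close to $\rho$. Setting $P_k:=\dya{\alpha_k}\otimes\dya{\beta_k}$ and expanding $\Tr[\rho^2]$,
\begin{align}
    \sum_k p_k\,\langle\alpha_k\beta_k|\rho|\alpha_k\beta_k\rangle=\sum_{k,l}p_kp_l\,\Tr[P_kP_l]=\Tr[\rho^2]=1-S_L(\rho).
\end{align}
As the left-hand side is a convex average, some index $k^*$ must satisfy $\langle\alpha_{k^*}\beta_{k^*}|\rho|\alpha_{k^*}\beta_{k^*}\rangle\geq\Tr[\rho^2]$. Put $\ket{\phi}:=\ket{\alpha_{k^*}}\otimes\ket{\beta_{k^*}}$, a pure state that is a product across $A|B$. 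By the Fuchs--van de Graaf inequality, $D_{tr}(\rho,\dya{\phi})\leq\sqrt{1-\langle\phi|\rho|\phi\rangle}\leq\sqrt{S_L(\rho)}$.

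Next I re-run the continuity argument of the pure-state lemma above, now for one mixed and one pure state. Using $\mathcal{C}(\rho)=1-\Tr[\Pi_+^{\otimes n}\rho^{\otimes 2}]$, the bounds $0\leq\Pi_+^{\otimes n}\leq\id$, and the vanishing of $\Tr[\rho^{\otimes 2}-\dya{\phi}^{\otimes 2}]$,
\begin{align}
    |\mathcal{C}(\rho)-\CE{\phi}|=\bigl|\Tr[\Pi_+^{\otimes n}(\rho^{\otimes 2}-\dya{\phi}^{\otimes 2})]\bigr|\leq\tfrac12\lVert\rho^{\otimes 2}-\dya{\phi}^{\otimes 2}\rVert_1\leq\lVert\rho-\dya{\phi}\rVert_1=2D_{tr}(\rho,\dya{\phi})\leq 2\sqrt{S_L(\rho)},
\end{align}
where the Lipschitz step of the pure-state lemma (Lemma~\ref{thm:ps_dtr}) is replaced by the elementary telescoping bound $\lVert\rho^{\otimes 2}-\dya{\phi}^{\otimes 2}\rVert_1\leq 2\lVert\rho-\dya{\phi}\rVert_1$. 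Finally, $\ket{\phi}$ is biseparable, so Proposition~\ref{prop:SM_bisep} gives $\CE{\phi}=\CE{\alpha_{k^*}}+\CE{\beta_{k^*}}-\CE{\alpha_{k^*}}\CE{\beta_{k^*}}$; since $a+b-ab$ is nondecreasing on $[0,1]^2$ and $\CE{\alpha_{k^*}}\leq\mathcal{C}^*(k)$, $\CE{\beta_{k^*}}\leq\mathcal{C}^*(n-k)$, we obtain $\CE{\phi}\leq\mathcal{C}^*(k)+\mathcal{C}^*(n-k)-\mathcal{C}^*(k)\mathcal{C}^*(n-k)$. Combining this with the continuity bound proves the proposition.

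The main obstacle is the averaging step of the second paragraph — extracting from the separable decomposition a biseparable pure state within trace distance $\sqrt{S_L(\rho)}$ of $\rho$; the remainder is trace-norm bookkeeping together with results already in hand. The constant $2$ can in fact be tightened to $\sqrt{2}$ by using multiplicativity of fidelity, $F(\rho^{\otimes 2},\dya{\phi}^{\otimes 2})=\langle\phi|\rho|\phi\rangle$, in place of the telescoping step, but the weaker constant already suffices for the stated bound.
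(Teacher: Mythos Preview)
Your proof is correct and follows essentially the same approach as the paper: locate a biseparable pure state $\ket{\phi}$ with $\langle\phi|\rho|\phi\rangle\geq\Tr[\rho^2]$, then combine the continuity lemma $|\mc{C}(\rho)-\mc{C}(\sigma)|\leq\Vert\rho-\sigma\Vert_1$ with Fuchs--van~de~Graaf and Proposition~\ref{prop:SM_bisep}. The only cosmetic difference is that the paper reaches the key overlap bound in one line via $\Lambda_{\max}(\rho)=\max_{\sigma\in\mathrm{SEP}(A{:}B)}\Tr[\sigma\rho]\geq\Tr[\rho^2]$ (because $\rho$ itself is separable), whereas you extract the same inequality constructively from a separable decomposition by an averaging argument.
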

\begin{proof}
Suppose that $\rho$ is separable with respect to some bipartition $A|B$.  Let $\ket{\alpha,\beta}$ be an optimal product state such that $\Lambda_{\max}(\rho)=\bra{\alpha,\beta}\rho\ket{\alpha,\beta}$.  Then
\begin{align}
    \mc{C}(\rho)&\leq \mc{C}(\ket{\alpha,\beta})+\Vert\rho-\op{\alpha,\beta}{\alpha,\beta}\Vert_1\notag\\
    &\leq \mc{C}(\ket{\alpha,\beta})+2\sqrt{1-\bra{\alpha,\beta}\rho\ket{\alpha,\beta}}\notag\\
   &=\mc{C}(\ket{\alpha,\beta})+2\sqrt{1-\Lambda_{\max}(\rho)},\notag
\end{align}
where we have used the Lemma  below.  Observe that
\[\Lambda_{\max}(\rho)=\max_{\sigma\in\text{SEP(A:B)}}\Tr[\sigma\rho]\geq \Tr[\rho^2]=1-S_L(\rho).\]
Substituting this to continue the previous inequality yields the desired result.
\end{proof}

\begin{lemma}
\begin{equation}
    |\mc{C}(\rho)-\mc{C}(\sigma)|\leq \Vert \rho-\sigma\Vert_1.
\end{equation}
\end{lemma}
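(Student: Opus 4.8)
The plan is to work directly with the measurable form $\mc{C}(\rho) = 1 - \Tr[\Pi_+^{\otimes n}\rho^{\otimes 2}]$ established just above, so that
\[
    |\mc{C}(\rho) - \mc{C}(\sigma)| = \bigl|\Tr[\Pi_+^{\otimes n}(\rho^{\otimes 2} - \sigma^{\otimes 2})]\bigr|.
\]
First I would bound the right-hand side using $0 \leq \Pi_+^{\otimes n} \leq \id$: for any Hermitian traceless $\Delta$, writing $\Delta = \Delta_+ - \Delta_-$ as the difference of its positive and negative parts gives $\Tr[\Pi_+^{\otimes n}\Delta] \leq \Tr[\Pi_+^{\otimes n}\Delta_+] \leq \Tr[\Delta_+] = \tfrac12\Vert\Delta\Vert_1$ and, symmetrically, $-\Tr[\Pi_+^{\otimes n}\Delta] \leq \tfrac12\Vert\Delta\Vert_1$, hence $\bigl|\Tr[\Pi_+^{\otimes n}\Delta]\bigr| \leq \tfrac12\Vert\Delta\Vert_1$. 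Applying this with $\Delta = \rho^{\otimes 2} - \sigma^{\otimes 2}$, which is traceless since $\Tr\rho = \Tr\sigma = 1$, yields $|\mc{C}(\rho) - \mc{C}(\sigma)| \leq \tfrac12\Vert\rho^{\otimes 2} - \sigma^{\otimes 2}\Vert_1$.

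Next I would control the tensor square via the telescoping identity $\rho^{\otimes 2} - \sigma^{\otimes 2} = (\rho-\sigma)\otimes\rho + \sigma\otimes(\rho-\sigma)$, then apply the triangle inequality for the trace norm together with its multiplicativity under tensor products, $\Vert X\otimes Y\Vert_1 = \Vert X\Vert_1\Vert Y\Vert_1$. Since $\Vert\rho\Vert_1 = \Vert\sigma\Vert_1 = 1$, this gives $\Vert\rho^{\otimes 2} - \sigma^{\otimes 2}\Vert_1 \leq 2\Vert\rho-\sigma\Vert_1$. Combining the two estimates produces $|\mc{C}(\rho) - \mc{C}(\sigma)| \leq \Vert\rho-\sigma\Vert_1$, which is the claim.

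The argument is essentially routine, so there is no real obstacle; the only step warranting a little care is the operator inequality $\bigl|\Tr[\Pi_+^{\otimes n}\Delta]\bigr| \leq \tfrac12\Vert\Delta\Vert_1$, which is exactly the add-and-subtract-the-trace manoeuvre already used in the proof of the geometric-measure lemma. An equivalent alternative would be to observe that $\Delta \mapsto \Tr[\Pi_+^{\otimes n}\Delta]$ is a difference of two POVM-element expectation values and is therefore bounded by the trace distance directly, bypassing the explicit Jordan decomposition; either route closes the proof in a couple of lines.
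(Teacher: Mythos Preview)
Your proof is correct and essentially identical to the paper's: both start from $|\mc{C}(\rho)-\mc{C}(\sigma)|=\bigl|\Tr[\Pi_+^{\otimes n}(\rho^{\otimes 2}-\sigma^{\otimes 2})]\bigr|$, use $0\leq\Pi_+^{\otimes n}\leq\id$ together with tracelessness to extract the factor $\tfrac12\Vert\rho^{\otimes 2}-\sigma^{\otimes 2}\Vert_1$, and then telescope and bound by $2\Vert\rho-\sigma\Vert_1$. The only cosmetic differences are that the paper writes the first step as $\tfrac12(\Vert\Delta\Vert_1+\Tr[\Delta])$ rather than spelling out the Jordan decomposition, and uses the alternative telescoping $\rho\otimes(\rho-\sigma)+(\rho-\sigma)\otimes\sigma$.
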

\begin{proof}
From the definition, we have
\begin{align}
    |\mc{C}(\rho)-\mc{C}(\sigma)|&=\left|\Tr\left[\Pi_+^{\otimes n}(\rho^{\otimes 2}-\sigma^{\otimes 2})\right]\right|\notag\\
    &\leq\frac{1}{2}\left(\Vert \rho^{\otimes 2}-\sigma^{\otimes 2}\Vert_1+\Tr\left[\rho^{\otimes 2}-\sigma^{\otimes 2}\right]\right)\notag\\
    &=\Vert \rho-\sigma\Vert_1,
\end{align}
where the last line uses the facts that $0=\Tr\left[\rho^{\otimes 2}-\sigma^{\otimes 2}\right]$ and
\begin{align}
    \Vert \rho^{\otimes 2}-\sigma^{\otimes 2}\Vert_1&=\Vert \rho\otimes(\rho-\sigma)+(\rho-\sigma)\otimes\sigma\Vert_1\notag\\
    &\leq \Vert\rho-\sigma\Vert_1\left(\Vert\rho\Vert_1+\Vert\sigma\Vert_1\right)\notag\\
    &=2\Vert\rho-\sigma\Vert_1.
\end{align}
\end{proof}

\end{document}